\newcommand{\move}[1]{}
\tikzstyle{stvertex} =
\tikzstyle{stterminal} =
\newcommand{\drop}[1]{}
\newcommand{\set}[1]{\left\{ {#1} \right\}}
\DeclareMathOperator{\poly}{poly}
\newcommand*{\RR}{\mathbb{R}}
\newcommand*{\resolved}[1]{}
\DeclareMathOperator{\CW}{CW} 
\DeclareMathOperator{\ACW}{ACW} 
\newcommand*{\expect}[1]{\mathbb E\left[{#1}\right]}
\newcommand*{\prob}[1]{\Pr\left[{#1}\right]}
\newcommand{\arxiv}[1]{}
\DeclareMathOperator{\support}{supp}
\DeclareMathOperator{\Interior}{Int}
\DeclareMathOperator{\Exterior}{Ext}
\DeclareMathOperator{\InteriorOpen}{int}
\newcommand{\jp}[1]{\textcolor{orange}{JP: {#1}}}
\newenvironment{subproof}[1][\proofname]
{%
	\renewcommand*{\proofname}{Proof}
	\proof[#1]%
	
}
{%
	\endproof%
}
\newtheorem{theorem}{Theorem}[section]
\newtheorem{lemma}[theorem]{Lemma}
\newtheorem{claim}[theorem]{Claim}
\newtheorem{proposition}[theorem]{Proposition}
\newtheorem{corollary}[theorem]{Corollary}
\theoremstyle{definition}
\newtheorem{definition}[theorem]{Definition}
\newtheorem{question}{Question}
\title{Uncrossed Multiflows and Applications to Disjoint Paths}
\author[1]{Chandra Chekuri}
\author[2]{Guyslain Naves}
\author[3]{Joseph Poremba}
\author[3]{F. Bruce Shepherd}
\affil[1]{Department of Computer Science, University of Illinois, Urbana-Champaign}
\affil[2]{Laboratoire d'Informatique et des Systèmes, Aix-Marseille Université}
\affil[3]{Department of Computer Science, University of British Columbia}
\date{\vspace{-1cm}}
\begin{document}

\maketitle


\begin{abstract}
A multiflow in a planar graph is {\em uncrossed} if the curves identified by its support paths do not cross in the plane.
Recently, uncrossed flows have played a role in approximation algorithms for maximum disjoint paths in ``fully-planar'' instances, where the combined supply-plus-demand graph is planar.
They are also used in algorithms to find low-congestion unsplittable flows for both fully-planar and single-source instances.
For these two instance classes, any fractional multiflow can be converted into one that is uncrossed, which these algorithms then exploit to obtain their results.

We investigate the utility of uncrossed flow more generally and
ask three key questions.
First, are there other interesting planar multiflow instances that admit uncrossed flows (beyond fully-planar and single-source)?
We answer affirmatively, demonstrating a new family of ``pairwise-planar'' instances whose fractional flows can be uncrossed.
This family subsumes fully-planar but includes substantially more, such as (2-connected) fully-compliant series-parallel instances and some instances that have large clique demand graphs. 
Second, can we always round a fractional uncrossed flow to a ``good'' integral flow?
We again answer positively.
For maximization problems, we show any fractional uncrossed flow can be rounded to an integral flow with a constant fraction of its value.
For congestion problems (where we must fully route all given demands), we give a rounding procedure that yields an integral multiflow with edge congestion 2.
Consequently, we obtain constant-factor approximation algorithms for maximum disjoint paths and minimum congestion integer multiflow for pairwise-planar instances, and show such instances have a constant integral flow-multicut gap.
Finally we ask, given an arbitrary planar instance, can we determine if there exists a congestion-1 uncrossed fractional flow (congestion setting) or find the maximum value uncrossed fractional flow (maximization setting)?
For congestion, we show this problem is NP-hard, but finding uncrossed edge-disjoint paths is polytime solvable if the demands span a bounded number of faces.
For maximization, we present a strong (almost-polynomial) inapproximability result.


\end{abstract}


\section{Introduction}

In {\em multicommodity flow}, we are given an edge-capacitated {\em supply graph} $(G, u)$ and a set of source-sink pairs called {\em demands} or {\em commodities}.
The demands are represented by a {\em demand graph} $H$ on the same node set as $G$.
Unless otherwise specified, graphs are {\em undirected}.
We seek to route flow for the demands simultaneously, while respecting the capacities of the supply graph.
There are two flavours of the multiflow problem.
In the {\em maximization} setting, the objective is to find a flow $x$ that respects the capacities and maximizes the total sum of demands being routed.
A feasible flow for $(G, H, u)$ needs to respect the capacities, but has freedom in the quantities of each demand that it routes.
In the {\em congestion} setting, we are given target amounts $d(s, t)$ for each $(s, t) \in E(H)$. To be feasible for $(G, H, u, d)$, a flow must (in addition to respecting capacities) fully route this amount for each demand. 
The objective is to determine whether $d$ can be satisfied in the capacities $u$, or to find the minimum scaling factor $\alpha \ge 1$ such that $d$ can be satisfied in capacities $\alpha u$.

Both fractional and integer versions of these problems are fundamental in combinatorial optimization (cf. \cite{schrijver2003combinatorial}).
Due to applications, the literature often emphasizes problems where the supply graph is planar \cite{korte1990paths}.
The integer problems are hard even for planar supply graphs with 3 demand edges \cite{vygen1995np}.  There exist, however, good approximation algorithms  in so-called {\em fully-planar} instances where the joint supply-demand graph $G+H$ is planar\footnote{In this paper, if we say a multiflow instance is {\em planar} we only mean that the supply graph $G$ is planar. If we mean $G+H$ is planar, we will say fully-planar.}.
Often approximation algorithms follow a ``relax-and-round'' approach: relax the integer problem to its fractional version, find a fractional solution with ``nice properties'', and then round the nice fractional solution to an integral solution.

A recurring ``nice property'' 
for planar multiflow is that the support paths (i.e. paths routing non-zero flow) do not cross in the plane.
A multiflow satisfying this property is {\em uncrossed} (see \Cref{fig:fully-planar-ex}).
It is known how to construct uncrossed multiflows for single-source ($H$ is a star) and fully-planar instances, the latter using an algorithm of Matsumoto et al. \cite{matsumoto1986planar}.
This was recently leveraged to give constant-factor approximation algorithms for {\em maximum edge-disjoint paths} (MEDP) \cite{garg2022integer} and {\em maximum node-disjoint paths} (MNDP) \cite{schlomberg2023packing,schlomberg2024improved} for fully-planar instances.
Very recently uncrossed flows were used to round fractional flows to {\em unsplittable} flows, routing the same demand quantities with small additive error, in both single-source \cite{traub2024single} and fully-planar contexts \cite{espinosa2026unsplittable}.
In an unsplittable flow, each demand is routed on a single path.

\begin{figure}[htbp]
\begin{subfigure}{0.5\textwidth}
\centering
\begin{tikzpicture}[scale=0.7]
    \node (s1) at (0, 0) {};
    \node (t1) at (0, 4) {};
    \node (s2) at (4, 0) {};
    \node (t2) at (4, 4) {};

    \node (x1) at (2, 1) {};
    \node (x2) at (2, 3) {};
    
    \node (y1) at (1, 2) {};
    \node (y2) at (3, 2) {};

    \foreach \v in {s1,t1,s2,t2,x1,x2,y1,y2}  {    
        \draw[fill] (\v) circle (3pt);
    }

    \draw[black] (s1) to (x1);
    \draw[black] (s2) to (x1);
    \draw[black] (x1) to (y1);
    \draw[black] (x1) to (y2);
    \draw[black] (y1) to (x2);
    \draw[black] (y2) to (x2);
    \draw[black] (x2) to (t1);
    \draw[black] (x2) to (t2);

    \draw[dashed] (s1) to (t1);
    \draw[dashed] (s2) to (t2);

    \foreach \a / \b in {s1/x1,x1/y2,y2/x2,x2/t1} {
        \draw[line width=3pt, semitransparent, blue] (\a) to (\b);
    }
    \foreach \a / \b in {s2/x1,x1/y1,y1/x2,x2/t2} {
        \draw[line width=3pt, semitransparent, ForestGreen] (\a) to (\b);
    }
\end{tikzpicture}
\label{fig:fully-planar-ex1}
\end{subfigure}%
\begin{subfigure}{0.5\textwidth}
    \centering
\begin{tikzpicture}[scale=0.7]
    \node (s1) at (0, 0) {};
    \node (t1) at (0, 4) {};
    \node (s2) at (4, 0) {};
    \node (t2) at (4, 4) {};

    \node (x1) at (2, 1) {};
    \node (x2) at (2, 3) {};
    
    \node (y1) at (1, 2) {};
    \node (y2) at (3, 2) {};

    \foreach \v in {s1,t1,s2,t2,x1,x2,y1,y2}  {    
        \draw[fill] (\v) circle (3pt);
    }

    \draw[black] (s1) to (x1);
    \draw[black] (s2) to (x1);
    \draw[black] (x1) to (y1);
    \draw[black] (x1) to (y2);
    \draw[black] (y1) to (x2);
    \draw[black] (y2) to (x2);
    \draw[black] (x2) to (t1);
    \draw[black] (x2) to (t2);

    \draw[dashed] (s1) to (t1);
    \draw[dashed] (s2) to (t2);

    \foreach \a / \b in {s1/x1,x1/y1,y1/x2,x2/t1} {
        \draw[line width=3pt, semitransparent, blue] (\a) to (\b);
    }
    \foreach \a / \b in {s2/x1,x1/y2,y2/x2,x2/t2} {
        \draw[line width=3pt, semitransparent, ForestGreen] (\a) to (\b);
    }
\end{tikzpicture}
\label{fig:fully-planar-ex2}
\end{subfigure}
\caption{A fully-planar multiflow instance with two different multiflows routing its demands. Dashed edges represent demands, and coloured paths are support paths of the flow. The support paths on the left cross, while the multiflow on the right is uncrossed.}
\label{fig:fully-planar-ex}
\end{figure}

In the above results, the uncrossed property of the fractional solution plays a key role, but the arguments are still specialized to the structure of $(G, H)$.
For example, for the fully-planar context, the uncrossed flow can be shown to be {\em laminar}: if one considers the family of cycles obtained by joining demand edges of $H$ to their support paths in $G$, the result is a nested family of cycles in the plane.
However, this same property does not hold for all instances that admit uncrossed flows.
For instance, if $G$ is a cycle and $H$ consists of mutually crossing demands, then no paths within $G$ cross, but the path-plus-demand cycles are not nested.
Our aim is to expand on a theory of uncrossed flows - when can we find them, and do they always admit good rounding algorithms, even outside the laminar/single-source cases?
We have three motivating questions.


\begin{question}
    \label{ques:uncrossable-classes}
    Do there exist other (interesting) families of $(G, H)$ satisfying the following property: for all capacities $u$ and demand values $d$, if $(G, H, u, d)$ has a feasible (fractional) multiflow $x$, then $(G, H, u, d)$ has a feasible multiflow $y$ that is uncrossed?
\end{question}

We say $(G, H)$ is an {\em uncrossable layout} if it satisfies the conditions of \Cref{ques:uncrossable-classes}.
For instance, if $(G, H)$ is fully-planar or $H$ is single-source, then $(G, H)$ is an uncrossable layout.


\begin{question}
    \label{ques:rounding}
    Suppose we are given a feasible fractional multiflow $x$ that is uncrossed (not assuming that the multiflow $x$ is laminar).
    Can we round $x$ to a ``good'' integer multiflow $y$? Specifically:
    \begin{enumerate}[label=\roman*.]
        \item (Congestion Setting) Given feasible uncrossed fractional multiflow $x$ for $(G, H, u, d)$, can we compute an integer multiflow $y$ feasible for $(G, H, \alpha u, d)$, for some constant $\alpha$?
        \item (Maximization Setting) Given feasible uncrossed fractional multiflow $x$ for $(G, H, u)$, can we compute an integer multiflow $y$ feasible for $(G, H, u)$ whose value is $1/\beta$ times that of $x$, for some constant $\beta$?
    \end{enumerate}
\end{question}

The answer to \Cref{ques:uncrossable-classes} and both parts of \Cref{ques:rounding} is: yes!
First, we define a new class of $(G, H)$ and prove it gives uncrossable layouts.

\begin{theorem}
    \label{thm:pairwise-planar-intro}
    Let $G$ be a planar graph with a fixed embedding.
    Let $H$ be such that for any pair of demands $h_1, h_2 \in E(H)$, $h_1$ and $h_2$ can be embedded inside (possibly different) faces of $G$ so they do not cross each other.
    Then $(G, H)$ is an uncrossable layout.
\end{theorem}

We say any $(G, H)$ satisfying this property is {\em pairwise-planar} (see \Cref{fig:pairwise-planar} in \Cref{sec:new-classes} for a depiction).
The pairwise-planar class includes fully-planar as a subset, but includes substantially more. For instance, it contains layouts where  $G + H$ is highly non-planar (and thus, do not admit laminar flows). It also includes the 2-connected {\em fully-compliant}\footnote{$(G, H)$ is fully-compliant if $G$ is series-parallel and $G+h$ is series-parallel for every $h \in E(H)$.} family introduced in \cite{chekuri2013flow}, which were not previously known to give uncrossed layouts to our knowledge.
Pairwise-planar also differs from previous uncrossable layouts in an important structural sense.
Namely, fully-planar and single-source $(G, H)$ are {\em cut-sufficient} (essentially, they have a Max-Flow Min-Cut theorem in the congestion setting), while some pairwise-planar $(G, H)$ are not.



As for \Cref{ques:rounding}, we prove the following results.
We begin with the maximization setting.

\begin{theorem}
    \label{thm:MEDP}
    Let $(G, H, u)$ be a planar maximization instance where $u$ is integral.
    Let $f$ be a feasible fractional flow that is uncrossed.
    There exists a feasible integral flow $f'$ that achieves an $\Omega(1)$ fraction of the value of $f$.
    This integral flow can be computed in polynomial time.
\end{theorem}
We prove this result by establishing an ``approximate integer decomposition property'', exploiting colouring results on a restricted class of string graphs (path intersection graphs).
This integer decomposition also implies $\Omega(1)$ rounding holds for the {\em weighted} maximization version, where each $(s, t) \in E(H)$ has an associated weight $w(s,t)$, and a flow achieves $w(s, t)$ value for each unit of $(s, t)$-flow that it routes.
Additionally, this result extends to {\em node-capacitated} multiflows (i.e. $u$ is a capacity function on nodes rather than edges).

For the congestion setting, we give a factor 2 rounding for converting a fractional uncrossed multiflow into an integral multiflow.
Moreover, for an extra additive penalty we can round to an unsplittable flow.
\begin{theorem}
    \label{thm:congestion}
    Let $(G, H, u, d)$ be a planar congestion instance where $u, d$ are integral.
    Let $f$ be a feasible fractional multiflow that is strongly uncrossed\footnote{We defer the more technical definition of strongly uncrossed to Section~\ref{sec:strongly-uncrossed}. Importantly, the uncrossable layouts mentioned in this paper, including pairwise-planar, admit strongly uncrossed flows.}.
    Then there exists both:
    \begin{enumerate}
        \item an integral multiflow $f'$ that is feasible for $(G, H, u', d)$ where $u'(e) \le 2u(e)$ for each $e \in E(G)$, and
        \item an unsplittable multiflow $f''$ that is feasible for $(G, H, u'', d)$ where $u''(e) < 2u(e) + d_{\max}$ for every $e \in E(G)$, where $d_{\max} = \max_{h \in E(H)} d(h)$.
    \end{enumerate}
    The integral and unsplittable multiflows can be computed in polynomial time.
\end{theorem}

Using Theorems \ref{thm:pairwise-planar-intro}, \ref{thm:MEDP}, and \ref{thm:congestion}, we obtain constant-factor approximation algorithms for integer multiflow problems with pairwise-planar inputs.
\begin{corollary}
\label{cor:pairwise-planar-algorithm}
    There exist polynomial time constant-factor approximation algorithms for maximum edge-disjoint paths and maximum node-disjoint paths (and more generally, maximum integer multiflow), as well as minimum congestion integer multiflow, for inputs where $(G, H)$ is pairwise-planar. 
\end{corollary}


Another interesting corollary concerns the {\em integral flow-multicut gap}, which is the ratio of the minimum capacity of a multicut (a set $F \subseteq E(G)$ whose removal destroys all paths connecting demands of $H$) to the maximum value of an {\em integral} multiflow.
Using their flow rounding algorithms, \cite{huang2021approximation} \cite{garg2022integer} show the gap is constant for fully-planar $(G, H)$, but the gap can be as large as $\Theta(|E(H)|)$ even for planar supply graphs \cite{garg_primal-dual_1997}.
Combining our results with a theorem of \cite{tardos_improved_1993}, we obtain a constant gap for pairwise-planar instances.

\begin{corollary}
\label{cor:pairwise-planar-multicut-gap}
    For any maximization instance $(G, H, u)$ where $(G, H)$ is pairwise-planar, the minimum capacity of a multicut is at most $O(1)$ times the maximum value of an integral multiflow.
\end{corollary}



The third key question requires slightly more motivation.
If $(G, H)$ is not an uncrossable layout, it only means the structure of $(G, H)$ alone is insufficient to guarantee uncrossed solutions.
However for specific capacities $u$ and demands $d$ we may still be able to find an uncrossed flow and round it.
Thus it would be nice to have algorithms to determine whether (in the congestion setting) we can find a feasible fractional flow that is uncrossed, or (in the maximization setting) find the largest feasible flow that is uncrossed.

\begin{question}
    \label{ques:algorithmic}
    Suppose we are given an {\em arbitrary} planar multiflow instance.
    Can we solve the fractional multiflow problem if we constrain the feasible space to only allow uncrossed solutions? Specifically:
    \begin{enumerate}[label=\roman*.]
        \item\label{ques:algorithmic-congestion} (Congestion Setting) Given $(G, H, u, d)$, can we determine whether there exists a feasible fractional multiflow that is uncrossed?
        \item (Maximization Setting) Given $(G, H, u)$, can we compute a feasible fractional uncrossed multiflow whose value is (approximately) maximum amongst feasible uncrossed solutions ({\em maximum uncrossed multiflow})?
    \end{enumerate}
\end{question}

These questions concern {\em fractional} multiflows and may feel like linear optimization problems.
However the requirement of uncrossed flow paths makes these computationally much more challenging.
Interestingly, the positive result of Theorem~\ref{thm:MEDP} can be combined with results of \cite{chuzhoy2018almost,chuzhoyjournalv017a006} to establish strong almost-polynomial inapproximability for maximum uncrossed fractional multiflow (for both edge and node capacities).

\begin{theorem}
\label{thm:inapprox}
    For planar $G$, maximum uncrossed fractional multiflow is not approximable to within a factor of $2^{\Omega(\log^{1-\epsilon} n)}$ for any $\epsilon>0$ assuming that $NP \not\subseteq DTIME(n^{polylog~n})$.
Moreover, there is no $n^{O\left(\frac{1}{(\log \log n)^2}\right)}$-approximation 
polynomial-time algorithm assuming that for some constant $\delta > 0$,
$NP \not\subseteq DTIME(2^{n^{\delta}})$.
These results hold even when $G$ is maximum degree $3$ (or even $G$ is a wall graph) and the capacities are unit.
\end{theorem}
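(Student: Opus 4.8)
The plan is to transfer the almost‑polynomial inapproximability of integral \OPT‑disjoint‑paths on restricted planar graphs, due to Chuzhoy, Kim and Nimavat~\cite{chuzhoy2018almost,chuzhoyjournalv017a006}, to the fractional U‑constrained LPs via the rounding guarantee of \Cref{thm:MEDP}. The crucial elementary observation is that in the graph classes named in the statement \emph{every} integral disjoint‑paths solution is automatically uncrossed. For MNDP this is immediate: the paths are vertex‑disjoint, so no two of them meet at a common vertex and hence cannot cross anywhere. For MEDP it is here that maximum degree $3$ is used: a crossing between two support walks at a vertex $v$ requires those two walks to occupy four distinct edge‑ends at $v$ (two each), which is impossible once $\deg(v)\le 3$; note this holds whether or not $v$ is a terminal, and for fractional supports as well, since it only concerns the rotation‑system (interleaving) definition of crossing fixed earlier. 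Consequently, on a subcubic planar graph — in particular a wall — the integral MEDP optimum equals the integral \emph{uncrossed} MEDP optimum, and on any planar graph (in particular a grid or wall) the MNDP optimum equals the integral uncrossed MNDP optimum.

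Next I would sandwich the value of the U‑constrained LP. Writing $\OPT_{\mathrm{int}}$ for the integral uncrossed optimum and $\OPT^{U}_{\mathrm{LP}}$ for the optimum of the U‑constrained version of~(\ref{eqn:lp-max-edge}) (resp.\ (\ref{eqn:lp-max-node})), any integral uncrossed solution is a feasible fractional point of the U‑constrained LP, so $\OPT^{U}_{\mathrm{LP}}\ge \OPT_{\mathrm{int}}$. Conversely, \Cref{thm:MEDP} rounds any feasible uncrossed fractional solution — and, for unit capacities, in polynomial time — to a feasible integral solution of value $\Omega(1)$ times the fractional value; in our graph classes this integral solution is itself uncrossed, so $\OPT_{\mathrm{int}}\ge \Omega(1)\cdot\OPT^{U}_{\mathrm{LP}}$. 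Hence $\OPT^{U}_{\mathrm{LP}}=\Theta(\OPT_{\mathrm{int}})$. More importantly, a polynomial‑time algorithm that outputs an uncrossed fractional solution of value at least $\OPT^{U}_{\mathrm{LP}}/\rho$ composes with the rounding of \Cref{thm:MEDP} to yield an integral disjoint‑paths solution of value $\Omega(\OPT_{\mathrm{int}}/\rho)$, i.e.\ an $O(\rho)$‑approximation for the original MEDP (resp.\ MNDP) instance; the same composition also turns a factor‑$\rho$ value estimate for $\OPT^{U}_{\mathrm{LP}}$ into a factor‑$O(\rho)$ value estimate for $\OPT_{\mathrm{int}}$.

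Finally I would invoke the hardness of~\cite{chuzhoy2018almost,chuzhoyjournalv017a006}: MEDP on subcubic planar graphs / walls, and MNDP on grids / walls, admit no $2^{O(\log^{1-\epsilon} n)}$‑approximation unless $NP\subseteq DTIME(n^{\mathrm{poly}\log n})$, and no $n^{O(1/(\log\log n)^2)}$‑approximation unless $NP\subseteq DTIME(2^{n^{\delta}})$ for some $\delta>0$. Since our reduction loses only a constant factor in the ratio, this is absorbed by infinitesimally shrinking $\epsilon$ in the first regime and into the $O(\cdot)$ in the exponent in the second, and the stated bounds follow for the U‑constrained versions of~(\ref{eqn:lp-max-edge}) and~(\ref{eqn:lp-max-node}) on exactly the asserted graph classes. (This also makes explicit the contrast with fully‑planar instances advertised in the introduction, where the U‑constrained problems are solvable exactly.)

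The main obstacle I expect is not conceptual but a matter of carefully quoting the cited constructions: one must confirm that the hard MEDP/MNDP instances of~\cite{chuzhoy2018almost,chuzhoyjournalv017a006} can be taken within the restricted classes (maximum degree $3$ planar, wall graphs, grids) with unit capacities and with the precise parameter dependence claimed, and check that terminals and parallel demand edges do not covertly push any vertex to degree $\ge 4$ — once the degree‑$3$ bound is secured, the ``every solution is uncrossed'' observation together with the \Cref{thm:MEDP} sandwich is routine. A secondary point is simply to be sure the degree argument is stated against the exact model of crossing used in the paper so that it applies to fractional supports, not just to integral path systems.
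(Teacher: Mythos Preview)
Your proposal is correct and follows essentially the same route as the paper. One small point the paper makes explicit that you gloss over: the hardness of~\cite{chuzhoyjournalv017a006} is for the \emph{capped} problem (at most one unit routed per demand pair), whereas the integral solution produced by \Cref{thm:MEDP} may route several paths for a single commodity; the paper closes this gap by observing that in a degree-$3$ graph each terminal is incident to at most three flow paths, so keeping one path per commodity loses only an additional factor of~$3$---you should make this step explicit in your argument.
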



For the congestion setting, we prove that the problem posed by \Cref{ques:algorithmic} is NP-complete.
\begin{theorem} 
\label{thm:npc}
It is NP-complete to determine whether planar congestion instance $(G, H, u, d)$ has a feasible fractional uncrossed multiflow, even with unit-valued capacities and demands.
\end{theorem}

We remark that since this problem is not a linear program anymore, it is not even obvious that the problem is in NP, however we establish this
in \Cref{sec:poly-size}.
On the positive side, we show that finding uncrossed edge-disjoint paths can be (polytime) reduced to a node-disjoint paths instance.
Hence we may find uncrossed solutions if $E(H)$ is incident with a bounded number of faces using a result of Schrijver \cite{schrijver1991disjoint}.

\begin{theorem}
\label{thm:reduction}
For instances where $G$ is planar, the edges of $H$ are incident with a bounded number of faces, and the capacities and demands are unit-valued, there is a polynomial time algorithm to find a feasible integral uncrossed multiflow for $(G, H, u, d)$ or declare that none exists. 
\end{theorem}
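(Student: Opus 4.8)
The plan is to reduce the problem, under the unit-capacity/unit-demand assumption, to a node-disjoint paths (NDP) problem on an auxiliary planar graph $G'$ whose terminals are confined to a bounded number of faces, and then invoke Schrijver's algorithm \cite{schrijver1991disjoint}. First I would reformulate the object we are looking for. With $u \equiv 1$ and $d \equiv 1$, an integral solution to (\ref{eqn:lp-congestion-edge}) is, after discarding flow-cycles, a choice of an $st$-path $P_{st}$ in $G$ for each $st \in E(H)$ with the $P_{st}$ pairwise edge-disjoint; such a collection is uncrossed exactly when, at every vertex $v$, the ``transitions'' made by the paths through $v$ — each being the chord, in the cyclic order of the edges incident to $v$ given by the embedding, joining the two edges a path uses at $v$ — form a family of pairwise non-crossing chords, while a demand with an endpoint at $v$ contributes a free endpoint lying in some gap between two consecutive incident edges. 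So the task is to find pairwise edge-disjoint paths for the demands whose transitions at every vertex are simultaneously non-crossing.

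Next I would construct $G'$. Replace a small disk around each $v$ by a planar \emph{disk gadget} $D_v$: a planar graph drawn in that disk carrying $\deg(v)$ boundary \emph{ports}, one per edge incident to $v$, appearing on $\partial D_v$ in the embedding's cyclic order, together with one extra \emph{pendant port} for each demand with endpoint $v$, inserted in the gap corresponding to a face of $G$ that is incident to the edges of $H$ (and to $v$); this gadget should have the two-way property that (i) every family of node-disjoint paths in $D_v$ between its ports induces a non-crossing partial matching of those ports, and (ii) every non-crossing partial matching of the ports is realized by node-disjoint paths in $D_v$. A sufficiently fine planar mesh inside the disk supplies such a gadget of size $\poly(\deg v)$. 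For each demand $st$ attach a pendant terminal node $a^{st}_v$ by an edge to its pendant port. Finally glue the gadgets: for each $e = uv \in E(G)$ join the $e$-port of $D_u$ to the $e$-port of $D_v$ by an edge. Let $G'$ be the resulting planar graph, and let the NDP instance ask for node-disjoint $a^{st}_s$--$a^{st}_t$ paths over all $st \in E(H)$.

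I would then verify the reduction. Any family of node-disjoint paths in $G'$ projects to a family of $st$-paths in $G$ that is edge-disjoint (each glue edge, hence each edge of $G$, lies on at most one path because its incident port nodes do) and uncrossed (by (i) the transitions at every $D_v$ are non-crossing); conversely, by (ii), any edge-disjoint uncrossed routing in $G$ lifts to a node-disjoint routing in $G'$, after choosing which gap each demand endpoint occupies. Planarity of $G'$ is immediate, since it is assembled from planar pieces drawn inside pairwise disjoint disks and glued along the edges of the planar embedding of $G$, and $|V(G')| = \poly(|V(G)|)$. Because the edges of $H$ are incident with only $O(1)$ faces of $G$, all the pendant ports, and hence all the terminals $a^{st}_v$, lie in a bounded number of faces of $G'$; Schrijver's planar node-disjoint paths algorithm \cite{schrijver1991disjoint} therefore runs in polynomial time, and it either returns a routing — which we translate back into an integral uncrossed solution of (\ref{eqn:lp-congestion-edge}) — or certifies that none exists.

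The step I expect to be the main obstacle is the design and exact analysis of the disk gadget $D_v$, especially at terminal vertices. A demand endpoint at $v$ must be free to leave $v$ along any incident edge, so $a^{st}_v$ must be routable through $D_v$ to every port; yet its pendant port sits at a single cyclic position, chosen inside one of the $O(1)$ distinguished faces. Establishing the precise two-way correspondence between node-disjoint routings through $D_v$ and non-crossing configurations of ports-plus-pendants, and in particular arguing that it is without loss of generality to assume each demand curve emanates into one of the distinguished faces incident to $v$, is the delicate part; the remaining items — planarity, polynomial size, and the face count of $G'$ — are routine bookkeeping.
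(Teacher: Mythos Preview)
Your high-level plan matches the paper's: replace each vertex by a planar gadget so that an integral uncrossed routing in $G$ corresponds to a node-disjoint routing in the auxiliary graph, then invoke Schrijver's bounded-faces algorithm. The reformulation in terms of non-crossing chords at each vertex is the right picture, and the forward direction (NDP in $G' \Rightarrow$ uncrossed EDP in $G$) goes through for any reasonable disk gadget.

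The obstacle you flag is real and is exactly where the content of the proof lies. Your property~(ii) --- that the gadget realizes every non-crossing partial matching of its boundary ports --- is fine at non-terminal vertices, and a grid in the disk suffices there. At a terminal $v$, however, an uncrossed routing in $G$ places the free endpoint at the cyclic position of whichever edge the path happens to exit on, whereas your pendant is pinned to one gap. If some transiting chord separates the pendant from that exit port, the matching $\{(\text{pendant},\text{exit})\}\cup M$ is not non-crossing, so (ii) does not apply and the converse direction fails. What is actually needed is not a without-loss-of-generality claim about which face the pendant sits in, but a \emph{re-routing} argument: any uncrossed local configuration can be deformed to one in which the terminating path reaches a fixed boundary position while all transiting paths remain pairwise non-crossing.

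The paper supplies this via an explicit ``spider web'' gadget (concentric rings plus radial rays, with the new terminal at a fixed inner node). The key lemma shows that every path incident to $v$ --- transiting or terminating --- can be sent along a counterclockwise spiral through the inner rings and out its own ray; these spirals are rotations of one another and hence edge-disjoint regardless of which original edge the terminating path used. This is applied iteratively, peeling off one unit of $\deg_G(v)$ or one demand at $v$ per step, until every vertex has degree at most~$4$ and every terminal has $H$-degree $1$; a final $C_4$/triangle replacement then brings the maximum degree to~$3$, where edge- and node-disjoint paths coincide. A generic mesh does not obviously support such a spiral move, so the specific gadget geometry is doing the work that your sketch leaves open.
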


\subsection{Organization of the Paper}
\Cref{sec:prelims} lays the groundwork for the theory of uncrossed flows, including detailed definitions.
\Cref{sec:new-classes} discusses pairwise-planar instances, proves they admit uncrossed flows, and shows a constant integral flow-multicut gap.
\Cref{sec:maximization-all} covers our maximization results, both rounding (\Cref{thm:MEDP}) and inapproximability (\Cref{thm:inapprox}).
The rounding result for congestion (\Cref{thm:congestion}) appears in \Cref{sec:congestion}.
The NP-completeness result (\Cref{thm:npc}) is deferred to the appendix (\Cref{sec:npc}).
\Cref{sec:reduction} contains the algorithm to solve integral congestion instances when demands are incident with a bounded number of faces (\Cref{thm:reduction}).

\subsection{Related Work}

Uncrossed flows have played a role in previous routing algorithms in planar graphs, such as Schrijver's Homotopy Method \cite{schrijver1990homotopic} for node-disjoint paths.
Most relevant are recent algorithms for maximum disjoint paths in fully-planar instances ($G+H$ planar).
In \cite{huang2021approximation,garg2022integer}, constant-factor approximation algorithms are given for maximum edge-disjoint path, then extended to the node-disjoint setting in \cite{schlomberg2023packing,schlomberg2024improved}.
These algorithms glue uncrossed flow paths to their respective demand edges to obtain an uncrossed (laminar) family of cycles in a planar graph.
Our rounding algorithms are agnostic in the sense that they just operate on uncrossed flow paths, not assuming
that demands lie within a face or that demands of a common face are non-crossing (indeed for our pairwise-planar instances, it is even possible for demands within a face to form a clique).


Uncrossed flows for fully-planar instances are typically constructed by a primal algorithm \cite{matsumoto1986planar} that computes an uncrossed flow from scratch.
We build on a different approach, taking a feasible flow (with many crossings) and iteratively decreasing total crossing (used for example in \cite{huang2023approximating} to find cycles in higher-genus surfaces that cross at most once).
We discuss obstacles faced in extending this to pairwise-planar in Section~\ref{sec:new-classes}.

In \cite{garg2022integer} the initial flow is first rounded to a half-integral flow (using linear programming theory such as total unimodularity), then this half-integral flow is rounded to an integral flow using a 4-colouring of the (planar) intersection graph of the cycles.
Extending to the node-disjoint setting, \cite{schlomberg2023packing} uses a different argument, iteratively constructing a high-value solution by showing one can always find an ``efficient'' cycle that only intersects the other cycles on a small set of its nodes.
Our rounding technique for \Cref{thm:MEDP} instead uses an approach based on establishing an ``approximate integer decomposition property'' \cite{chekuri2007multicommodity,chekuri2009approximate}.
A key ingredient comes from looking at a subset of {\em string graphs}, which are the intersection graphs from curves in the plane.
Specifically, we are interested in those that arise from curves that do not cross, studied for example in \cite{esperet_coloring_2009}, and for which an important colouring result was proven by Fox and Pach \cite{fox_touching_nodate}\footnote{This manuscript is unpublished, but the result is referenced in \cite{van_batenburg_coloring_2017} and their writeup was shared with us \cite{pach}.}.

The fully-planar maximum disjoint paths algorithms were extended \cite{schlomberg2023packing,huang2023approximating} to uncrossable families of cycles in higher-genus surfaces, yielding approximation algorithms for maximum disjoint paths when $G+H$ is embedded in a surface of genus $g$; the approximation ratio is polynomial in the genus $g$.
By contrast our work extends uncrossing techniques to a broader family of multiflow instances (pairwise-planar) while remaining {\em in the plane}.
Some pairwise-planar instances have $G+H$ of arbitrarily high genus (e.g. $H$ contains a large clique), but we show nevertheless that they admit uncrossed flows in the plane  in order to achieve constant-factor approximations.
Results in \cite{schlomberg2023packing} use the context of uncrossed cycle families, as defined by Goemans and Williamson \cite{goemans1998primal}, which is a combinatorial rather than topological definition of crossing.

The works of \cite{huang2021approximation}, \cite{garg2022integer} use their rounding algorithms to establish that the integral flow-multicut gap of fully-planar instances is constant.
The integral flow-multicut gap can be as large as $\Theta(|E(H)|)$ for general planar $G$ \cite{garg_primal-dual_1997}, so further restrictions on $(G, H)$ are needed to establish constant bounds.
Other classes known to give constant integral flow-multicut gaps include: when $G$ is a tree \cite{garg_primal-dual_1997}, when $G + H$ is series-parallel \cite{cornaz_max-multiflowmin-multicut_2011} (note that these are less general than fully-compliant), when $G$ has fixed cyclomatic number \cite{bentz_disjoint_2009}, when $G+H$ has fixed genus \cite{huang2023approximating}, and when $G$ has fixed tree-width \cite{chekuri2013maximum}.
A common ingredient in these results is a bound on the {\em fractional} flow-multicut gap for $K_{r,r}$-minor free $G$ by \cite{tardos_improved_1993, KPR}.

Other works using uncrossed flow come from the study of unsplittable flow in the congestion model (where specified demand quantities must be fully routed). 
For directed single-source planar instances ($H$ is a star) \cite{traub2024single}, it is shown how to convert a feasible fractional uncrossed flow into an unsplittable flow that exceeds capacities by a small amount and has no increase in cost.
For (undirected) fully-planar instances, \cite{espinosa2026unsplittable} shows how to convert a laminar flow into an unsplittable flow $y$ with small capacity violation (no cost guarantee).

\section{Uncrossed Multiflows}
\label{sec:prelims}

\subsection{Multiflow Preliminaries}
All graphs are undirected in this paper.
Consider a supply graph $G = (V, E(G))$ and demand graph $H = (V, E(H))$.
Nodes that are ends of an edge in $H$ are {\em terminals}, and the edges themselves are {\em demands} or {\em commodities}.
Together, $(G, H)$ define a {\em layout}\footnote{Sometimes in the literature, $(G, H)$ is called a {\em multiflow topology}. We avoid this term to avoid confusion with the topology of the plane.
}. 
For nodes $s, t \in V$, let $\mathcal P_{st}$ denote the set of $st$-paths.
By {\em path} we always mean simple path.
A {\em multiflow} is a function $x: \mathcal P \to \mathbb R_{\ge 0}$, where $\mathcal P = \bigcup_{(s, t) \in E(H)} \mathcal P_{st}$.
Multiflows are fractional unless explicitly stated as integral.
A multiflow can also be viewed as a collection of single-commodity flows $(x^{st}: (s, t) \in E(H))$.
For $e \in E(G)$, we let $x(e) = \sum_{P \in \mathcal P : e \in P} x(P)$ (the amount of flow on $e$), and for $(s, t) \in E(H)$ we let $|x^{st}| = \sum_{P \in \mathcal P_{st}} x(P)$ (the amount of demand $(s, t)$ that is routed).

A {\em maximization instance} has the form $(G, H, u)$, where $u: E(G) \to \mathbb R_{\ge 0}$ are {\em capacities}.
A multiflow $x$ is {\em feasible} for the instance if for every $e \in E(G)$, $x(e) \le u(e)$.
The objective is to maximize the value of a feasible multiflow, where the value of a multiflow $x$ is $|x| = \sum_{P \in \mathcal P} x(P) = \sum_{(s, t) \in E(H)} |x^{st}|$.
If $u = \vec 1$ and we restrict to integral $x$, this problem is the {\em maximum edge-disjoint paths} problem (MEDP).
A {\em weighted} maximization instance also comes equipped with per-unit profits $w: E(H) \to \mathbb R_{\ge 0}$, in which case the value of $x$ is instead defined $|x|_w = \sum_{(s, t) \in E(H)} w(s,t) |x^{st}|$.

A {\em congestion instance} has the form $(G, H, u, d)$, where $u: E(G) \to \mathbb R_{\ge 0}$ are capacities and $d: E(H) \to \mathbb R_{\ge 0}$ are {\em demand values}.
A multiflow $x$ is {\em feasible} for the instance if for every $e \in E(G)$ we have $x(e) \le u(e)$ and for every $(s, t) \in E(H)$ we have $|x^{st}| = d(s, t)$.
As a decision problem, the goal is to determine whether $(G, H, u, d)$ has a feasible multiflow.
As an optimization problem, the objective is to find the minimum $\alpha \ge 1$ such that $(G, H, \alpha u, d)$ has a feasible multiflow.

In either setting, we say an instance is {\em planar} if $G$ is planar.
We emphasize that $G + H$ is not necessarily planar.
We always assume a fixed embedding of planar graph $G$.


\subsection{Uncrossed Paths and Multiflows}
\label{sec:uncrossed}

First, we define what it means for paths to cross. 
Edge-disjoint paths can cross at a common node, but in general we need to consider common subpaths (see \Cref{fig:crossing}).

\begin{figure}[htbp]
\begin{subfigure}[c]{0.23\textwidth}
\centering
\begin{tikzpicture}
	\node  (v) at (0, 0) {};
	\node  (1) at (-1, 1) {};
	\node  (2) at (1, 1) {};
	\node  (3) at (-1, -1) {};
    \node  (4) at (1, -1) {};
    \draw[fill] (v) circle (3pt);
    \draw[black] (v) to node[below left=-1]{$e$} (1);
    \draw[line width=3pt, semitransparent, blue] (v) to (1);
    \draw[black] (v) to node[below right=-1]{$g$} (2);
    \draw[line width=3pt, semitransparent, ForestGreen] (v) to (2);
    \draw[black] (v) to node[above left=-1]{$g'$} (3);
    \draw[line width=3pt, semitransparent, ForestGreen] (v) to (3);
    \draw[black] (v) to node[above right=-1]{$e'$}(4);
    \draw[line width=3pt, semitransparent, blue] (v) to (4);
\end{tikzpicture}
\label{fig:crossing-node}
\end{subfigure}%
\hspace{-0.03\textwidth}
\begin{subfigure}[c]{0.3\textwidth}
\centering
\begin{tikzpicture}
	\node  (z1) at (-2, 0) {};
    \draw (z1) node[above right] {$z$};
    \node  (z2) at (1, -2) {};
    \draw (z2) node[below left] {$z'$};
	\node  (1) at (-3, -1) {};
	\node  (2) at (-3, 0) {};
	\node  (3) at (2, -2) {};
    \node  (4) at (2, -1) {};
    \node  (m1) at (-1, -1) {};
    \node  (m2) at (0, -1) {};
    
    \draw[fill] (z1) circle (3pt);
    \draw[fill] (z2) circle (3pt);
    \draw[fill] (m1) circle (3pt);
    \draw[fill] (m2) circle (3pt);
    \draw[black,double] (z1) to (m1);     
    \draw[black,double] (m1) to (m2);
    \draw[black,double] (z2) to (m2);
    \draw[blue] (z1) to node[below right=-1]{$e$} (1);
    \draw[line width=3pt, semitransparent, blue] (z1) to (1);
    \draw[black] (z1) to node[above left=-1]{$g$} (2);
    \draw[line width=3pt, semitransparent, ForestGreen] (z1) to(2);
    \draw[black] (z2) to node[above right=-1]{$g'$} (3);
    \draw[line width=3pt, semitransparent, ForestGreen] (z2) to (3);
    \draw[black] (z2) to node[above left=-1]{$e'$} (4);
    \draw[line width=3pt, semitransparent, blue] (z2) to (4);
\end{tikzpicture}
\label{fig:crossing-subpath}
\end{subfigure}%
\hspace{0.1\textwidth}
\begin{subfigure}[c]{0.3\textwidth}
\centering
\begin{tikzpicture}
	\node  (z1) at (-2, 0) {};
    \draw (z1) node[above right] {$z$};
    \node  (z2) at (1, -2) {};
    \draw (z2) node[below left] {$z'$};
	\node  (1) at (-3, -1) {};
	\node  (2) at (-3, 0) {};
	\node  (3) at (2, -2) {};
    \node  (4) at (2, -1) {};
    \node  (m1) at (-1, -1) {};
    \node  (m2) at (0, -1) {};
    
    \draw[fill] (z1) circle (3pt);
    \draw[fill] (z2) circle (3pt);
    \draw[fill] (m1) circle (3pt);
    \draw[fill] (m2) circle (3pt);
    \draw[black,double] (z1) to (m1);
    \draw[black,double] (m1) to (m2);
    \draw[black,double] (z2) to (m2);
    \draw[black] (z1) to node[below right=-1]{$g$} (1);
    \draw[line width=3pt, semitransparent, ForestGreen] (z1) to (1);
    \draw[black] (z1) to node[above left=-1]{$e$} (2);
    \draw[line width=3pt, semitransparent, blue] (z1) to (2);
    \draw[black] (z2) to node[above right=-1]{$g'$} (3);
    \draw[line width=3pt, semitransparent, ForestGreen] (z2) to (3);
    \draw[black] (z2) to node[above left=-1]{$e'$}(4);
    \draw[line width=3pt, semitransparent, blue] (z2) to (4);
\end{tikzpicture}
\label{fig:uncrossed-subpath}
\end{subfigure}
\caption{(Left) Two edge-disjoint paths that cross at a common node. (Center) Two paths that cross at a common subpath. (Right) Two paths that share a common subpath, but are uncrossed. Uncoloured edges are shared between the two paths.}
\label{fig:crossing}
\end{figure}

\begin{definition}
    [Crossing]
    \label{def:crossing-subpath}
    Let $Z$ be a maximal shared subpath of paths $P$ and $Q$ that does not include any of their endpoints.
    Say the ends of $Z$ are $z$ and $z'$.
    The edges of $(P \cup Q) \setminus Z$ incident with $z$ or $z'$ have a clockwise ordering about $Z$ in the plane.
    The paths {\em cross} at $Z$ if the edges of $P$ and $Q$ alternate clockwise about $Z$.
    That is, if $P = P_1, e, Z, e', P_2$ and $Q = Q_1, g, Z, g', P_2$, the edges occur $e, g, e', g'$ or $e, g', e', g$ clockwise about $Z$.
    Two paths are {\em uncrossed} if they do not cross at any such subpath\footnote{We emphasize that paths can only cross at maximal common subpaths that do {\em not} contain their endpoints. Both paths must have edges before and after the common subpath in order to cross. For example, if $P$ contains an endpoint of $Q$ and $Z$ is the maximal common subpath containing that endpoint, then $P$ and $Q$ do not cross at $Z$.}.
\end{definition}



The {\em support paths} of a multiflow $x$ are the paths with $x(P) > 0$.
The following is the natural extension of ``uncrossed'' to the multiflow setting (though we discuss a stronger condition in \Cref{sec:strongly-uncrossed}).

\begin{definition}[Uncrossed Multiflow]
    \label{def:uncrossed-multiflow}
    A multiflow is {\em uncrossed} if its support paths are pairwise uncrossed.
\end{definition}

We are interested in when a feasible multiflow $x$ can be converted into an uncrossed one that routes the same demands.
Our discussion here focuses on the congestion setting, but still applies to the maximization setting (since we can define target demand values $d$ equal to whatever quantities $x$ routes).
As shown in \Cref{fig:problematic-instances}, even if $(G, H, u, d)$ is feasible, it may not have an uncrossed feasible multiflow\footnote{In fact, it is not always possible to find an uncrossed flow with bounded congestion. One may modify the canonical grid instance \cite{GargVY97} so it has a half-integral routing of demands, but any uncrossed multiflow incurs polynomial congestion.}.
However several classes do admit uncrossed solutions.
We say $(G,H)$ is an {\em uncrossable layout} if for all $u, d$, if $(G, H, u, d)$ has a feasible multiflow, then $(G, H, u, d)$ has a feasible uncrossed multiflow.
Matsumoto et al. \cite{matsumoto1986planar} give an algorithm that, given a feasible fully-planar ($G+H$ planar) instance, produces a feasible uncrossed flow.
Thus fully-planar layouts are uncrossable.
It is also well-known that single-source layouts ($H$ is a star) are uncrossable.
Another uncrossable layout is given when $G$ is a cycle with parallel edges and $H$ is arbitrary, which we call {\em ring} layouts.
In \Cref{sec:new-classes} we introduce {\em pairwise-planar} layouts and prove they are uncrossable.

\begin{figure}[htbp]
\begin{subfigure}{0.5\textwidth}
\centering
\begin{tikzpicture}[scale=0.7]
    \node (s1) at (180:2) {};
    \draw (s1) node[left] {$s_1$};
    \node (t1) at (0:2) {};
    \draw (t1) node[right] {$t_1$};
    \node (s2) at (-90:2) {};
    \draw (s2) node[below] {$s_2$};
    \node (t2) at (90:2) {};
    \draw (t2) node[above] {$t_2$};

    \node (rs1) at (180:1) {};
    \node (rt1) at (0:1) {};
    \node (rs2) at (-90:1) {};
    \node (rt2) at (90:1) {};

    \foreach \v in {s1,t1,s2,t2,rs1,rt1,rs2,rt2}  {    
        \draw[fill] (\v) circle (3pt);
    }

    \foreach \v in {s1,t1,s2,t2} {
        \draw[black] (\v) to (r\v);
    }

    \draw[black] (rs1) to[bend left] (rt2);
    \draw[black] (rt2) to[bend left] (rt1);
    \draw[black] (rt1) to[bend left] (rs2);
    \draw[black] (rs2) to[bend left] (rs1);

    \draw[black,dashed] (s1) to[bend right=20] (t1);
    \draw[black,dashed] (s2) to[bend left=20] (t2);
\end{tikzpicture}
\label{fig:notuncrossable}
\end{subfigure}%
\begin{subfigure}{0.5\textwidth}
    \centering
\begin{tikzpicture}[scale=0.7]
	\node  (s1) at (-2, 0) {};
    \draw (s1) node[left = 2] {$s_1$};
	\node  (t1) at (2, 0) {};
    \draw (t1) node[right = 2] {$t_1$};
	\node  (s2) at (0, -2) {};
    \draw (s2) node[below = 2] {$s_2$};
	\node  (t2) at (0, 1) {};
    \draw (t2) node[above = 2] {$t_2$};
    \node  (x) at (0, 0) {};
    \node (y1) at (-1, 2) {};
    \node (y2) at (1, 2) {};
    \draw[fill] (s1) circle (3pt);
    \draw[fill] (t1) circle (3pt);
    \draw[fill] (s2) circle (3pt);  
    \draw[fill] (t2) circle (3pt);
    \draw[fill] (x) circle (3pt); 
    \draw[fill] (y1) circle (3pt);
    \draw[fill] (y2) circle (3pt);
    \draw[black] (s1) to (x) to (t1);
    \draw[black] (s2) to (x) to (t2);
    \draw[black] (x) to[bend left] (y1);
    \draw[black] (y1) to (y2);
    \draw[black] (y2) to[bend left] (x);
    \draw[dashed] (s1) to[bend right] (t1);
    \draw[dashed] (s2) to[bend left] (t2);
\end{tikzpicture}
\label{fig:keyhole}
\end{subfigure}
\caption{Problematic instances for uncrossability. Capacities and demand values are unit. (Left) A congestion instance with a feasible multiflow but no uncrossed multiflow. (Right) The {\em Keyhole Instance}, a congestion instance with a feasible uncrossed trail-multiflow but no uncrossed path-multiflow.}
\label{fig:problematic-instances}
\end{figure}

\subsection{Another View on Crossing: Parallelizations}


There are more intuitive ways to think about uncrossed paths that we frequently use.

{\bf Widened Graph:}
Suppose we ``widen'' $G$, making its nodes into small disks 
and its edges into small channels. 
A family of pairwise uncrossed paths can have its paths drawn as curves simultaneously, each curve following the node disks and edge channels corresponding to its path, so that the curves are disjoint.

{\bf Parallelizations:} To avoid dealing with the formal topology of widened graphs, we use an equivalent notion based on ``separating'' the paths at each edge.
Consider a family of paths $\mathcal P$ and for each $e \in E(G)$ let $\mathcal P_e = \set{P \in \mathcal P : e \in E(P)}$.
Let $G'$ be obtained by replacing each $e$ with $|\mathcal P_e|$ parallel copies of itself.
Intuitively, the parallel edges represent the curves within the channel of $e$.


\begin{figure}[htbp]
\begin{subfigure}{0.45\textwidth}
\centering
\begin{tikzpicture}
	\node  (z1) at (-2, 0) {};
    \draw (z1) node[above right] {$z$};
    \node  (z2) at (1, -2) {};
    \draw (z2) node[below left] {$z'$};
	\node  (1) at (-3, -1) {};
	\node  (2) at (-3, 0) {};
	\node  (3) at (2, -2) {};
    \node  (4) at (2, -1) {};
    \node  (m1) at (-1, -1) {};
    \node  (m2) at (0, -1) {};
    
    \draw[fill] (z1) circle (3pt);
    \draw[fill] (z2) circle (3pt);
    \draw[fill] (m1) circle (3pt);
    \draw[fill] (m2) circle (3pt);
    
    \draw[black] (z1) to[bend left=15] (m1);
    \draw[line width=3pt, semitransparent, blue] (z1) to[bend left=15] (m1);
    \draw[black] (m2) to[bend left=15] (z2);
    \draw[line width=3pt, semitransparent, blue] (m2) to[bend left=15] (z2);
    \draw[black] (z1) to[bend right=15] (m1);
    \draw[line width=3pt, semitransparent, ForestGreen] (z1) to[bend right=15] (m1);
    \draw[black] (m2) to[bend right=15] (z2);
    \draw[line width=3pt, semitransparent, ForestGreen] (m2) to[bend right=15] (z2);
    \draw[black] (m1) to[bend left=15] (m2);
    \draw[line width=3pt, semitransparent, blue] (m1) to[bend left=15] (m2);
    \draw[black] (m2) to[bend left=15] (m1);
    \draw[line width=3pt, semitransparent, ForestGreen] (m2) to[bend left=15] (m1);

    \draw[black] (z1) to node[below right=-1]{$g$} (1);
    \draw[line width=3pt, semitransparent, ForestGreen] (z1) to (1);
    \draw[black] (z1) to node[above left=-1]{$e$} (2);
    \draw[line width=3pt, semitransparent, blue] (z1) to  (2);
    \draw[black] (z2) to node[above right=-1]{$g'$} (3);
    \draw[line width=3pt, semitransparent, ForestGreen] (z2) to (3);
    \draw[black] (z2) to node[above left=-1]{$e'$}(4);
    \draw[line width=3pt, semitransparent, blue] (z2) to (4);
\end{tikzpicture}
\label{fig:parallelization}
\end{subfigure}%
\hspace{0.1\textwidth}
\begin{subfigure}{0.45\textwidth}
\centering


    




\begin{tikzpicture}
    \draw[fill,nearly transparent,gray] (0:0) circle[radius=1];
    
    \node (b1) at (30:1) {}; 
    \node (b2) at (150:1) {};
    \node (g1) at (-30:1) {};
    \node (g2) at (-150:1) {};
    \draw[fill=blue] (b1) circle (3pt);
    \draw[fill=blue] (b2) circle (3pt);
    \draw[fill=ForestGreen,thick] (g1) circle (3pt);
    \draw[fill=ForestGreen,thick] (g2) circle (3pt);

    \draw (b1) node[above right] {};
    \draw (b2) node[above=2] {};
    \draw (g1) node[above right=1] {};
    \draw (g2) node[above right=2] {};

    \node (b1') at (-20:3) {};
    \node (b2') at (165:2) {};
    \node (g1') at (-45:2) {};
    \node (g2') at (-145:2) {};

    \draw[black] (b1) to[bend left] (b1');
    \draw[line width=3pt, semitransparent, blue] (b1) to[bend left] (b1');
    \draw[black] (b2) to node[above left=-1]{$e$} (b2');
    \draw[line width=3pt, semitransparent, blue] (b2) to (b2');
    \draw[black] (g1) to (g1');
    \draw[line width=3pt, semitransparent, ForestGreen] (g1) to (g1');
    \draw[black] (g2) to node[below right]{$g$} (g2');
    \draw[line width=3pt, semitransparent, ForestGreen] (g2) to( g2');

    \draw[black] (b1) to (b2);
    \draw[line width=3pt, semitransparent, blue] (b1) to (b2);
    \draw[black] (g1) to (g2);
    \draw[line width=3pt, semitransparent, ForestGreen] (g1) to (g2);

\end{tikzpicture}
\label{fig:ringified}
\end{subfigure}
\caption{(Left) An uncrossed parallelization of the uncrossed paths from \Cref{fig:crossing}. (Right) Disk-expansion of the node $z$ after uncrossed parallelization.}
\label{fig:parallel-disk}
\end{figure}


Now, for each $e \in E(G)$ we decide on some one-to-one mapping $\phi_e$ from $\mathcal P_e$ to the parallel class of $e$ in $G'$.
For each $P \in \mathcal P$, we obtain a path $P_{\phi}$ in $G'$ whose edges are $\set{\phi_e(P) : e \in P}$.
This induces a family of edge-disjoint paths $\mathcal P_{\phi}$ in $G'$.
We call $\mathcal P_{\phi}$ a {\em parallelization} of $\mathcal P$.
If we (arbitrarily) orient each $e \in E(G)$, it gives a linear ordering $\le_e$ of $\mathcal P_e$.
Different choices for $(\phi_e : e \in E(G))$ yield different parallelizations.
The family $\mathcal P$ is pairwise uncrossed if and only if there is some parallelization that is pairwise uncrossed at every node of $G'$ (rather than needing to consider longer shared subpaths).
Using parallelizations helps us simplify subsequent definitions by assuming paths are edge-disjoint.
Given a collection of paths, one can compute in polynomial time an uncrossed parallelization or find a crossing pair of paths.
\resolved{
\joe{mention that there is an algorithm to compute such an ordering}
\bruce{pls do. i think its critical for this project that if we are given a flow we can check if its uncrossed.}
}

{\bf Disk-Expansion:} After parallelization, we sometimes go a step closer to the widened graph idea by ``expanding'' a node $v$.
We draw a small disk with $v$ at its centre (small enough that its boundary intersects only $\delta(v)$, and hits each edge once).
Where an edge copy intersects the boundary, we place a new node.
We terminate the edges at these new nodes.
For an uncrossed parallelization, it is then possible to (simultaneously) draw edges in the open interior of the disk between nodes that correspond to consecutive edges of the same path, such that the drawing is planar.
An example is shown in Figure \ref{fig:parallel-disk}.

\subsection{Strongly Uncrossed Multiflows}
\label{sec:strongly-uncrossed}

\resolved{
\joe{Can we combine this with the intro to shorten it?}
\joe{Also I don't think we only deal with unit capacities and demands anymore}
\bruce{i was worried about that but the only place i am aware of needing it is in the reduction.}
}

\resolved{
\joe{Maximization vs. Congestion - I propose explicitly saying ``congestion instance'', ``maximization instance'' so it is always clear.}
For a maximization instance, a multiflow is {\em feasible} if it respects the capacities.
\bruce{agree. we also sometimes look at maximization instances where we cannot route more than one demand. we dont emphasize this much except we make a remark in section 7 where it is needed to be comsistent with julias work.}
}

\resolved{
\subsection{Uncrossed Multiflows}
\label{sec:multi-uncrossed}
\guyslain{do we really need a new subsection here?}
\bruce{maybe not. i think joe was considering moving it to the intro. a separate section makes it easier for a reader to find. if we have a section then i think here we should say that given a flow we can check if its uncrossed in polytime. my general preference is for an intro to run quickly to the point, and leave some details out but which can be easily. maybe we just merge whatever we want to keep from 2.2 and 2.3?}
}

\resolved{
\joe{I'm commenting out the following, because the idea of innermost edges, etc. hasn't been introduced yet. Moreover, we aren't ``converting'' a multiflow into an uncrossed one - they are all just uncrossed by definition}
}
\resolved{\guyslain{paragraph commented below is confusing. I don't think it is necessary to discuss why rings are not fully-planar. The last sentence is not very relevant, and we will show uncrossable instances that are not fully planar anyways.}\bruce{fine with me to remove. we imply this with our venn diagram so if anyone wants details we can supply in rebuttal or in appendix.}
}

\resolved{
\guyslain{Paragraph commented, I don't think it is quite correct. Adding $T$ breaks planarity. So we would need to solve it ignoring the planar embedding, then cut edges to $T$ into leaves, then embed, then uncross, but an uncrossing that allows to exchange tth endpoints of the two demands (as long as we still have $s$ on one side). Not worth it I believe.}\bruce{agree it would need further work and not worth it.}
}

Consider an uncrossed multiflow.
For any $(s, t) \in E(H)$, its support paths have a clockwise ordering about $s$ (and $t$) given by the parallelization.
We call the regions of the plane bounded between two consecutive paths {\em sectors} for this commodity (defined more formally in \Cref{sec:congestion}).

The proof of \Cref{thm:congestion} uses that sectors of different commodities are non-crossing as sets - that is, we want that if $A, B \subseteq \mathbb R^2$ are sectors of different commodities, then either $A \cap B = \emptyset$ or one contains the other.
One might think this non-crossing property follows from \Cref{def:uncrossed-multiflow}, however \Cref{fig:quasicrossing-super} shows a counterexample\footnote{The desired non-crossing sector property does hold for uncrossed flows in {\em leaf instances}, where terminals all have degree 1.
Moving each terminal to its own new leaf node is a standard pre-processing operation, but this is not safe for uncrossed flows - it may turn an uncrossable layout into one that is not.
For example, ring instances admit uncrossed flows but Figure~\ref{fig:problematic-instances} (left) does not, even though it is obtained by adding leaves to a ring instance.}.
In the figure, no pair of flow paths cross, but the green $s_2 t_2$-paths ``slip between'' the blue $s_1 t_1$-paths to move between two distinct sectors.
We define {\em strongly uncrossed} multiflows by imposing stricter restrictions on the support paths of different commodities at their terminals.
Many layout classes admit these strongly uncrossed flows, including fully-planar, single-source, and rings, as well as pairwise-planar.

\begin{figure}[htbp]
     \centering
     \begin{subfigure}[b]{0.3\textwidth}
         \centering
          \begin{tikzpicture}[scale=0.7]
    \node  (s1) at (0, 0) {};
    \draw (s1) node[below = 2] {$s_1$};
    \node  (s2) at (-2, 2) {};
    \draw (s2) node[above = 3] {$s_2$};
    \node  (t2) at (2, 2) {};
    \draw (t2) node[above = 3] {$t_2$};
    \node  (t1) at (0, 4) {};
    \draw (t1) node[above = 2] {$t_1$};
    
    \node[inner sep=0pt, minimum size=0pt]  (4) at (-3,2) {};
    \node[inner sep=0pt, minimum size=0pt]  (5) at (3,2) {};
    
    \draw[black] (s1) to (t1);
    \draw[line width=3pt, semitransparent, blue]  (s1) to (t1);
    \draw[black]  (s1) to (s2);
    \draw[line width=3pt, semitransparent, ForestGreen]  (s1) to (s2);
    \draw[black]  (t1) to (s2);
    \draw[line width=3pt, semitransparent, ForestGreen]  (t1) to (s2);
    \draw[black]  (s1) to (t2);
    \draw[line width=3pt, semitransparent, ForestGreen]  (s1) to (t2);
    \draw[black]  (t1) to (t2);
    \draw[line width=3pt, semitransparent, ForestGreen]  (t1) to (t2);

    \foreach \v in {s1,s2,t2,t1,4,5} {
        \draw[fill,black] (\v) circle (3pt);
    }
    
    \draw[black] (s1) to[ out=180,in=-90] (4) to[out=90, in=-180] (t1);
    \draw[line width=3pt, semitransparent, blue] (s1) to[ out=180,in=-90] (4) to[out=90, in=-180] (t1);
    \draw[black] (s1) to[out=0,in=-90] (5) to[out=90, in=0] (t1);
    \draw[line width=3pt, semitransparent, blue] (s1) to[out=0,in=-90] (5) to[out=90, in=0] (t1);

    \draw[dashed] (s1) to[bend left] (t1);
    \draw[dashed] (s2) to (t2);
\end{tikzpicture}
     \end{subfigure}
     \hfill 
     \begin{subfigure}[b]{0.3\textwidth}
         \centering
          \begin{tikzpicture}[scale=0.7]
	\node  (s1) at (0, 0) {};
	\node  (s2) at (-2, 2) {};
	\node  (t2) at (2, 2) {};
	\node  (t1) at (0, 4) {};
    \node[inner sep=0pt, minimum size=0pt]  (4) at (-3,2) {};
    \node[inner sep=0pt, minimum size=0pt]  (5) at (3,2) {};

    \fill[nearly transparent, blue] (s1.center) to[out=180,in=-90] (4.center) to[out=90, in=-180] (t1.center) -- (s1.center) -- cycle;

    \fill[nearly transparent, ForestGreen] (s2.center) -- (s1.center) -- (t2.center) -- (t1.center) -- (s2.center) -- cycle;

    \draw (s1) node[below = 2] {$s_1$};
    \draw (s2) node[above = 3] {$s_2$};
    \draw (t2) node[above = 3] {$t_2$};
    \draw (t1) node[above = 2] {$t_1$};
    
	\draw[black] (s1) to (t1);
    \draw[line width=3pt, semitransparent, blue]  (s1) to (t1);
    \draw[black]  (s1) to (s2);
    \draw[line width=3pt, semitransparent, ForestGreen]  (s1) to (s2);
    \draw[black]  (t1) to (s2);
    \draw[line width=3pt, semitransparent, ForestGreen]  (t1) to (s2);
    \draw[black]  (s1) to (t2);
    \draw[line width=3pt, semitransparent, ForestGreen]  (s1) to (t2);
    \draw[black]  (t1) to (t2);
    \draw[line width=3pt, semitransparent, ForestGreen]  (t1) to (t2);

    \foreach \v in {s1,s2,t2,t1,4,5} {
        \draw[fill,black] (\v) circle (3pt);
    }
    
    \draw[black] (s1) to[ out=180,in=-90] (4) to[out=90, in=-180] (t1);
    \draw[line width=3pt, semitransparent, blue] (s1) to[ out=180,in=-90] (4) to[out=90, in=-180] (t1);
    \draw[black] (s1) to[out=0,in=-90] (5) to[out=90, in=0] (t1);
    \draw[line width=3pt, semitransparent, blue] (s1) to[out=0,in=-90] (5) to[out=90, in=0] (t1);

\end{tikzpicture}
     \end{subfigure}
     \hfill 
     \begin{subfigure}[b]{0.3\textwidth}
         \centering
          \begin{tikzpicture}

    \draw[fill,nearly transparent,gray] (0:0) circle[radius=1];
    
    \node (b1) at (0:1) {};
    \node (b2) at (90:1) {};
    \node (b3) at (180:1) {};
    \node (g1) at (45:1) {};
    \node (g2) at (135:1) {};
    \draw[fill=blue] (b1) circle (3pt);
    \draw[fill=blue] (b2) circle (3pt);
    \draw[fill=blue] (b3) circle (3pt);
    \draw[fill=ForestGreen] (g1) circle (3pt);
    \draw[fill=ForestGreen] (g2) circle (3pt);

    \node (b1') at (0:2) {};
    \node (b2') at (90:2) {};
    \node (b3') at (180:2) {};
    \node (g1') at (45:2) {};
    \node (g2') at (135:2) {};

    \foreach \v in {b1,b2,b3} {
        \draw[black] (\v) to (\v');
        \draw[line width=3pt, semitransparent, blue] (\v) to (\v');
    }

    \foreach \v in {g1, g2} {
        \draw[black] (\v) to (\v');
        \draw[line width=3pt, semitransparent, ForestGreen] (\v) to (\v');
    }

    \draw[black] (g1) to[bend left] (g2);
    \draw[line width=3pt, semitransparent, ForestGreen] (g1) to[bend left] (g2);


\end{tikzpicture}
     \end{subfigure}

     \caption{(Left) An uncrossed multiflow. The blue paths are for $(s_1, t_1)$ and the green paths are for $(s_2, t_2)$. No pair of support paths cross. (Center) A sector from the $s_1 t_1$-flow (blue shaded region) that crosses a sector from the $s_2t_2$-flow (green shaded region). (Right) The disk-expansion of $s_1$. The green path separates the blue nodes in the disk. There is a quasicrossing at $s_1$.}
     \label{fig:quasicrossing-super}
\end{figure}

\begin{definition}
    [Quasicrossing, Strongly Uncrossed]
    \label{def:quasicrossing}
    Let $f$ be a multiflow, with a fixed parallelization (so we assume support paths are edge-disjoint).
    Consider a node $z$ and the clockwise ordering of edges $e_1, \dotsc, e_k$ about $z$.
    For a commodity $h$, a {\em type-1 $(h, z)$-interval} is $[i, j]$ (where $i < j$) such that $e_i, e_j$ are consecutive edges of a single support path for $h$ that transits through $z$.
    A {\em type-2 $(h, z)$-interval} is $[i, j]$ ($i < j$) such that $e_i, e_j$ both terminate (different) support paths for $h$ at $z$.
    Note that two paths cross at $z$ if and only if their corresponding type-1 intervals cross\footnote{Two intervals $[i, j], [i', j']$ cross if $i < i' < j < j'$ or $i' < i < j' < j$.}.
    We say a {\em quasicrossing} occurs at $z$ if there exists an $(h, z)$-interval $I$ and an $(h', z)$-interval $I'$ that cross where $h \neq h'$ and at least one of the intervals is type-2.
    If there are no crossings and no quasicrossings at any node, we say $f$ is {\em strongly uncrossed}.

\end{definition}






\resolved{
\guyslain{next paragraph was not clear, please check my changes:}\bruce{leaving it for joe to check}\joe{looks good, thanks!}
}

\resolved{
\bruce{we discuss this in the intro so its probably worth removing here?}
}

\resolved{
\joe{A bit awkward to include this here, but our proof does require strongly uncrossed... should we swap the order?}\bruce{its an option to remove even or is it,used in comgestiom? i think its better to not have it here since it has less impact since it may not apply to the max version.}
\joe{We might need to for NPC to justify why its okay to just think about paths... but then it should be in the appendix}
}





\subsection{Walk-Multiflows}
\label{sec:walks}


It is standard for flows to only route on (simple) paths, since a flow walk could just be short-circuited into a flow path.
Disturbingly, this operation is not benign in the uncrossed flow setting.
The {\em Keyhole Instance} of Figure~\ref{fig:problematic-instances} is an example that requires the use of a non-simple flow trail to obtain an uncrossed solution. 
Since we would like to understand which instances have uncrossed solutions, we must consider {\em trail-multiflows} and {\em walk-multiflows} in addition to the usual {\em path-multiflows}.
A {\em trail} can repeat nodes but not edges, while a {\em walk} can repeat both.
Ultimately we can show our rounding results (\Cref{thm:MEDP}, \Cref{thm:congestion}) and hardness results (\Cref{thm:inapprox}, \Cref{thm:npc}) presented earlier hold for all three models.
The necessary modifications are minor and we defer the details to the full version.

\section{Uncrossed Flows in Pairwise-Planar Instances}
\label{sec:new-classes}

Here we introduce a new class of multiflow layouts and prove they are uncrossable.
We assume a fixed embedding of $G$.
We say $(s, t) \in E(H)$ {\em fits} face $F$ of $G$ if both $s, t$ lie in $F$ (i.e. the demand could be drawn within the face).
A layout $(G, H)$ is {\em facial} if each $(s, t)$ fits at least one face of $G$ (but demands fitting a common face may still cross each other).
\begin{definition}
    We say $(G, H)$ is {\em pairwise-planar} if $(G, H)$ is facial, and moreover for any pair $h_1, h_2 \in E(H)$, $h_1$ and $h_2$ can be embedded inside (possibly different) faces of $G$ so they do not cross each other.
\end{definition}
Unlike fully-planar, we do not require simultaneous uncrossed embedding of all demands.
This matters because a demand edge may fit multiple faces (a {\em multi-faced} demand).
\Cref{fig:pairwise-planar} shows an example.
However we emphasize that the embedding of $G$ is fixed and does not change as we embed different pairs of $E(H)$.

Clearly pairwise-planar generalizes fully-planar, but how substantial is this generalization?
We briefly argue it is a much richer class.
First, it is easy to construct examples where $H$ contains arbitrarily large cliques, meaning $H$ and $G+H$ have arbitrarily large genus (though $G$ remains planar).
It also generalizes rings ($G$ is a cycle with parallel edges, $H$ is arbitrary).
This is worth remarking because fully-planar does not capture ring layouts.
More broadly, pairwise-planar captures what are known as {\em fully-compliant} layouts (under mild connectivity assumptions).
Introduced in \cite{chekuri2013flow}, in such layouts $G$ is series-parallel and every $h \in E(H)$ satisfies that $G+h$ is series-parallel.
In the appendix (see \Cref{sec:fully-compliant}) we argue the following.
\begin{proposition}
    \label{prop:fully-compliant-pairwise-planar}
    If $(G, H)$ is fully-compliant and $G$ is 2-connected,
    then $(G, H)$ is pairwise-planar for any planar embedding of $G$.
\end{proposition}

Pairwise-planar extends further than fully-planar and fully-compliant in another fundamental way through a property known as {\em cut-sufficiency}.
For a multiflow instance $(G, H, u, d)$, we say the {\em cut condition} holds if for every $S \subseteq V$, $u(\delta_G(S)) \ge d(\delta_H(S))$, where $\delta(S)$ denotes the edges with one end in $S$ and one end in $V \setminus S$.
The cut condition is necessary, but not in general sufficient, for the existence of a feasible (fractional) multiflow.
A layout $(G, H)$ is called cut-sufficient if, for all capacities and demands $(u, d)$, the cut condition is sufficient for the existence of a feasible multiflow.
The Max-Flow Min-Cut Theorem can be equivalently stated as: single-source $(G, H)$ are cut-sufficient.
Understanding the relationship between the cut condition and feasibility is a fundamental question in network flows, and cut-sufficient instances tend to admit stronger integrality properties and approximation algorithms.
A result of Seymour \cite{seymour1981matroids} shows that fully-planar $(G, H)$ are cut-sufficient, and Chekuri et al. \cite{chekuri2013flow} show that fully-compliant instances are cut-sufficient.
However, pairwise-planar layouts need not be cut-sufficient.
The quintessential examples of series-parallel layouts that are not cut-sufficient are the {\em odd spindles}\footnote{In a {\em spindle}, $G = K_{2, p}$ where $p \ge 3$ and $H$ consists of a simple $p$-cycle between the nodes in the larger side of the partition and an edge between the two nodes in the other side. A spindle is {\em odd} if $p$ is odd. In some sense, the odd spindles are known to characterize non-cut-sufficiency for series-parallel layouts, see \cite{chakrabarti2012cut}.} \cite{chekuri2010flow, chakrabarti2012cut}, but these are pairwise-planar.

\begin{figure}[htbp]
\centering
\begin{tikzpicture}[scale=0.65]
    \node (s1) at (0, 0) {};
    \node (s2) at (4, 0) {};
    \node (s3) at (8, 0) {};
    \node (t1) at (12, 0) {};
    \node (t2) at (16, 0) {};
    \node (t3) at (20, 0) {};
    
    \node (x1) at (1, 3.7) {};
    \node (x2) at (19, 3.7) {};
    
    \node (y1) at (6, 3.7) {};
    \node (y2) at (10, 3.7) {};
    \node (y3) at (14, 3.7) {};
    \node (z1) at (10, 7.2) {};

    \node[cloud,draw,minimum width = 1.8 cm, minimum height = 1 cm] (c1) at (2, 0) { $G_1$ };
    \node[cloud,draw,minimum width = 1.8 cm, minimum height = 1 cm] (c2) at (6, 0) { $G_2$ };
    \node[cloud,draw,minimum width = 1.8 cm, minimum height = 1 cm] (c3) at (10, 0) { $G_3$ };
    \node[cloud,draw,minimum width = 1.8 cm, minimum height = 1 cm] (c4) at (14, 0) { $G_4$ };
    \node[cloud,draw,minimum width = 1.8 cm, minimum height = 1 cm] (c5) at (18, 0) { $G_5$ };
    
    \node[cloud,draw,minimum width = 1.8 cm, minimum height = 1 cm] (c6) at (4, 3.7) { $G_6$ };
    \node[cloud,draw,minimum width = 1.8 cm, minimum height = 1 cm] (c7) at (8, 3.7) { $G_7$ };
    \node[cloud,draw,minimum width = 1.8 cm, minimum height = 1 cm] (c8) at (12, 3.7) { $G_8$ };
    \node[cloud,draw,minimum width = 1.8 cm, minimum height = 1 cm] (c9) at (16, 3.7) { $G_9$ };

    \foreach \v in {s1,s2,s3,t1,t2,t3,x1,x2,y1,y2,y3, z1} {
        \draw[fill] (\v) circle (3pt);
    }

    \foreach \a / \b in {s1/t1,s2/t2,s3/t3} {
        \draw[dashed] (\a) to[bend left=40] (\b);
    }

    \foreach \a / \b in {s1/c1,c1/s2,s2/c2,c2/s3,s3/c3,c3/t1,t1/c4,c4/t2,t2/c5,c5/t3,x1/c6,c6/y1,y1/c7,c7/y2,y2/c8,c8/y3,y3/c9,c9/x2} {
        \draw[black] (\a) to (\b);
    }

    \draw[black] (s1) to (x1);
    \draw[black] (x2) to (t3);
    \draw[black] (x1) to[bend left] (z1);
    \draw[black] (z1) to[bend left] (x2);

    \draw[dashed] (x1) to[bend left=40] (y3);
    \draw[dashed] (y1) to[bend left=40] (x2);
    \draw[dashed] (y2) to (z1);
    \draw[dashed] (y2) to (s3);
    \draw[dashed] (y2) to (t1);
\end{tikzpicture}
\caption{A pairwise-planar layout. Each $G_i$ is a connected planar graph. The demand edges cannot be simultaneously embedded so that $G+H$ is planar ($G+H$ contains a $K_{3, 3}$-minor and a $K_5$-minor). However, taking advantage of multi-faced demands, any {\em single pair} of demands can be embedded without crossing. One could make the layout more complicated by putting demands inside the $G_i$'s following similar patterns.}
\label{fig:pairwise-planar}
\end{figure}

\subsection{Uncrossing Procedure}

In this section we show why feasible pairwise-planar instances have uncrossed flows.
We focus on congestion instances $(G, H, u, d)$ (where a feasible flow must satisfy all demands), since given a feasible flow $x$ for a maximization instance $(G, H, u)$, one can define demand values $d(s, t) = |x^{st}|$ and apply the congestion result.
We present an inductive proof that a feasible uncrossed flow exists.
It is straightforward to convert this into an algorithm, with one catch for polynomial time implementation that we explain at the end.

For fully-planar instances, \cite{garg2022integer,espinosa2026unsplittable} use an algorithm by Matsumoto et al. \cite{matsumoto1986planar} that builds an uncrossed flow from scratch starting from the zero flow.
Our procedure has more in common with \cite{huang2023approximating} (which concerns finding cycles in higher-genus surfaces that cross at most once) in that we begin with a (potentially highly crossing) feasible flow $x$ and ``uncross'' it bit-by-bit.
The following is an important lemma.

\begin{lemma}
    \label{corollary:cross-once}
    Let $(G, H, u, d)$ be a planar congestion instance that is feasible.
    There exists a feasible flow $f^*$ such that every pair of support paths crosses at most once, and any pair of support paths that share a terminal do not cross.
\end{lemma}

We say a flow satisfying this property is {\em almost-uncrossed}.
Proving this lemma formally is somewhat tedious, but such arguments have appeared in the literature before - the essence of the argument is in Lemma 1 of \cite{naves_hardness_2012} and a similar method is used in \cite{huang2023approximating}.
Loosely, the argument goes as follows.
Begin with a feasible flow $x$.
One can show that if a pair of support paths cross twice (or cross once but share a terminal), then one can re-route flow on their edges to decrease the ``amount of crossing'' between the two paths (``uncrossing'' the two paths) and argue that this re-routing does not increase crossing with other support paths.
Thus the total amount of crossing decreases, and we iterate until $x$ is uncrossed.
For completeness we give a careful proof of this lemma (and precisely define ``amount of crossing'') in the appendix (see \Cref{sec:local-uncrossing}).

\Cref{corollary:cross-once} immediately implies single-source layouts are uncrossable.
With a small trick it also proves fully-planar layouts are uncrossable: augment the instance by adding leaves. That is, for each $h = (s, t) \in E(H)$, add nodes $s_h, t_h$ into the interior of the face where $h$ is embedded, add supply edges $(s_h, s), (t_h, t)$, move the endpoints of the demand $h$ to $(s_h, t_h)$, and extend the flow paths appropriately to route this instance.
At first, adding leaves may actually increase the amount of crossing.
However, since the demands do not cross, it can be shown that every pair of support paths that cross must cross at least twice.
Then \Cref{corollary:cross-once} allows one to conclude there exists a feasible uncrossed flow in the  augmented instance.
The leaves can then be contracted away without introducing any new crossing.

Extending this trick to pairwise-planar instances is not immediate.
After adding leaves, it may not be the case that support paths that cross must cross at least twice.
We cannot ``commit'' a demand to a particular face up front.
Consider if a demand $h$ fits into two different faces $F_1$ and $F_2$, but there are demands $h_1$ and $h_2$ that fit those respective faces and cross $h$ in those faces.
If we add leaves into $F_1$, then we can ``uncross'' support paths for $h$ and $h_2$, however we cannot uncross support paths between $h$ and $h_1$, and vice versa.

We take the following approach.
Begin with a feasible flow $z$ that we want to uncross.
We temporarily ``commit'' each demand to one of the faces it fits (if multi-faced, choose arbitrarily).
We accept for now that flow paths of demands committed to the same face may cross, but compute a flow $x$ so that demands committed to different faces do not cross.
In fact we aim for a more strict condition on $x$: we show how to find $x$ so that, if we define $x^F$ as the multiflow corresponding to demands committed to face $F$, the $x^F$'s are  ``uncoupled'' from each other.
Here, ``uncoupled'' is a structural property (defined precisely below) which guarantees that if we can uncross each $x^F$ separately, regardless of how we do so, we can still safely glue the resulting flows back together without introducing crossing. 
Then we can just focus on uncrossing each $x^F$ inductively/recursively.
We now make this ``committing'' process and ``uncoupled'' property precise.

\begin{definition}
    [Facial Decomposition]
    Let $(G, H, u, d)$ be a facial multiflow instance.
    Let $\mathcal F$ be the set of faces of $G$.
    A {\em facial decomposition} is a collection $(H_F : F \in \mathcal F)$ of subgraphs of $H$ indexed by $F$ such that:
    \begin{itemize}
        \item $(E(H_F) : F \in \mathcal F)$ forms a partition of $E(H)$, and
        \item for every face $F \in \mathcal F$, every demand $h \in E(H_F)$ fits $F$ (but cannot necessarily all be embedded simultaneously in $F$ without crossing).
    \end{itemize}
\end{definition}

For a facial decomposition $(H_F : F \in \mathcal F)$ and a feasible flow $x$, we let $x^F$ be the portion of $x$ used to route $H_F$.
Note that if $h \in E(H)$ is multi-faced, the facial decomposition still only puts $h$ into one of the $H_F$'s.
Now we define the ``uncoupled'' condition and show that we can find a flow satisfying this property.
We let $\support(x^F)$ to denote the set of support paths for the flow $x^F$, and let $G[\support(x^F)]$ denote the graph induced by the edges in those support paths.

\begin{definition}
    [Nested Supports Property]
    Let $(G, H, u, d)$ be a facial multiflow instance, and let $(H_F: F \in \mathcal F)$ be a facial decomposition.
    For a flow $x$ and a face $F$, we define the {\em stellated support graph} $G[x, F]$ as follows: begin with the support graph $G[\support(x^F)]$, add a node $v_F$ into the interior of $F$, and then for each demand $(s, t) \in E(H_F)$ add edges $(s, v_F)$ and $(t_, v_F)$, embedded (without crossing) in the face $F$.
    We say $x$ satisfies the {\em nested supports property} if for every pair of distinct faces $F_1, F_2 \in \mathcal F$, the graph $G[x,F_1]$ is entirely contained within a single face of the graph $G[x,F_2]$, and vice versa.
\end{definition}

\begin{lemma}
    \label{lemma:nested-support-lemma}
    Let $(G, H, u, d)$ be a facial multiflow instance that is feasible, and let $(H_F : F \in \mathcal F)$ be a facial decomposition.
    There exists a feasible flow $x$ satisfying the nested supports property.
\end{lemma}

The proof of this lemma is slightly tricky to make rigorous, so we include only a sketch here. The full proof appears in the appendix (see \Cref{sec:nested-supports}).
\begin{proof}[Proof Sketch of \Cref{lemma:nested-support-lemma}]
    We augment the instance by adding leaves into the faces: for every face $F$ and demand $h = (s, t) \in E(H_F)$, add nodes $s_h, t_h$ into the face $F$ and supply edges $(s_h, s), (t_h, t)$ of capacity equal to $d(h)$, and move the demand $h$ from $(s, t)$ to $(s_h, t_h)$.
    We keep the same facial decomposition.
    We now invoke \Cref{corollary:cross-once} and obtain a feasible flow $x$ that is almost-uncrossed for this new instance.
    It is not hard to show that for support paths $P_1, P_2$ associated with distinct faces, if we consider the embedded cycles $C(P_1), C(P_2)$ obtained by joining the endpoints of the respective paths with their respective demand edges, that these cycles do not cross (though if the paths are associated with the same face, they may cross).
    
    Now, for each face $F$, we let $G_F$ be the union of $G[\support(x^F)]$ and all supply edges incident with the face $F$ itself (regardless of whether they are in the support).
    It is possible to argue that for distinct faces $F, F'$, $G_F$ must be contained in a single face of $G_{F'}$, and vice versa.
    Once this property is established, we can then argue that the nested supports property holds for the flow $x$.
\end{proof}
Assuming such a flow exists, we show that we can focus on uncrossing each face's flow independently.

\begin{lemma}
    \label{lemma:face-independence}
    Let $(G, H, u, d)$ be a facial multiflow instance.
    Let $(H_F : F \in \mathcal F)$ be a facial decomposition, and let $x$ be a feasible flow satisfying the nested supports property.
    For every $F \in \mathcal F$, let $u_F$ be the capacity vector defined by $u_F(e) = x^{F}(e)$ and let $d_F$ be the demand vector $d$ restricted to $H_F$.
    If for every $F \in \mathcal F$, the instance $(G, H_F, u_F, d_F)$ has a feasible uncrossed solution $y^F$,
    then $y = \sum_{F \in \mathcal F} y^F$ is a feasible uncrossed solution for $(G, H, u, d)$.
\end{lemma}

\begin{proof}
    The feasibility of $y$ is clear from the construction of the subproblems.
    Also observe that, by construction, $y$ still satisfies the nested supports property (since $G[y,F]$ is a subgraph of $G[x, F]$ for each face $F$ of $G$).
    
    We prove that $y$ is uncrossed assuming each $y^F$ is uncrossed.
    It remains to check for distinct faces $F_1, F_2$ and $P_1 \in \support(y^{F_1}), P_2 \in \support(y^{F_2})$, that $P_1$ and $P_2$ do not cross.
    
    Consider two distinct faces $F_1, F_2$.
    Let $P_1 \in \support(y^{F_1}), P_2 \in \support(y^{F_2})$.
    Let $h_1 = (s_1, t_1), h_2 = (s_2, t_2)$ be the demands routed by $P_1, P_2$ respectively.
    Suppose $P_1$ and $P_2$ cross, for the sake of contradiction.
    Consider the (simple) cycle $C$ in $G[y, F]$ obtained as $C = P_1 \bullet (s_1, v_{F_1}) \bullet (v_{F_1}, t_1)$, where $\bullet $ denotes concatenation.
    Since $P_1$ and $P_2$ cross, $P_2$ has some segment in the open exterior of $C$ and some segment in the open interior of $C$.
    However, since $C$ is a simple cycle, these two segments of $P_2$ must lie in distinct faces of $G[y, F]$, contradicting that $y$ satisfies the nested supports property.
\end{proof}

Using \Cref{lemma:face-independence}, we are able to reduce to finding a feasible uncrossed flow for each instance $(G, H_F, u_F, d_F)$ in isolation. No matter how we do this, we can combine the flows with \Cref{lemma:face-independence}. Importantly, this means that to tackle the instance $(G, H_F, u_F, d_F)$, we are able to re-commit multi-faced demands that were originally committed to $F$ to other faces.
Now, we give the main result.

\begin{theorem}
    \label{thm:pairwise-planar}
    Let $(G, H)$ be pairwise-planar and let $(G, H, u, d)$ be a congestion instance.
    If the instance is feasible, there exists a feasible uncrossed flow.
\end{theorem}
\begin{proof}
    We prove the result by induction on $|E(H)|$.
    If $|E(H)| = 1$ then the instance is single-source and the result follows, so suppose $|E(H)| \ge 2$.
    Let $\mathcal H = (H_F : F \in \mathcal F)$ be an arbitrary facial decomposition.
    
    We first consider the case that at least two faces $F$ have that $H_F$ is non-empty.
    We obtain a feasible flow $x$ for $(G, H, u, d)$ that satisfies the nested supports property for $\mathcal H$ by way of \Cref{lemma:nested-support-lemma}.
    For each face $F$, let $u_{F}, d_{F}$ be defined as in the statement of \Cref{lemma:face-independence}.
    Each instance $(G, H_{F}, u_{F}, d_{F})$ is feasible via the flow $x^F$.
    Additionally, since there are at least two faces $F$ such that $H_F$ is non-empty, we have $|E(H_F)| < |E(H)|$ for every face $F$.
    Hence by induction, there is a feasible uncrossed flow $y^F$ for each instance $(G, H_F, u_F, d_F)$.
    Then by \Cref{lemma:face-independence}, the sum of these flows is a feasible uncrossed flow for $(G, H, u, d)$.

    Now, we consider the case where $H_F$ is empty for all faces $F$ except one, call it $F^*$.
    That is, $E(H) = E(H_{F^*})$, and in particular every demand fits $F^*$.
    If none of the demands cross within $F^*$, then the layout is fully-planar and we are done.
    Otherwise, there exist $h, h' \in E(H)$ such that $h, h'$ cross when drawn in the face $F^*$.
    However, since $(G, H)$ is pairwise-planar, there exists another face $F' \neq F^*$ such that at least one of these two demands, without loss of generality say $h'$, also fits $F'$.
    Now we define a facial decomposition $\widehat {\mathcal H} = (\widehat H_F : F \in \mathcal F)$ as follows.
    We define $E(\widehat H_{F^*}) = E(H) \setminus \set{h'}$, $E(\widehat H_{F'}) = \set{h'}$, and $E(\widehat H_F) = \emptyset$ for all other faces $F$.
    That is, we now view $h'$ as being associated with $F'$ rather than $F^*$.
    
    We obtain a feasible flow $x$ for the instance $(G, H, u, d)$ satisfying the nested supports property for this new facial decomposition $\widehat {\mathcal H}$ by way of \Cref{lemma:nested-support-lemma}.
    We define capacities $u_F, d_F$ for each face $F$ as in the statement of \Cref{lemma:face-independence} using the new decomposition $\widehat{\mathcal H}$.
    Each instance $(G, \widehat H_F, u_F, d_F)$ is feasible.
    Additionally, each satisfies that $|E(\widehat H_F)| < |E(H)|$, and thus $(G, \widehat H_F, u_F, d_F)$ has a feasible uncrossed flow $y^F$ by induction.
    Then the sum of these flows is a feasible uncrossed flow for $(G, H, u, d)$ by \Cref{lemma:face-independence}.
\end{proof}

With a little extra effort, we can modify \Cref{lemma:face-independence} and \Cref{thm:pairwise-planar} to show that pairwise-planar instances admit strongly uncrossed flows.

It is straightforward to convert this inductive argument into a polynomial time algorithm, provided we have a polynomial time algorithm to compute a feasible almost-uncrossed flow, as in \Cref{corollary:cross-once}.
However, this is somewhat tricky.
To prove \Cref{corollary:cross-once}, we re-route flows between twice-crossing paths and reduce the total amount of crossing, but the amount of progress depends on the flow values of the crossing paths and could be very small (again, see \Cref{sec:local-uncrossing} for more details).
We defer the details to the full version, but one can show how to efficiently compute an almost-uncrossed flow that is {\em approximately} feasible - that is, for any $\epsilon > 0$, in polynomial time we can find an almost-uncrossed flow that is feasible for $(G, H, (1 + \epsilon)u, d)$.
The details are similar to the uncrossing algorithm in \cite{huang2023approximating}.
Essentially, by rounding the flow on each path down slightly and removing support paths with very little flow, we can ensure that the remaining flow values are all common multiples of a small but sufficiently large fraction, which ensures the uncrossing steps make enough progress to terminate in polynomial time.
We summarize the final result below.
\begin{theorem}
    Let $(G, H)$ be pairwise-planar and let $(G, H, u, d)$ be a congestion instance.
    Let $\epsilon > 0$ be any constant.
    If the instance is feasible, then there exists a strongly uncrossed flow $f'$ such that $f'$ is feasible for $(G, H, (1+\epsilon)u, d)$.
    The flow $f'$ can be computed in polynomial time.
\end{theorem}

Note that $f'/(1+\epsilon)$ is feasible for $(G, H, u, d/(1+\epsilon))$, so this approximate guarantee is useful for maximziation as well as congestion.
Combining this theorem with our rounding results \Cref{thm:MEDP} and \Cref{thm:congestion}, we obtain \Cref{cor:pairwise-planar-algorithm}.

\subsection{Integral Flow-Multicut Gap}

In this subsection we prove \Cref{cor:pairwise-planar-multicut-gap}.
Here we are concerned with maximization instances $(G, H, u)$.
A {\em multicut} is a set of supply edges $R \subseteq E(G)$ such that $G - R$ does not contain any $st$-path for any $(s, t) \in E(H)$.
The minimum capacity of a multicut is at least the maximum value of a (fractional) multiflow.
The ratio of these quantities is the {\em (fractional) flow-multicut gap}\footnote{Note: this is different from the similarly named {\em flow-cut gap}, which is defined for congestion instances and relates the minimum congestion value to the sparsity of cuts (not necessarily multicuts). See \cite{chekuri2013flow} for instance.}
The {\em integral flow-multicut gap} is the ratio of the minimum capacity of a multicut to the maximum value of an integral multiflow.

An immediate corollary of \Cref{thm:pairwise-planar} and \Cref{thm:MEDP} is that for pairwise planar $(G, H)$, the integral flow-multicut gap is bounded by $O(1)$ times the (fractional) flow-multicut gap.
We combine this with the following result by Tardos and Vazrani \cite{tardos_improved_1993} to conclude \Cref{cor:pairwise-planar-multicut-gap}.
\begin{theorem}[Tardos and Vazrani \cite{tardos_improved_1993}]
    For any maximization instance $(G, H, u)$ where $G$ is planar, the (fractional) flow-multicut gap is bounded by $O(1)$.
\end{theorem}


\section{Maximization Setting: Rounding and Hardness}
\label{sec:maximization-all}

In this section, we prove \Cref{thm:MEDP} and \Cref{thm:inapprox}, which cover our rounding and inapproximability results for maximum uncrossed multiflow.

\subsection{Uncrossed String Graphs and Approximate Integer Decomposition}

First we prove \Cref{thm:MEDP}.
Our approach is to establish a so-called approximate integer decomposition property \cite{chekuri2007multicommodity,chekuri2009approximate} for uncrossed multiflows.
The outline of the strategy is as follows.
Suppose we are given an uncrossed multiflow $f$, which achieves a value of $\beta$.
Imagine scaling $f$ by a positive integer $k$ such that $kf$ is integral.
Of course, $kf$ may not be feasible.
However suppose we are able to show that $kf$ can be decomposed as the sum of a small number $b$ of feasible integral flows.
That is $kf = \sum_{i = 1}^b f^i$ where each $f^i$ is integral and feasible.
At least one of the $f^i$ flows achieves a $1/b$ fraction of the value of $kf$, and hence a $k/b$ fraction of the value of $f$.

How can we achieve such a decomposition?
Suppose (for simplicity) that $kf$ sends 1 unit of flow along paths $\mathcal P = \set{P_1, \dotsc, P_\ell}$.
Furthermore suppose that each edge of $G$ has unit capacity and hence each appears on at most $k$ paths.
Our goal then is to partition the paths into classes such that every edge appears at most once in each class; the classes then define the flows $f^i$.
Equivalently, we seek to partition the paths so that paths in the same class do not share an edge.
Phrased this way, we can recognize this as a graph colouring problem on the edge-intersection graph of $\mathcal P$.

The edge-intersection graph $U$ of $(G, \mathcal P)$ has node set $\mathcal P$, and an edge between distinct $P_1, P_2 \in \mathcal P$ if there exists $e \in E(P_1) \cap E(P_2)$.
The {\em load} of $e \in E(G)$ is the number of paths in which $e$ occurs, and the {\em load} of $(G, \mathcal P)$ is the maximum load of an edge $e \in E(G)$.
Graphs $U$ constructed in this way are a special type of the well-studied family of {\em string graphs}, which are the intersection graphs of curves in the plane.

\begin{definition}
[Uncrossed String Graph, Realization]
An {\em uncrossed string graph} $U$ is the edge-intersection graph of a family of uncrossed paths $\mathcal P$ in a planar graph $G$.
We say $(G, \mathcal P)$ is a {\em realization} of $U$.

\end{definition}

Uncrossed string graphs have been studied before for their own sake \cite{esperet_coloring_2009,van_batenburg_coloring_2017}.
Given that $U$ has a load $k$ realization, we want that its chromatic number $\chi(U)$ is some small function of $k$, ideally $O(k)$.
This question is strongly related to the concept of a {\em $\chi$-bounded} graph family.
A family of graphs $\mathcal G$ is $\chi$-bounded if there exists a function $f$ such that for all $U \in \mathcal G$, $\chi(U) \le f(\omega(U))$, where $\omega(U)$ is the clique number of $U$.
It is worth noting that general string graphs are not $\chi$-bounded, and it is not obvious that uncrossed string graphs should be.
An $O(k)$ bound was conjectured originally in \cite{esperet_coloring_2009}, and eventually proven by Fox and Pach \cite{fox_touching_nodate}.

\begin{theorem}
    [Fox and Pach]
    \label{thm:fox-pach}
    Let $U$ be an uncrossed string graph with realization $(G, \mathcal P)$ such that each edge $e \in E(G)$ has load at most $k$.
    Then $\chi(U) \le 6ek + 1$.
\end{theorem}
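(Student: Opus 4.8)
The plan is to prove the Fox--Pach coloring bound for uncrossed string graphs by a direct topological/combinatorial argument, passing through the disk-expansion and parallelization machinery already set up in Section~\ref{sec:uncrossed}. First I would reduce to a clean setting: take a load-$k$ path-realization $(G,\mathcal{P})$ of $U$, fix an uncrossed parallelization, and do disk-expansion at every node of $G$. This produces a planar drawing in which each path $P \in \mathcal{P}$ becomes a disjoint (non-self-crossing) curve $\gamma_P$, and two curves $\gamma_P, \gamma_Q$ are vertex-disjoint in $G$ exactly when they are drawn disjointly. Crucially, ``$P$ and $Q$ share a vertex $v$'' becomes ``$\gamma_P$ and $\gamma_Q$ both pass through the disk $D_v$ of $v$, and do so in a \emph{touching} (non-crossing) fashion'', since the parallelization is uncrossed. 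So $U$ is realized as the intersection graph of a family of curves that pairwise touch but never cross; this is precisely the class for which Fox and Pach prove $\chi \le 6ek+1$ where $k$ bounds the number of curves through any one touching point --- here the touching points are the node-disks $D_v$ and the bound on multiplicity is the load $k$.

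The heart of the argument I would give is the coloring scheme itself. The standard approach for ``touching'' curve families is a charging/weighting argument: assign to each curve $\gamma_P$ a weight (say $1$), and at each node-disk $D_v$ the at most $k$ curves through it are linearly ordered by the parallelization (the order $\le_e$ along the edges entering $D_v$, which is consistent all the way around the disk because the curves don't cross inside it). One then shows that there is always a curve that is ``extreme'' --- first or last in the order at a large fraction of its touching points --- and such a curve has bounded degree into the rest of the graph; removing it and inducting (or running greedy in the resulting degeneracy order) yields the bound. More precisely, to get the sharp constant $6ek$ I expect one needs the probabilistic counting argument of Fox and Pach: pick a random threshold, keep only curves whose index at each touching point is below it, argue the kept subgraph is a forest-like / planar structure, and conclude a $6e$-factor blow-up over $k$. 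I would set this up by formalizing ``at touching point $D_v$ the curves through it are ordered $1,\dots,k_v$ with $k_v\le k$'' and then apply the forest-extraction lemma.

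The main obstacle, and the step I would flag as requiring the most care, is verifying that disk-expansion genuinely turns an uncrossed path-family into a \emph{touching-curve} family with the right local structure --- in particular that the linear order of curves through $D_v$ is globally consistent around the whole boundary of $D_v$ (so that ``extreme'' is well-defined) and that no two curves of the \emph{same} path create a spurious self-touching that inflates the count. The definition of uncrossed parallelization in Section~\ref{sec:uncrossed} was stated to rule out crossings only between distinct paths and self-crossings of a single path; one must check it also controls how a path re-enters a node-disk, so that the ``load'' really is the multiplicity parameter Fox--Pach need. Once that translation is pinned down, the coloring bound is essentially a black-box invocation of their touching-curve result, so I would present the proof as: (i) realize $U$ via disk-expansion as a touching-curve intersection graph with touching-multiplicity $\le k$; (ii) invoke the Fox--Pach forest-extraction lemma to get a degeneracy ordering; (iii) conclude $\chi(U) \le 6ek+1$ by greedy coloring along that ordering.
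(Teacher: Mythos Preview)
Your high-level shape --- a probabilistic edge-density bound followed by degeneracy and greedy coloring --- matches the paper's description of the Fox--Pach argument, but the core mechanism you propose does not work. First, the at most $k$ curves through a node-disk $D_v$ are non-crossing chords, and such chords are \emph{not} linearly ordered in general (one chord can have two disjoint chords on the same side, neither nested in the other), so a curve has no well-defined ``index'' at $v$. Second, even where an order happens to exist it is local to $v$: a single curve sits at unrelated positions at different nodes, so ``keep only curves whose index at each touching point is below a random threshold'' does not pick out a coherent subfamily of curves. The ``extreme curve'' degeneracy heuristic founders on the same issue. You correctly flag the global consistency of this order as the main obstacle --- and it is fatal to this line, not a detail to be patched.

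The argument the paper sketches (``similar to a standard proof of the crossing-number inequality'') avoids any ordering. The structural lemma is that the \emph{adjacent-pair} graph --- with an edge $PQ$ whenever the chords of $P$ and $Q$ bound a common face inside some disk $D_v$ --- is planar: draw each such edge inside that shared face, then contract every curve to a point. Now sample each path independently with probability $p=1/k$. For any $PQ\in E(U)$, fix one shared node $v$; with probability at least $p^{2}(1-p)^{k-2}\ge 1/(ek^{2})$ both $P,Q$ survive while every other path through $v$ dies, which forces $P$ and $Q$ to become adjacent at $v$. The adjacent-pair graph on the surviving curves is simple and planar, hence has at most $3$ times the number of survivors; comparing expectations gives $|E(U)|\le 3ek\,|\mathcal P|$. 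Since the class is hereditary, $U$ is $6ek$-degenerate and $\chi(U)\le 6ek+1$. The missing ingredients, then, are i.i.d.\ sampling of whole curves (not a local-index threshold) together with the order-free planarity of adjacent pairs.
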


The result of Fox and Pach has not been published, however the proof is elegant and short.
It establishes via a probabilistic proof (similar to a standard proof giving a bound on the crossing number of a graph) that $U$ has at most $3ek|\mathcal P|$ edges.
Hence a greedy colouring procedure that removes the minimum degree vertex (which has degree at most $6ek$) and colours the rest of the graph inductively obtains a $6ek+1$ colouring.
We were not initially aware of Fox and Pach's result and independently proved a weaker bound that $\chi(U) \le O(k^2 \log |\mathcal P|)$.
For the sake of completeness we present this proof in \Cref{sec:colouring-bound}.

We now use the colouring result to establish an approximate integer decomposition property and prove the following theorem, which is a generalization of \Cref{thm:MEDP} with weighted profits on $E(H)$.

\begin{theorem}
    \label{thm:max-rounding-full}
    There exists a universal constant $\gamma \ge 1$ such that the following holds.
    Let $(G, H, u)$ be a planar maximization instance where $u$ is integral, equipped with per-unit profits $w(h) \ge 0$ for $h \in E(H)$.
    Let $f$ be a feasible uncrossed multiflow.
    There exists a feasible integral multiflow $f^*$ that is uncrossed
    and achieves value $|f^*|_w \ge \frac 1 \gamma |f|_w$.
    This solution can be computed in polynomial time.
\end{theorem}

\begin{proof}
    We first present the case where the capacities are unit-valued $u = \vec 1$.
    We show how to handle general capacities at the end of the proof.
    
    Let $\beta = |f|_w$.
    Let $\mathcal P$ be the family of support paths for $f$.
    Since $u = \vec 1$, we have that $f(P) < 1$ for each $P \in \mathcal P$.
    Now, let $k$ be a positive integer such that $kf$ is integral.
    Since $kf$ is integral, for each $P$ we have that $f(P) = \alpha_P / k$ where $\alpha_P \le k$ is a positive integer.
    By treating each $P$ as $\alpha_P$ distinct paths, we may assume that each path routes exactly $\frac 1 k$ units of flow.
    Then, since $\sum_{P : e \in P} f(P) \le 1$ by feasibility, at most $k$ paths use each edge of $G$.
    
    Let $U$ be the edge-intersection graph of $(G, \mathcal P)$, which has load at most $k$.
    By \Cref{thm:fox-pach}, $U$ is $b$-colourable where $b = O(k)$.
    Now for each colour class $i$, we let $f_i$ denote the flow obtained by sending one unit of flow on each path in that colour class, and zero flow on all other paths.
    Then $f_i$ is integral and satisfies that $\sum_{P : e \in P} f_i(P) \le 1$ for each $e \in E(G)$, and $kf = \sum_{i = 1}^b f_i$ since each path's flow was blown up by a factor of $k$.
    Since the value of $kf$ is $k\beta$, at least one of the $f_i$ flows has a value of $k \beta / b$.
    Since $b = O(k)$, then this flow has value $\Omega(1)$ times that of $f$, as desired.

    However, this proof is existential thus far, and does not immediately give a polynomial time algorithm because $k$ could be very large.
    To fix this, we first have a pre-processing step where we convert $f$ into a feasible flow $\hat f$ that uses a subset of $f$'s support paths (so that it remains uncrossed) such that each value $\hat f(P)$ is a multiple of $\frac 1 {\poly(n)}$, and so that $|\hat f|_w$ is a large fraction of $|f|_w$.
    We then run the above procedure on $\hat f$ instead of $f$, giving a polynomial time algorithm.
    We prove the following.
    \begin{lemma}
    \label{lemma:randomized-rounding-log}
    Let $\epsilon \in (0, 1)$ be any constant.
    There exists a positive constant $N_{\epsilon}$ such that the following holds.
    Let $(G, H, u = \vec 1)$ be a unit-capacity planar maximization instance on $n \ge N_{\epsilon}$ nodes, equipped with per-unit profits $w(h) \ge 0$ for $h \in E(H)$.
    Let $f$ be a feasible multiflow such that $|f|_w \ge w_{\max} = \max_{h \in E(H)} w(h)$.
    There exists a multiflow $\hat f$ satisfying:
    \begin{enumerate}
        \item $\hat f$ is feasible,
        \item the support paths of $\hat f$ are a subset of the support paths of $f$,
        \item $|\hat f|_w \ge (1 - \epsilon) |f|_w$, and
        \item for every path $P$, $\hat f(P)$ is a multiple of $\frac 1 {\log^2 n}$.
    \end{enumerate}
    There is a polynomial time algorithm that computes such an $\hat f$.
    \end{lemma}
    We prove this in \Cref{sec:polytime-random-rounding} using a randomized rounding and re-scaling argument (which can be de-randomized to give a deterministic algorithm).
    
    The two conditions that $|f|_w \ge w_{\max}$ and $n \ge N_{\epsilon}$ are only minor technicalities.
    If $|f|_w < w_{\max}$, then we need not bother with rounding $f$: simply take $f^*$ as routing one unit of flow on any $st$-path where $w(s, t) = w_{\max}$.
    The $n \ge N_{\epsilon}$ condition is also not a huge obstacle.
    If $n < N_{\epsilon}$ then $n$ is bounded by a constant, and we can obtain $\hat f$ whose fractionality is constant as follows.
    We set up a linear program formulation of the $w$-weighted maximum multiflow problem using path-variables $x(P)$, but only using paths $P$ where $P$ is a support path of $f$. 
    Standard linear programming theory implies there is a feasible solution $\hat f$ whose value is at least that of $f$, is uncrossed (since its support paths are a subset of $f$'s), and whose fractionality is bounded as a function of $n \le N_{\epsilon}$, but this is still constant.

    This concludes the proof of the unit capacity case.
    For general capacities, we reduce to the unit capacity case.
    Ultimately we want to split each edge into parallel unit capacity copies, but need to be careful because this is not a polynomial time operation in general.
    A key fact we use is the following.
    \begin{lemma}
    \label{lemma:polybounded-support}
        Let $f$ be an uncrossed flow in a planar multiflow layout $(G, H)$.
        Then $f$ has at most  $O(|\mathcal F(G)||E(H)|)$ support paths, where $\mathcal F(G)$ is the set of faces of $G$.
    \end{lemma}
    \begin{proof}
        It suffices to show that for each $h = (s, t) \in E(H)$, $x^h$ has only $O(|\mathcal F(G)|)$ support paths.
        Order the support paths of $x^h$ clockwise about $s$ using the uncrossed parallelization, say $P_1, \dotsc, P_\ell$.
        Observe that each face $F$ appears in the region bounded between a unique consecutive pair of paths $P_i, P_{i + 1}$ in this clockwise order.
        Moreover the region between each consecutive pair of paths $P_i, P_{i  + 1}$ contains at least one face.
        The result follows.
    \end{proof}
    
    The reduction to unit capacities now proceeds as follows.
    Define an integral flow $f_{\text{int}}$ by $f_{\text{int}}(P) = \lfloor f(P)\rfloor$ for all $P$.
    Let $f_{\text{frac}} = f - f_{\text{int}}$, which satisfies that $f_{\text{frac}} (P) < 1$ for all $P$, and define capacities $u'(e) = u(e) - f_{\text{int}}(e)$.
    Observe that $f_{\text{frac}}$ is feasible for capacities $u'$.
    Let $K$ be the number of support paths of $f_{\text{frac}}$, which by \Cref{lemma:polybounded-support} is polynomially bounded in $|E(G)| + |E(H)|$ (recall that the number of faces of a planar graph is linear in the number of edges).
    Now, since $f_{\text{frac}}(P) < 1$ for all $P$, it follows that $|f_{\text{frac}}| = \sum_Pf_{\text{frac}}(P) < K$.
    Therefore $f_{\text{frac}}$ is still feasible if we bound the capacities to $u''(e) = \min \set{u'(e), K}$, but these capacities are polynomially bounded in the size of $G$ and $H$.
    Then splitting each edge $e$ into $u''(e)$ many unit-capacity copies, we apply the algorithm for the unit capacity case to $f_{\text{frac}}$ and obtain an integral multiflow $f^*$ that is feasible for $(G, H, u'')$ and has value $|f^*|_w \ge \frac 1 \gamma |f_{\text{frac}}|_w$.
    The final flow we output is $f_{\text{int}} + f^*$, which is feasible for $(G, H, u)$ and has value $|f_{\text{int}} + f^*|_w \ge |f_{\text{int}}|_w + \frac 1 \gamma |f_{\text{frac}}|_w \ge \frac 1 \gamma |f|_w$.
\end{proof}

It is straightforward to generalize this to the node-capacitated setting - the colouring and integer decomposition property proceed in exactly the same way.

\subsection{Inapproximability of Maximum Uncrossed Fractional Multiflow}
\label{sec:inapprox}

Next we prove \Cref{thm:inapprox} on the inapproximability of maximum uncrossed (fractional) multiflow.
We make use of a strong inapproximability result given by Chuzhoy, Kim, and Nimavat for maximum edge-disjoint paths (MEDP) and maximum node-disjoint paths (MNDP) in planar graphs; i.e. maximum integer multiflow for unit capacities.
One minor technicality is that in their formulation, a feasible solution is only allowed to route one unit of flow for each demand.
We thus call these problems {\em capped MEDP} and {\em capped MNDP} respectively.

\begin{theorem}[Theorems 1.1 and 1.2, \cite{chuzhoyjournalv017a006}]
\label{thm:hardnesschuzhoy}
For every constant $\epsilon > 0$ , there is no $2^{\Omega(\log^{1-\epsilon} n)}$-approximation 
polynomial-time algorithm for capped MEDP or capped MNDP, assuming that $NP \not\subseteq DTIME(n^{polylog~n})$.
Moreover, there is no $n^{O\left(\frac{1}{\left(\log \log n\right)^2}\right)}$-approximation 
polynomial-time algorithm for capped MEDP or capped MNDP, assuming that for some constant $\delta > 0$,
$NP \not\subseteq DTIME(2^{n^{\delta}})$.
These results hold even when the input graph is a maximum degree $3$ planar graph, or even if $G$ is a wall graph.
\end{theorem}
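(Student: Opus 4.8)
Since this statement is quoted verbatim from Chuzhoy, Kim and Nimavat \cite{chuzhoyjournalv017a006}, a proof proposal amounts to sketching how their argument goes. The plan is a hardness-amplifying reduction: begin from a problem that is NP-hard to approximate within a modest (constant or polylogarithmic) factor, encode it as a disjoint-paths instance living on a wall-structured graph, and then recursively compose the construction with itself so that the inapproximability gap grows super-polynomially while the instance size grows only polynomially at each composition step. The base case (Step~1) starts from a bounded-degree $2$-CSP (or $3$-coloring) whose approximation hardness is supplied by the PCP theorem --- say, the hardness of distinguishing satisfiable instances from those in which only a $(1-\epsilon)$ fraction of constraints is satisfiable. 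The first reduction turns such an instance into a ``level-$1$'' routing instance: one builds a planar, wall-structured graph in which terminal pairs correspond to variables and to constraints, arranged so that a \emph{canonical} routing that realizes a satisfying assignment routes essentially all pairs, whereas in a ``no'' instance any routing is forced, by planarity and congestion, to leave a noticeable fraction of pairs unrouted. This yields a modest gap $g_1$. Because the host graph is a subdivided grid, the wall-graph and maximum-degree-$3$ conditions come for free, and EDP versus NDP are interchangeable here up to local vertex/edge replacement.

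The crux (Step~2) is a self-composition operation: given a hard instance family with gap $g$ and size $N$, one takes many disjoint copies of it, lays them out inside a large wall, and wires the terminals of the copies together through a routing network so that realizing a path in the composed instance forces realizing a path deep inside one of the sub-copies. The quantitative statement to prove is that the composition raises the gap to a fixed power, $g \mapsto g^{\Omega(1)}$ (morally, squaring), at the cost of a polynomial blow-up $N \mapsto N^{O(1)}$. Iterating this $L$ times from the level-$1$ instance gives gap $g_1^{\Theta(2^L)}$ and size $N_1^{\Theta(c^L)}$. The two bounds in the statement come from two parameter regimes: taking $L = \Theta(\log\log n)$ and allowing quasi-polynomial running time yields the $2^{\Omega(\log^{1-\epsilon} n)}$ bound under $NP \not\subseteq DTIME(n^{polylog~n})$, while a more aggressive choice of $L$ together with subexponential running time yields the $n^{\Omega(1/(\log\log n)^2)}$ bound under $NP \not\subseteq DTIME(2^{n^{\delta}})$. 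Throughout the recursion one must preserve the structural invariants that make the composition work --- the terminal set of each copy stays ``well-linked'' inside that copy, the graph stays planar and wall-structured, and the maximum degree stays bounded (reduced to $3$ at the very end by splitting any high-degree vertex along a short path, which changes neither planarity nor the NDP value). The completeness direction is then routine (satisfying assignments compose to give the claimed large routing), and for soundness one argues that any routing of a $1/g$-fraction in the composed instance can be projected down to a routing of a comparable fraction inside some sub-copy.

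The main obstacle is precisely the soundness half of Step~2: proving that the gap genuinely amplifies. It is easy that a good solution composes; the work is in showing that \emph{every} solution of the composed instance --- including adversarial ones that route paths across copies and abuse the wiring network in unintended ways --- can be decoded into a solution of a single sub-copy with only a constant-factor loss. One must bound this loss uniformly so that it does not accumulate over the $L$ levels of recursion, and do so while simultaneously keeping the instance planar, low-degree and wall-structured at every level. This interplay --- amplifying the gap, controlling the per-level error, and respecting all the topological constraints at once --- is the delicate technical core of the Chuzhoy--Kim--Nimavat construction.
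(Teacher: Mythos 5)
This statement is not proved in the paper at all: it is an external result, quoted verbatim (as ``Theorems 1.1 and 1.2'') from Chuzhoy, Kim and Nimavat \cite{chuzhoyjournalv017a006} and used downstream as a black box in the proof of the inapproximability theorem for U-constrained multiflow. So the only ``proof'' the paper supplies, and the only one it needs, is the citation. Your proposal is instead an attempt to summarize the internal argument of the cited work, which is a different task, and as a proof it has a genuine gap: the entire weight of the theorem rests on the soundness of the gap-amplification/composition step --- that \emph{every} routing of the composed instance, however adversarial, decodes into a routing of a single sub-copy with a uniformly bounded loss that does not accumulate over the recursion levels --- and you explicitly label this ``the delicate technical core'' without proving it. Likewise, the two quantitative bounds in the statement (the $2^{\Omega(\log^{1-\epsilon} n)}$ bound under $NP \not\subseteq DTIME(n^{\mathrm{polylog}\, n})$ and the $n^{\Omega(1/(\log\log n)^2)}$ bound under $NP \not\subseteq DTIME(2^{n^{\delta}})$) come out of precise bookkeeping of gap growth versus instance-size growth versus reduction running time across the levels, which you only gesture at. A sketch that defers exactly the hard step cannot be accepted as a proof of the theorem.

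On accuracy of the sketch itself: the broad shape (start from a PCP-based hard problem, encode it as a disjoint-paths instance on wall/grid-structured graphs, recursively boost the gap, and choose the number of levels according to which complexity assumption one is willing to make) is consistent with the cited work, but several specifics are off or unverified. The Chuzhoy--Kim--Nimavat reduction starts from 3COL(5) rather than a generic bounded-degree 2-CSP, and the amplification is routed through an intermediate graph-partitioning problem with carefully maintained well-linkedness structure rather than a literal self-composition of routing instances with per-level ``gap squaring''; the claimed size/gap recurrences are therefore asserted rather than derived. For the purposes of this paper you should do what the authors do: cite \cite{chuzhoyjournalv017a006} for this statement and spend your effort on the reduction that actually belongs to this paper, namely converting an approximation algorithm for maximum uncrossed multiflow into one for capped MEDP/MNDP via the rounding theorem.
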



We now prove Theorem~\ref{thm:inapprox} which relies on this result.
We focus on the edge-capacitated version (the node-capacitated version is similar).

\begin{proof}[Proof of \Cref{thm:inapprox}]
    We prove that result by giving a polytime reduction from capped MEDP in wall graphs to maximum uncrossed multiflow in wall graphs, that preserves the approximation ratio up to a constant factor.
    For an instance $(G, H)$ of the capped MEDP problem where $G$ is a wall graph (which is planar and has maximum degree 3), define the following quantities.
    \begin{itemize}
        \item Let $D_E(G, H)$ denote the maximum number of demands in $H$ which can be routed on edge-disjoint paths in $G$ (i.e. the optimal value for capped MEDP).
        \item Let $U_E(G, H)$ denote the maximum value of a (fractional) uncrossed multiflow for $(G, H, u = \vec 1)$.
    \end{itemize}
    
    
    Observe that any capped MEDP solution for $(G, H)$ is a feasible (integral) uncrossed multiflow for $(G, H, u = \vec 1)$.
    To see this, consider a collection of edge-disjoint paths that form a capped MEDP solution.
    Recall from \Cref{sec:uncrossed} that two edge-disjoint paths cross only if they cross at a node (rather than a longer common subpath).
    Since $G$ has maximum degree 3, a node crossing is not possible.
    Thus we have $D_E(G, H) \le U_E(G, H)$.
    
    Suppose we have a $\beta$-approximation algorithm for maximum uncrossed (fractional) multiflow in wall graphs.
    We obtain an approximation algorithm for capped MEDP as follows.
    We run our $\beta$-approximation algorithm for maximum uncrossed multiflow on $(G, H, u = \vec 1)$ to obtain flow $f$ whose value is at least $U_E(G, H) / \beta$.
    We apply the polytime algorithm from \Cref{thm:max-rounding-full} to compute a feasible uncrossed integral multiflow $f'$ whose value is $1/ \gamma$ that of $f$.
    It may still not be a valid capped MEDP solution, since it may route more than one flow path for the same commodity.
    However, since $G$ has maximum degree 3, it routes on at most 3 paths for each commodity, so by selecting exactly one path for each commodity we obtain a capped MEDP solution whose value is at least $\frac 1 3$ that of $f'$.
    Thus this algorithm outputs a capped MEDP solution whose value is at least $\frac {U_E(G, H)} {3 \gamma \beta} \ge \frac {D_E(G, H)} {3 \gamma \beta}$, and is thus a $3 \gamma \beta$-approximation algorithm.

\end{proof}

\section{Congestion Setting: Rounding}
\label{sec:congestion}

In this section, we focus on planar congestion instances $(G, H, u, d)$ where the demands must be fully routed.
We show Theorem \ref{thm:congestion}, which states that we can convert (fractional) strongly uncrossed multiflows into integral flows that violate the capacities by a factor of at most 2, and if we accept an additional additive error of less than $d_{\max} = \max_{h \in E(H)} d(h)$ the flow can be made unsplittable.
\Cref{sec:sectors} presents the relevant properties of strongly uncrossed flows, and \Cref{sec:congestion-main} gives the proof of \Cref{thm:congestion}.

\subsection{Clockwise Orderings and Sectors}
\label{sec:sectors}
Let $f$ be a strongly uncrossed multiflow for planar congestion instance $(G, H, u, d)$.
Let $\mathcal P$ be the set of support paths, and let $\mathcal P^h$ be the set of support paths routing demand $h \in E(H)$.
Fix a strongly uncrossed parallelization $\phi$.
We parallelize the edges and perform disk-expansion at each node (see \Cref{sec:uncrossed}) so that the paths become node-disjoint.
Then, for each commodity $h = (s, t) \in E(H)$, we add a node and edges into the disk corresponding to $s$ to join the endpoints of the paths for $h$, and similarly for the disk of $t$.
This can be done in a planar fashion for all demands simultaneously since the flow is strongly uncrossed (see \Cref{sec:strongly-uncrossed} and \Cref{fig:strongly_uncrossed_star}).
We denote the new graph $G_\phi$.
For a path $P \in \mathcal P$, we let $\widehat P$ be its corresponding path in $G_\phi$.
For $P, Q \in \mathcal P^h$, $\widehat P$ and $\widehat Q$ are internally node-disjoint paths that share their endpoints.
For $P, Q \in \mathcal P$ routing different demands, $\widehat P$ and $\widehat Q$ are (completely) node-disjoint.


\begin{figure}[htbp]
     \centering
     \begin{subfigure}[b]{0.45\textwidth}
         \centering
         \begin{tikzpicture}

    \draw[fill,nearly transparent,gray] (0:0) circle[radius=1];
    
    \node (b1) at (51*6:1) {};
    \node (b2) at (51*2:1) {};
    \node (b3) at (51*5:1) {};
    \node (g1) at (51*3:1) {};
    \node (g2) at (51*4:1) {};
    \node (o1) at (51*0:1) {};
    \node (o2) at (51*1:1) {};
    \draw[fill=blue] (b1) circle (3pt);
    \draw[fill=blue] (b2) circle (3pt);
    \draw[fill=blue] (b3) circle (3pt);
    \draw[fill=ForestGreen] (g1) circle (3pt);
    \draw[fill=ForestGreen] (g2) circle (3pt);
    \draw[fill=red] (o1) circle (3pt);
    \draw[fill=red] (o2) circle (3pt);

    \node (b1') at (51*6:2) {};
    \node (b2') at (51*2:2) {};
    \node (b3') at (51*5:2) {};
    \node (g1') at (51*3:2) {};
    \node (g2') at (51*4:2) {};
    \node (o1') at (51*0:2) {};
    \node (o2') at (51*1:2) {};

    \foreach \v in {b1,b2,b3} {
        \draw[black] (\v) to (\v');
        \draw[line width=3pt, semitransparent, blue] (\v) to (\v');
    }

    \foreach \v in {g1, g2} {
        \draw[black] (\v) to (\v');
        \draw[line width=3pt, semitransparent, ForestGreen] (\v) to (\v');
    }

    \foreach \v in {o1, o2} {
        \draw[black] (\v) to (\v');
        \draw[line width=3pt, semitransparent, red] (\v) to (\v');
    }

    \draw[black] (g1) to[bend left=45] (g2);
    \draw[line width=3pt, semitransparent, ForestGreen] (g1) to[bend left=45] (g2);

    
    
\end{tikzpicture}
     \end{subfigure}
     \hfill 
     \begin{subfigure}[b]{0.45\textwidth}
         \centering
         \begin{tikzpicture}

    \draw[fill,nearly transparent,gray] (0:0) circle[radius=1];
    
    \node (b1) at (51*6:1) {};
    \node (b2) at (51*2:1) {};
    \node (b3) at (51*5:1) {};
    \node (g1) at (51*3:1) {};
    \node (g2) at (51*4:1) {};
    \node (o1) at (51*0:1) {};
    \node (o2) at (51*1:1) {};
    \draw[fill=blue] (b1) circle (3pt);
    \draw[fill=blue] (b2) circle (3pt);
    \draw[fill=blue] (b3) circle (3pt);
    \draw[fill=ForestGreen] (g1) circle (3pt);
    \draw[fill=ForestGreen] (g2) circle (3pt);
    \draw[fill=red] (o1) circle (3pt);
    \draw[fill=red] (o2) circle (3pt);

    \node (b1') at (51*6:2) {};
    \node (b2') at (51*2:2) {};
    \node (b3') at (51*5:2) {};
    \node (g1') at (51*3:2) {};
    \node (g2') at (51*4:2) {};
    \node (o1') at (51*0:2) {};
    \node (o2') at (51*1:2) {};

    \foreach \v in {b1,b2,b3} {
        \draw[black] (\v) to (\v');
        \draw[line width=3pt, semitransparent, blue] (\v) to (\v');
    }

    \foreach \v in {g1, g2} {
        \draw[black] (\v) to (\v');
        \draw[line width=3pt, semitransparent, ForestGreen] (\v) to (\v');
    }

    \foreach \v in {o1, o2} {
        \draw[black] (\v) to (\v');
        \draw[line width=3pt, semitransparent, red] (\v) to (\v');
    }

    \draw[black] (g1) to[bend left=45] (g2);
    \draw[line width=3pt, semitransparent, ForestGreen] (g1) to[bend left=45] (g2);

    \node (bstar) at (-135:0) {};
    \draw[fill=blue, semitransparent] (bstar) circle (3pt);
    \foreach \i in {1,2,3} {
        \draw[blue,thick,decorate, decoration={snake, segment length=2mm, amplitude=0.5mm}] (bstar) to (b\i);
    }
    
    \node (ostar) at (30:0.5) {};
    \draw[fill=red, semitransparent] (ostar) circle (3pt);
    \foreach \i in {1,2} {
        \draw[red,thick,decorate, decoration={snake, segment length=2mm, amplitude=0.5mm}] (ostar) to (o\i);
    }
    
\end{tikzpicture}
     \end{subfigure}

     \caption{(Left) The disk expansion of a node $z$. Paths of the same colour are for the same commodity. The blue and red paths terminate at $z$, while the green path transits through $z$. There is no crossing or quasicrossing at $z$. (Right) We add a node into the disk for each commodity terminating there (blue, red) and connect the new node to the endpoints of that commodity's support paths. Since there is no crossing or quasicrossing at $z$, this operation can be done without causing edge crossings.}
     \label{fig:strongly_uncrossed_star}
\end{figure}


Consider a commodity $h = (s, t) \in E(H)$.
We use the notation $G_\phi[h]$ to mean the subgraph graph of $G_{\phi}$ induced by the paths $\widehat{\mathcal P}^h = \set{\widehat P : P \in \mathcal P^h}$.
In $G_\phi$, the paths of $\widehat P^h$ have a clockwise cyclic ordering about $s$, say $\widehat P_1, \dotsc, \widehat P_k$.
Informally, a {\em sector} is the region between consecutive paths in the clockwise ordering.
We make this more precise as follows.


Consider two paths $\widehat P_i$ and $\widehat P_j$ in $G_\phi$.
Orient $\widehat P_i$ from $s$ to $t$, and orient $\widehat P_j$ from $t$ to $s$.
The directed curve $\widehat P_i \bullet \widehat P_j$, obtained by concatenating these oriented curves, is a simple closed (Jordan) curve, and partitions the plane into two regions.
We define
$\CW^\circ(\widehat P_i, \widehat P_j)$ to be the set of points $q$ in $\mathbb R^2 \setminus (\widehat P_i \cup \widehat P_j)$ such that $q$ is ``to the right'' of $\widehat P_i \bullet \widehat P_j$; that is, if we draw a directed curve from $q$ to $\widehat P_i \bullet \widehat P_j$ (avoiding $\widehat P_i \bullet \widehat P_j$ except at the extremity), it hits $\widehat P_i \bullet \widehat P_j$ from the right.
We let $\CW(\widehat P_i, \widehat P_j)$ denote the closed region $\CW^\circ(\widehat P_i, \widehat P_j) \cup \widehat P_i \cup \widehat P_j$.
We analogously define
$\ACW^\circ(\widehat P_i, \widehat P_j)$ and $\ACW(\widehat P_i, \widehat P_j)$ by taking the opposite orientation.
Any curve that does not cross and is internally disjoint from $\widehat P_i$ and $\widehat P_j$ either stays entirely within $\CW(\widehat P_i, \widehat P_j)$ or stays entirely within $\ACW(\widehat P_i, \widehat P_j)$.

Now, for our clockwise-ordered family of paths $\widehat P_1, \dotsc, \widehat P_k$, we refer to $\CW^\circ(\widehat P_i, \widehat P_{i + 1})$ for $i = 1, \dotsc, k$ (addition modulo $k$) as the {\em open sectors} of $f^h$, and define the {\em closed sectors} similarly using $\CW(\widehat P_i, \widehat P_{i + 1})$.
The open sectors partition $\mathbb R^2 \setminus (\widehat P_1 \cup \dotsb \cup \widehat P_k)$. 
One sector contains the infinite face of $G_\phi[h]$; we call this the {\em outer sector} of $h$, and the others the {\em inner sectors} of $h$.
The {\em outer paths} of $h$ are the two paths $P,Q \in \mathcal P^h$ whose corresponding paths $\widehat P, \widehat Q \in \widehat{\mathcal P}^h$ define the outer sector of $h$.

Now consider an edge $e$ of the original graph $G$ that appears in $\mathcal P^h$.
If $e$ lies in any outer path of $\mathcal P^h$, it is an {\em outer edge} of $h$, otherwise it is an {\em inner edge} of $h$.
For our congestion result, we exploit the following key properties.


\begin{lemma}
    \label{lemma:private-region}
    
    Let $f$ be a strongly uncrossed walk-multiflow, with a fixed strongly uncrossed parallelization $\phi$.
    For distinct commodities $h, h' \in E(H)$, the sets of inner edges of $h$ and $h'$ are disjoint.
\end{lemma}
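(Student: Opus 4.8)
The plan is to argue by contradiction, using the characterization that an uncrossed path internally disjoint from two reference paths $P_i, P_j$ must stay entirely inside $\CW(P_i,P_j)$ or entirely inside $\ACW(P_i,P_j)$, together with the fact that strong uncrossing (no quasicrossings, plus the ability to planarly attach a star at each terminal) means that after parallelization and disk-expansion in $G_\phi$, the paths of commodity $h$ are internally node-disjoint from each other and \emph{completely} node-disjoint from all paths of commodity $h'$. So suppose for contradiction that some edge $e$ of $G$ is an inner edge of both $h$ and $h'$; that is, $e$ appears in a support walk $P$ of $f^h$ that is not an outer walk of $h$, and in a support walk $Q$ of $f^{h'}$ that is not an outer walk of $h'$. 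Passing to $G_\phi$, we obtain a copy $\tilde e$ of $e$ lying on a path $\tilde P$ of $G_\phi[f^h]$ (an inner path of $h$) and a copy $\tilde e'$ on a path $\tilde Q$ of $G_\phi[f^{h'}]$ (an inner path of $h'$); since the two commodities are completely disjoint in $G_\phi$, $\tilde e \ne \tilde e'$, but they are parallel copies of the same original edge $e$, so they lie in the same channel of the widened graph.

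First I would set up the topological picture at commodity $h$. Let $P_1, \dots, P_k$ be the clockwise-ordered support paths of $f^h$ in $G_\phi$, with the outer sector being $\CW^\circ(P_1, P_k)$ (say), so the outer paths are $P_1$ and $P_k$ and $\tilde P = P_a$ for some $1 < a < k$ (or possibly $\tilde P$ is not among the extreme two). The union $P_1 \cup \dots \cup P_k$ together with the terminal stars bounds a closed region; every inner path lies in the ``inside.'' The key point is that $\tilde Q$, being a path of commodity $h'$, is disjoint from every $P_i$ in $G_\phi$ (vertex-disjoint, hence curve-disjoint after the planar disk drawing), \emph{except} that its copy $\tilde e'$ of $e$ runs parallel and immediately adjacent to $\tilde e \subseteq P_a$ inside $e$'s channel. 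Since $\tilde Q$ is a full $s'$--$t'$ path for $h'$ whose endpoints lie at the terminals $s', t'$ of $h'$, and these terminals are vertices of $G$ that are \emph{not} interior to $e$'s channel, the path $\tilde Q$ must leave the channel of $e$ on both sides of $\tilde e'$ and travel to $s'$ and $t'$. I would then show that because $\tilde Q$ is disjoint from all the $P_i$ and from the terminal star of $h$, and because $\tilde e' \subset \CW^\circ$ or $\ACW^\circ$ region structure is nested, $\tilde Q$ is trapped: it lies entirely within one of the two closed sectors of $h$ adjacent to the channel of $e$ — but that sector is an \emph{inner} sector (since $P_a$ is an inner path, the channel of $e$ along $P_a$ borders two inner sectors, or one inner and the outer only if $P_a$ is outer, contradiction). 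An inner sector is a bounded region of the plane, so $\tilde Q$ together with its endpoints $s', t'$ is confined to a bounded region that does not meet the terminal star of $h$. This is not yet a contradiction by itself, so the next step is to iterate / symmetrize.

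Now I symmetrize using $h'$: by the identical argument with the roles of $h$ and $h'$ swapped, $\tilde P$ (an inner path of $h'$... wait, $\tilde P$ is an inner path of $h$, containing $\tilde e$) — more precisely, the path $\tilde P$ of commodity $h$ must lie in a single closed inner sector of $h'$, confined to a bounded region disjoint from the terminal star of $h'$. So we have: $s', t' \in$ (bounded inner sector of $h$), and $s, t \in$ (bounded inner sector of $h'$), where $s,t$ are the terminals of $h$ and $s',t'$ of $h'$. The contradiction I would extract is a \emph{mutual containment} impossibility: the inner sector of $h$ containing $\{s',t'\}$ is a region whose boundary consists of subcurves of two inner paths $P_{a}, P_{a\pm1}$ of $h$, and this region contains the whole subpath of $\tilde Q$ between its two departures from $e$'s channel, hence contains a subarc connecting the two sides of $e$'s channel ``around the far side.'' Symmetrically the inner sector of $h'$ contains an analogous arc of $\tilde P$ around the far side of $e$'s channel from the other direction. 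Placing both arcs near the single channel of $e$ and using that the two commodities' paths are globally disjoint in $G_\phi$, these two arcs, which lie on opposite sides of $\tilde e \parallel \tilde e'$ inside the channel but then must each ``wrap around'' to close up with their respective terminal pairs, force a crossing of $\tilde P$ with some $P_i$ or of $\tilde Q$ with some $Q_j$ — i.e. a crossing within a single commodity, contradicting that $f$ is (strongly) uncrossed. Concretely I would phrase it as: the channel of $e$ separates the two sides; one commodity's outer structure must enclose $e$ on one side and the other commodity's outer structure must enclose $e$ on the same side to keep its inner path inside, but both cannot simultaneously place their terminals outside while keeping the shared edge inner.

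The main obstacle I expect is making this ``trapping'' argument fully rigorous, i.e. carefully proving the claim that a path of $h'$ meeting the channel of an inner edge of $h$ must lie inside a single inner sector of $h$ and cannot reach the terminals of $h'$ without either crossing a path of $h$ (contradicting pairwise uncrossed) or being itself an outer-bordering path (contradicting that $e$ is inner for $h'$). This is where the precise definitions of $\CW^\circ/\ACW^\circ$, the closed-curve Jordan argument, and the planar attachment of terminal stars (strong uncrossing) all have to be combined; the figure \Cref{fig:quasicrossing-super} is exactly the configuration that strong uncrossing rules out and that would otherwise break the argument. I would handle it by first proving the clean sub-statement: \emph{if a walk $W$ of commodity $h'$ uses an edge $e$ that is an inner edge of $h$, then $W$ lies entirely within the closure of a single inner sector of $h$}; this follows because $W$'s curve in $G_\phi$ is disjoint from all support curves of $h$ and from $h$'s terminal star, so it lies in one connected component of the plane minus all of those, and every such component other than the outer one is a single inner sector, while $W$ cannot be in the outer component since $e$ lies strictly inside the union bounded by $h$'s outer paths. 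Then the lemma follows: if $e$ were inner for both $h$ and $h'$, then $h'$'s walk using $e$ lies in an inner sector of $h$ (so its endpoints $s',t'$ are strictly inside $h$'s ``disk''), and symmetrically $h$'s walk using $e$ lies in an inner sector of $h'$; but ``strictly inside $h$'s disk'' for the pair $\{s',t'\}$ and ``strictly inside $h'$'s disk'' for $\{s,t\}$ are contradictory because the outer paths of $h$ pass through $s,t$ and the outer paths of $h'$ pass through $s',t'$, so $s,t$ cannot be interior to $h'$'s region while $s',t'$ are interior to $h$'s region — the two bounded regions would have to contain each other's boundary-defining terminals, which is impossible for two bounded Jordan-type regions whose defining curves are disjoint.
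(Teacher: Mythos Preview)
Your overall approach is correct and closely parallels the paper's, just organized as a contradiction via mutual containment rather than a direct argument. The paper proceeds by two short claims: (i) all of $G_\phi[f^{h'}]$ lies in a single sector of $h$, since paths of distinct commodities are completely disjoint in $G_\phi$ and the paths of $h'$ are connected at their terminal stars; (ii) if that sector is inner (hence bounded), then the bounding paths $P_i,P_{i+1}$ of $h$ --- and therefore all of $G_\phi[f^h]$ --- lie in the infinite face of $G_\phi[f^{h'}]$, i.e.\ in the outer sector of $h'$. So without loss of generality $G_\phi[f^{h'}]$ sits in the outer sector of $h$; then the $h'$-copy of $e$ is in the infinite face of $G_\phi[f^h]$, and since parallel copies of $e$ are adjacent in the embedding, an $h$-copy of $e$ must lie on the boundary of that face, forcing $e$ onto an outer path of $h$. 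Your final ``two disjoint Jordan regions cannot each contain the other's boundary terminals'' contradiction is exactly claim~(ii) rephrased, and your key sub-statement is the contrapositive of the paper's last step.

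The one place your write-up needs tightening is the justification ``$e$ lies strictly inside the union bounded by $h$'s outer paths.'' It is not the original edge $e$ but the specific copy $\tilde e'$ on $W$ whose location must be pinned down, and $\tilde e'$ being ``near'' an inner $h$-copy does not by itself say which sector it is in. The clean argument is: the $h$-path whose copy of $e$ is nearest to $\tilde e'$ in the parallel stack is some inner path $P_b$ (since $e$ lies on no outer path of $h$), and a short curve in the $e$-channel from $\tilde e'$ to that copy meets $G_\phi[f^h]$ only at $P_b$; hence $\tilde e'$ lies in one of the two sectors adjacent to $P_b$, both of which are inner since $1<b<k$. With that step filled in, your proof and the paper's coincide.
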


\begin{proof}
    To prove the result, we show that if an edge $e$ is shared between the two single-commodity flows in $G$, then it is an outer edge for one of the commodities.
    \begin{claim}
    $G_\phi[h']$ is entirely contained within a single sector of $h$, and $G_\phi[h]$ is entirely contained within a single sector of $h'$.
    \end{claim}
    \begin{subproof}
        By symmetry, it suffices to prove the first part.
        Recall that in $G_\phi$, two paths for distinct commodities are completely disjoint, so each individual $\widehat P \in \widehat {\mathcal P}^{h'}$ lies in at most one sector of $h$.
        However, all paths of $\widehat {\mathcal P}^{h'}$ share common terminals in $G_\phi$, so it must be the same sector for all of $\widehat {\mathcal P}^{h'}$.
    \end{subproof}

    \begin{claim}
        Either $G_\phi[h']$ is contained in the outer sector of $h$, or $G_\phi[h]$ is contained in the outer sector of $h'$.
    \end{claim}
    \begin{subproof}
        By the previous claim and by symmetry, it suffices to show that if $G_\phi[h']$ is contained within an inner sector of $h$, then $G_\phi[h]$ is contained within the outer sector of $h'$.
        Suppose $G_\phi[h']$ is contained within the inner sector of $h$ defined by paths $\widehat P_i$ and $\widehat P_{i + 1}$.
        Since $\CW(\widehat P_i, \widehat P_{i+1})$ is an inner sector, it is bounded, which means $G_\phi[h']$ is contained in the interior of the curve $\widehat P_i \bullet \widehat P_j$ rather than the exterior.
        This implies that $\widehat P_i$ and $\widehat P_{i + 1}$ are contained within the infinite face of $G_\phi[h']$, and so therefore are in the outer sector of $h'$.
        Since the entirety of $G_\phi[h]$ is within the same sector of $h'$, it is within the outer sector of $h'$.
    \end{subproof}
    
    Now, without loss of generality, suppose that $G_\phi[h]$ is contained within the outer sector of $h'$.
    Then $G_\phi[h]$ and $G_\phi[h']$ both contain one of the copies of the shared edge $e$.
    The copy belonging to $h$ is contained in the infinite face of $G_\phi[h']$, and (by how $G_\phi$ is constructed) so too is the copy belonging to $h'$.
    So then $e$ is an outer edge of $h'$, not an inner edge.
\end{proof}

Now, associating the circular ordering $P_1, \dotsc, P_k$ to $\mathcal P^h$ corresponding to the clockwise ordering $\widehat P_1, \dotsc, \widehat P_k$ of $\widehat {\mathcal P}^h$, we prove the second key property.

\begin{lemma}
\label{lemma:squeeze-property}
Suppose that in $G$, $P_i, P_j \in \mathcal P^h$ share an edge $e$.
One of the following holds:
\begin{itemize}
    \item every path between $P_i$ and $P_j$ in the circular ordering also uses $e$, or
    \item every path between $P_j$ and $P_i$ in the circular ordering also uses $e$.
\end{itemize}
\end{lemma}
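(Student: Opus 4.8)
The plan is to work in the disk-expanded graph $G_\phi$ where the support walks of $f^h$ have become internally node-disjoint $st$-paths $P_1, \dotsc, P_k$ with a clockwise cyclic ordering about $s$. The statement is really about the cyclic order, so let me think about what it means for $P_i$ and $P_j$ to share an edge $e$ of the original graph $G$. In $G_\phi$, the copies of $e$ used by $P_i$ and $P_j$ are distinct parallel edges, but since the parallelization is uncrossed, these parallel copies appear in a consistent linear order $\le_e$ along the channel of $e$, and this order is compatible with the clockwise order about the endpoints of $e$. The key point I want to extract is: the set $S_e = \{ m : P_m \text{ uses } e \}$ consists of paths whose copies of $e$ occupy a contiguous block in the order $\le_e$; but more importantly, I claim $S_e$ forms a contiguous arc in the cyclic order $P_1, \dotsc, P_k$ about $s$. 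Granting that claim, if both $i, j \in S_e$ then one of the two arcs of the cycle between $i$ and $j$ is entirely contained in $S_e$ — which is exactly the conclusion.

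**The main step: contiguity of $S_e$ in the clockwise order about $s$.**

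This is where the real work is, and it is the step I expect to be the main obstacle. The idea: suppose for contradiction that in the cyclic order we have $P_i, P_\ell, P_j$ with $i, j \in S_e$ but $\ell \notin S_e$, and also some $P_{\ell'}$ on the other arc with $\ell' \notin S_e$ (if no such $\ell'$ exists we are already done). Orient $P_i$ from $s$ to $t$ and $P_j$ from $t$ to $s$; consider the Jordan curve $P_i \bullet P_j$. Because $P_\ell$ lies between $P_i$ and $P_j$ in the clockwise order about $s$, and these are internally disjoint, $P_\ell$ leaves $s$ into one of the two regions cut by $P_i \bullet P_j$, and since it is uncrossed and internally disjoint from $P_i, P_j$ it stays entirely in that region (say $\CW(P_i, P_j)$), likewise arriving at $t$ from that side; symmetrically $P_{\ell'}$ lies in $\ACW(P_i, P_j)$. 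Now consider the shared edge $e = xy$ with its copies $e_i$ (in $P_i$) and $e_j$ (in $P_j$). The curve $P_i \bullet P_j$ passes through $e_i$ and $e_j$; near the channel of $e$ the two regions $\CW$ and $\ACW$ are exactly the two sides of the sub-band between $e_i$ and $e_j$ inside the channel, plus whatever is outside. I want to argue that $P_\ell$, which must cross from "the $s$ side" to "the $t$ side" of the channel of $e$, is forced either to use a copy of $e$ lying strictly between $e_i$ and $e_j$ (contradicting $\ell \notin S_e$, after we've also shown such intermediate copies can be assumed absent or handled) or to cross $P_i \bullet P_j$, contradicting disjointness/uncrossedness. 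The cleanest way to make this rigorous is: the channel of $e$ together with the small disks at $x$ and $y$ separate the plane locally, and $e_i, e_j$ divide the channel's cross-section; a path in $\CW(P_i,P_j)$ that needs to get from one end of the channel-neighbourhood to the other without crossing $P_i$ or $P_j$ must pass through the portion of the channel strictly between $e_i$ and $e_j$, hence uses an intermediate parallel copy of $e$ — i.e.\ $P_\ell$ does use $e$ in $G$, contradiction. One must also confirm $P_\ell$ actually has to "get across": since $P_\ell$ and $P_{\ell'}$ are on opposite sides of $P_i\bullet P_j$, and $P_i, P_j$ both use $e$, the edge $e$ (more precisely its channel) is part of the boundary separating the two sides in $G_\phi$, so any further path is confined to one side and does not reach $e$ at all — wait, that is actually the cleaner statement.

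**Cleaner formulation.**

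Let me reorganize: fix $i,j \in S_e$ and let $A, B$ be the two closed arcs of the cyclic order between them, with $A \cup B = \{1,\dotsc,k\}$ and $A \cap B = \{i,j\}$. By the sector/Jordan-curve properties already established in \Cref{sec:sectors} (a path uncrossed and internally disjoint from $P_i, P_j$ lies entirely in $\CW(P_i,P_j)$ or entirely in $\ACW(P_i,P_j)$), each $P_m$ with $m \in A\setminus\{i,j\}$ lies in one of the two regions and each $P_m$ with $m\in B\setminus\{i,j\}$ lies in the other (this uses that consecutive paths in the clockwise order emanate from $s$ into adjacent "slots", so the side can only change when crossing $P_i$ or $P_j$). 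Now the copies $e_i, e_j$ of $e$ split the channel of $e$; the sub-band of the channel strictly between $e_i$ and $e_j$ lies on one side, say inside $\CW(P_i,P_j)$. A path $P_m$ lying in $\ACW(P_i,P_j)$ therefore cannot use any copy of $e$ strictly between $e_i$ and $e_j$; but it also cannot use a copy outside the $[e_i,e_j]$ band, because by the consistency of the linear order $\le_e$ with the clockwise orders at $x$ and $y$, the copies of $e$ used by paths on the $\ACW$ side that are "between $i$ and $j$ in the cyclic order" would have to lie between $e_i$ and $e_j$ in $\le_e$ — contradiction unless $P_m$ does not use $e$ at all. Hence every $P_m$ on that side has $m \notin S_e$, so $S_e \cap (\ACW\text{-side}) = \{i,j\}$'s arc contributes nothing; equivalently one of the two arcs $A, B$ satisfies (arc $\subseteq S_e$). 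I would spend most of the writeup carefully justifying the compatibility between the linear order $\le_e$ and the cyclic orders at the two endpoints of $e$ (this is a standard consequence of the uncrossed parallelization and disk-expansion, essentially Figure~\ref{fig:parallel-disk}), and then the Jordan-curve separation argument above closes it out. The main obstacle is purely bookkeeping about these orderings near the shared edge; there is no deep idea beyond the planarity/Jordan-curve facts already developed.
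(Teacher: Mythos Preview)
Your first pass (before the ``wait'') is essentially the paper's argument, and it is the right one: the band of the channel of $e$ lying between the copies $e_i$ and $e_j$ sits inside exactly one of the two regions cut out by the Jordan curve $P_i\bullet P_j$, and any $st$-path confined to that region must traverse this band, hence uses a copy of $e$. The paper makes the step you flagged (``one must also confirm $P_\ell$ actually has to get across'') precise in one line: subdivide $e_i$ and $e_j$ and join the two midpoints by a new edge $e'$ drawn inside the channel so that $e'$ crosses only copies of $e$. Then $e'$ is a chord of the Jordan curve $P_i\bullet P_j$ lying in one of the two regions, and since its endpoints split the Jordan curve into an arc through $s$ and an arc through $t$, the chord separates $s$ from $t$ in that region. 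Every $st$-path in that region therefore crosses $e'$, hence uses a copy of $e$. That is the whole proof; your instinct was correct and you should not have abandoned it.

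Your ``Cleaner formulation'' has a genuine error. You switch to proving that every path on the $\ACW$ side \emph{avoids} $e$, and then claim this is ``equivalently'' the statement that the other arc is contained in $S_e$. It is not: knowing that one arc is disjoint from $S_e\setminus\{i,j\}$ says nothing about whether the other arc is contained in $S_e$. Worse, the intermediate claim is false in general: if three or more of the $P_m$ use $e$, then after fixing $i,j$ both arcs can meet $S_e$ nontrivially. Finally, the step ``copies of $e$ used by paths between $i$ and $j$ in the cyclic order would have to lie between $e_i$ and $e_j$ in $\le_e$'' is precisely the contiguity statement you are trying to prove, so invoking it here is circular. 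Drop the reformulation and finish the first argument with the chord $e'$.
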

\begin{proof}
    In $G_\phi$, consider subdividing the copies of $e$ belonging to $\widehat P_i$ and $\widehat P_j$ by adding one node into each edge.
    Then add an edge $e'$ between the two new nodes, adding $e'$ such that it only crosses copies of $e$.
    The new edge either separates $\CW(\widehat P_i, \widehat P_j)$ or it separates $\ACW(\widehat P_i, \widehat P_j)$, where by separates we mean that any path of $\widehat {\mathcal P}^h$ that is within the sector must cross $e'$.
    The paths between $\widehat P_i$ and $\widehat P_j$ in the clockwise ordering all lie in $\CW(\widehat P_i, \widehat P_j)$, and the paths between $\widehat P_j$ and $\widehat P_i$ in the clockwise ordering all lie in $\ACW(\widehat P_i, \widehat P_j)$.
    If a path $\mathcal P_r$ crosses $e'$, then it means that it uses a copy of $e$, and hence its corresponding path $P_r$ in $G$ also uses $e$.
    The result follows.
\end{proof}


\subsection{Main Congestion Argument}
\label{sec:congestion-main}

We are now ready to prove \Cref{thm:congestion}.


\begin{proof}[Proof of \Cref{thm:congestion}]
    First, we show how to prove Item 2 (unsplittable flow), since Item 1 ultimately follows from it.
    Let $f$ be a feasible strongly uncrossed flow with a fixed strongly uncrossed parallelization $\phi$.
    Recall that we let $f(e) = \sum_{P : e \in P} f(P)$ and let $f^h$ be the single-commodity flow for $h \in E(H)$.
    
    Consider a commodity $h$.
    Let $P_1, \dotsc, P_k$ be paths of $\mathcal P^h$, ordered according to the cyclic clockwise ordering of their corresponding paths $\widehat P_1, \dotsc, \widehat P_k \in \widehat {\mathcal P}^h$ in $G_\phi$.
    Re-index so that $P_1$ and $P_k$ are the outer paths.
    Let $i^*$ be the smallest index $i$ such that $\sum_{j \le i} f^h(P_j) \ge \frac {d(h)} 2$.
    We define $Q^h = P_{i^*}$, and call it the {\em central path} for $h$.
    Note that it is straightforward to find $Q^h$ in polynomial time.

    \begin{claim}
        \label{claim:outer-congestion}
        If $e \in Q^h$ and $e$ is an outer edge for $h$ (i.e. it lies in $P_1$ or $P_k$) , then $f^h(e) \ge \frac {d(h)} 2$.
    \end{claim}
    \begin{subproof}
        Since $e$ is an outer edge, it is in either $P_1$ or $P_k$, in addition to $Q^h = P_{i^*}$.
        From \Cref{lemma:squeeze-property} we have that $e \in P_r$ either for all $1 \le r \le i^*$ or for all $i^* \le r \le k$.
        In the former case, $f^h(e) \ge \sum_{1 \le r \le i^*} f^h(P_r) \ge \frac {d(h)} 2$.
        For the latter case, $f^h(e) \ge \sum_{i^* \le r \le k} f^h(P_r) = d(h) - \sum_{1 \le r < i^*} f^h(P_r) > d(h) - \frac {d(h)} 2 = \frac {d(h)} 2$, where the last inequality follows since $\sum_{1 \le r < i^*} f^h(P_r) < \frac {d(h)} 2$ by definition of $i^*$.
    \end{subproof}


    To obtain an unsplittable flow $f'$, for each $h \in E(H)$ we route $d(h)$ units of flow along the path $Q^h$.
    For the congestion bound, consider an arbitrary edge $e \in E(G)$.
    We will bound the amount of flow through $e$.
    Let $H_e = \set{h \in E(H) : e \in Q^h}$,
    so $f'(e) = \sum_{h \in H_e} d(h)$.
    Define $O_e = \set{h \in H_e : e \text{ is an outer edge of } h}$ and also $I_e = \set{h \in H_e : e \text{ is an inner edge of } h}$.
    These sets partition $H_e$, so $f'(e) = \sum_{h \in O_e} d(h) + \sum_{h \in I_e} d(h)$.
    We know that $|I_e| \le 1$ from \Cref{lemma:private-region}.
    Additionally, by \Cref{claim:outer-congestion}, for each $h \in O_e$ we have that $f^h(e) \ge \frac {d(h)} 2$.
    We now split into two cases.

    Case 1: $\sum_{h \in O_e} \frac {d(h)} 2 < u(e)$.
    Then,
    \begin{equation*}
        f'(e) < 2 u(e) + \sum_{h \in I_e} d(h) \le 2 u(e) + d_{\max}.
    \end{equation*}

    Case 2: $\sum_{h \in O_e} \frac {d(h)} 2 \ge u(e)$.
    Then $\sum_{h \in O_e} f^h(e) \ge \sum_{h \in O_e} \frac {d(h)} 2 \ge u(e)$, so the commodities of $O_e$ saturate $e$ in $f$.
    It follows that $I_e = \emptyset$.
    Hence,
    \begin{equation*}
        f'(e) = \sum_{h \in O_e} d(h) \le \sum_{h \in O_e} 2 f^h(e) \le 2 f(e)\le 2 u(e) < 2 u(e) + d_{\max}.
    \end{equation*}

    This concludes the proof of Item 2. The above is easily turned into a polynomial time algorithm.
    We now prove Item 1 from Item 2.

    First, we split each $e \in E(G)$ into $u(e)$ parallel copies with capacity 1, and each $h \in E(H)$ into $d(h)$ parallel copies with demand 1, so that we have a new planar multiflow instance $(G', H', u' = \vec 1, d' = \vec 1)$ with unit capacities and demands.
    The splitting step is technically not polynomial time, but at the end we discuss how to convert this to an efficient algorithm.
    An integral congestion 2 multiflow for the new instance clearly implies an integral congestion 2 multiflow for the original instance.
    Given the feasible strongly uncrossed multiflow for the original instance, it is straightforward to construct a feasible strongly uncrossed multiflow for the new instance.
    To do this, for commodity $h = (s,t) \in E(H)$, order the support paths clockwise about $s$.
    The first unit-demand $st$-flow is obtained by iterating over the paths, assigning up to $f^{h}(P)$ amount of flow on each path $P$, until 1 unit of flow has been routed.
    We then continue iterating from this path and assign flow for the second unit-demand $st$-flow, ensuring we do not assign more than $f^{h}(P)$ flow in total to each path $P$, and so on.
    
    Then, Item 2 implies an there exists an unsplittable (and hence, integral) multiflow $f'$ for the unit weight instance, where the congestion on each edge is strictly less than 3.
    Since the flow is integral, the congestion is at most 2.

    We now briefly explain how to convert this into a polynomial time algorithm.
    Consider a demand $h = (s, t) \in E(H)$.
    The following is an equivalent description of how the pseudo-polynomial time algorithm above behaves.
    It picks a starting support path and scans clockwise about $s$ counting the total amount of flow it sees, finding the first path at which 0.5 units of total flow has been routed, then 1.5 units in total, then 2.5 units in total, and so on.
    These are what would be the central paths were we to split $h$ up into $d(h)$ separate demands.
    The algorithm iterates over this list, and routes one unit of flow on each path (if a walk appears multiple times in the list because its flow is large, we route one unit of flow on it each time it appears).
    However, this can be implemented in polynomial time: just by scanning clockwise about $s$ and performing a constant number of simple arithmetic calculations at each path, we can determine how many times each path $P$ would appear in this list of central paths without explicitly constructing the list, which tells us how much flow to route on $P$.
\end{proof}

\section{Reducing Uncrossed Edge-Disjoint Paths to Planar Node-Disjoint Paths}
\label{sec:reduction}

In this section we prove \Cref{thm:reduction}.
An instance of edge-disjoint paths or node-disjoint paths consists of $G, H$ (we can think of these as having unit capacities and demands).
We must find (integral) edge- or node-disjoint paths respectively that fully route $H$ in $G$.
The problem of interest for \Cref{thm:reduction} is {\em uncrossed} edge-disjoint paths, where $G$ is planar and we add the extra constraint that the paths are uncrossed.
We reduce to the standard planar node-disjoint paths problem, and then use a result of Schrijver \cite{schrijver1990homotopic} which says that node-disjoint paths can be solved in polynomial time in planar graphs if the number of faces the demand edges are incident upon is bounded.

Our proof uses contractions. We let $G/e$ to denote the graph obtained by contracting edge $e$ in a graph $G$; we use similar notation for a multiflow $f$.

\begin{lemma}
\label{lem:contractsafe}
    Let $f$ be an uncrossed flow in $G$ and $e$ some supply edge. Then $f/e$ is uncrossed for $G/e,H/e$.
\end{lemma}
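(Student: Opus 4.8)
The plan is to argue at the level of parallelizations, using the disk-expansion picture of \Cref{sec:uncrossed}, and to exploit the fact that contracting a single edge is a local surgery that merges two vertex-disks along a boundary arc. Fix an uncrossed parallelization $\phi$ of the support walks of $f$ in $G$; by definition this certifies uncrossedness by producing, after disk-expansion at every vertex, a planar family of pairwise-disjoint arcs inside each vertex-disk. Write $e = uv$ and let $w$ be the vertex of $G/e$ obtained by identifying $u$ and $v$. Under contraction the parallel copies of $e$ vanish and every occurrence of $e$ is deleted from every support walk; keeping the parallel copies and the assignment $\phi_g$ for each supply edge $g \neq e$ unchanged yields a parallelization $\phi'$ of the support walks of $f/e$ in $G/e$. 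The goal is to show $\phi'$ is uncrossed.

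The first step is to note that for every vertex $x \notin \{u,v\}$ the rotation system of $G/e$ at $x$, together with its parallel copies, is literally the same as in $G$, and each support walk of $f/e$ passes through $x$ using exactly the same edge-occurrences as the corresponding walk of $f$; hence the disk-expansion at $x$ inherits the same planar disjoint-arc drawing and no new crossing appears there. The only vertex to check is $w$, and here I would use disk-expansion directly: after disk-expanding $u$ and $v$ with respect to $\phi$, the region $D_u \cup (\text{channel of } e) \cup D_v$ is homeomorphic to a single closed disk $D_w$ (two disks glued along an arc of each boundary, via the rectangular channel of the single edge $e$), and contracting $e$ is precisely the operation of collapsing this channel. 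Collapsing the channel is a quotient map injective on the interiors of $D_u$ and $D_v$ that sends each parallel copy of $e$ to a single boundary point of $D_w$; it therefore carries the pairwise-disjoint connecting arcs inside $D_u$ and inside $D_v$ to a planar family of pairwise-disjoint arcs inside $D_w$, and it joins up exactly the two arc-pieces of any walk that transited $e$ on a given copy (each copy being used by exactly one occurrence of exactly one walk). That is exactly a planar disjoint-arc drawing in the disk-expansion of $w$ realizing $\phi'$, so $\phi'$ has no crossing at $w$. Combining the two steps, $\phi'$ is an uncrossed parallelization of the support walks of $f/e$, hence $f/e$ is uncrossed.

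It remains to dispatch the degenerate cases and check that the demand side behaves. If $u$ and $v$ are joined by a supply edge other than $e$, contraction creates a loop at $w$; this is harmless, since no support walk need use it, and a walk that does traverse such a loop performs a back-and-forth at $w$ that can be short-circuited locally without disturbing the other walks. If $u,v$ share a neighbour we obtain parallel supply edges, which the model permits. On the demand side, a demand $uv$ (if present in $H$) becomes a loop in $H/e$, i.e.\ a trivially satisfied demand, and every other demand incident to $u$ or $v$ is simply re-attached to $w$; so $f/e$ routes precisely the demands of $H/e$.

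I expect the main obstacle to be making the channel-collapse argument fully rigorous for walks that traverse $e$ several times, or that have $e$ as their first or last edge: one must verify that the concatenated arc inside $D_w$ is still simple and still disjoint from all other arcs. This is geometrically evident from the isotopy that shrinks the channel to zero width, but it is the step most likely to require care in a written-out proof — possibly via a short separate combinatorial lemma stating that the rotation at $w$ is the splice of the rotations at $u$ and $v$ at the block of $e$-copies, so that any alternating quadruple at $w$ pulls back to one at $u$ or at $v$.
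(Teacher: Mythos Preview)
The paper states this lemma without proof, evidently regarding it as immediate; there is no argument in the text to compare against. Your proposal is correct and supplies exactly the kind of justification the paper omits. The core step---observing that $D_u \cup (\text{channel of } e) \cup D_v$ is already a disk, so that collapsing the channel simply re-reads the existing disjoint-arc drawing as the disk-expansion at the merged vertex $w$---is the right way to make the claim rigorous, and your remark that the rotation at $w$ is the splice of the rotations at $u$ and $v$ along the block of $e$-copies is precisely the combinatorial translation of that picture. Your treatment of the degenerate cases (loops, parallel edges, demand edges becoming loops) and of walks that traverse $e$ multiple times goes beyond what the paper bothers to say; none of it is problematic, and the ``short-circuit a loop locally'' step is indeed safe since the two occurrences of the loop edge are adjacent in its linear order (cf.\ the remark at the end of \Cref{sec:walks}).
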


We first present a gadget that reduces the degree of an arbitrary vertex of degree at least 5, by one, adding vertices of maximum degree 4. Using this gadget repeatedly, we get an equivalent instance with maximum degree 4. Then every degree-4 vertex can be replaced by a $C_4$, as illustrated in \Cref{fig:degree4-reduction}, to obtain a graph with maximum degree 3.

\begin{figure}
    \centering
    \begin{tikzpicture}[x=1cm,y=1cm]
        \foreach \u/\x/\y in {u/0/0, u1/3.5/0,u2/4/-0.5,u3/4.5/0,u4/4/0.5} {
            \node[stvertex] (\u) at (\x,\y) {};
        }
        \foreach \u/\v in {u1/u2,u2/u3,u3/u4,u4/u1} {
            \draw (\u) -- (\v);
        }   
        \foreach \u/\x/\y in {
            u/-1/0,u/0/1,u/1/0,u/0/-1,
            u1/3/0,u2/4/-1,u3/5/0,u4/4/1} 
        {
            \draw (\u) -- (\x,\y);
        }
        \draw (u) node[above left] {$u$};
        \draw (4,0) node {$u$};
        \draw (2,0) node {$\longrightarrow$};
    \end{tikzpicture}
    \caption{A gadget to replace a degree-4 vertex by 4 degree-3 vertices.}
    \label{fig:degree4-reduction}
\end{figure}

Consider a vertex $v$ with $\deg(v) \geq 5$. In order to obtain maximum degree 4, the natural approach replaces $v$ by a grid large enough to make any routing incident to $v$ feasible in that grid. This is more straightforward in instances where the terminals are leaves. As we have seen however, creating leaves may destroy uncrossability. We need to be more careful then  because $v$ can be a terminal for some of the demand pairs. Therefore we must pre-specify some vertices in the grid to substitute $v$ in those demand pairs. Because of the uncrossed constraint, the choice of these new terminals is more delicate. 
We give an explicit construction, presented in a recursive fashion: by showing how to decrease the degree of $v$ by one in $G$ and $H$ simultaneously.

\begin{figure}[htbp]
    \centering
    \begin{tikzpicture}[x=0.5cm,y=0.5cm]
        \node[stvertex] (v) at (0,0) {};
        \foreach \a in {52,104,...,312} {
          \foreach \r in {1,2,...,6} {
            \node[stvertex] (u\r'\a) at (\a:\r) {};
          }
        }
        \foreach \r in {2,3,...,6} {
            \node[stvertex] (u\r'0) at (0:\r) {};
        }
        \node[stterminal,red] (u1'0) at (0:1) {};
        \foreach \a in {52,104,...,312} {
          \draw (u) -- (u1'\a);
        }
        \foreach \a in {0,52,...,312} {
          \draw (u6'\a) -- (\a:7);
          \foreach \r [remember=\r as \lastr (initially 1)] in {2,3,...,6} {
              \draw (u\lastr'\a) -- (u\r'\a);
          }
        }
        \foreach \a [remember=\a as \lasta (initially 312)] 
          in {0,52,...,312} 
        {
            \foreach \r in {1,2,...,6} {
               \draw (u\r'\lasta) -- (u\r'\a);
            }
        }
        \draw (v) node[right] {\small $v'$ };
        \draw (u1'0) node[above right] {\small $t$};
        \foreach \i [evaluate=\i as \a using 52*\i] in {0,1,...,6} {
            \draw (\a:7.5) node {\small $v_\i$};
        }
        \draw[line width=3pt, semitransparent, blue] 
            (208:7) -- (u4'208) -- (u4'156) -- (u3'156) -- (u3'104) 
            -- (u2'104) -- (u2'52) -- (u1'52) -- (u1'0); 
        \draw[line width=3pt, semitransparent, yellow] 
            (156:7) -- (u4'156) -- (u4'104) -- (u3'104) -- (u3'52) 
            -- (u2'52) -- (u2'0) -- (u1'0) -- (u1'312) -- (v); 
        \draw[line width=3pt, semitransparent, yellow] 
            (312:7) -- (u4'312) -- (u4'260) -- (u3'260) -- (u3'208) 
            -- (u2'208) -- (u2'156) -- (u1'156) -- (u1'104) -- (v); 
        \draw[line width=3pt, semitransparent, ForestGreen] 
            (52:7) -- (u4'52) -- (u4'0) -- (u3'0) -- (u3'312) 
            -- (u2'312) -- (u2'260) -- (u1'260) -- (u1'208) -- (v); 
        \draw[line width=3pt, semitransparent, ForestGreen] 
            (0:7) -- (u4'0) -- (u4'312) -- (u3'312) -- (u3'260) 
            -- (u2'260) -- (u2'208) -- (u1'208) -- (u1'156) -- (v); 
        \draw[line width=3pt, semitransparent, red] 
            (104:7) -- (u4'104) -- (u4'52) -- (u3'52) -- (u3'0) 
            -- (u2'0) -- (u2'312) -- (u1'312) -- (u1'260) -- (v);

    \end{tikzpicture}
    \caption{The spider web gadget: how to decrease the degree of a vertex by one in $G$ and $H$, adding only vertices of degree at most $4$. Notice the red terminal that can be used as terminal for one of the demand pair ending at this vertex. In highlight with colors, an example of reconfiguration of four paths as given by the proof, with two paths which had been ending at $v$ (red and blue), and two paths passing through $v$ (yellow and green).}
    \label{fig:degree-reduction}
\end{figure}

\begin{proposition}
    Let $G,H$ be an instance of the uncrossed edge-disjoint paths problem, and $v$ be a vertex of $G$ with $\deg_G(v) > 4$, or $\deg_G(v) \geq 2$ and $\deg_H(v) \geq 1$. Define $G',H'$ from $G,H$ by replacing $v$ in $G$ by the spider web gadget from \Cref{fig:degree-reduction}. If $\deg_H(v) \geq 1$, then replace arbitrary demand $(s,v)$ in $H$  by a demand $(s,t)$ where $t$ is the red terminal in the gadget (which then has degree $1$ in $H'$). Any other terminals at $v$ are located at $v'$. Then $G,H$ admits an uncrossed routing if and only if $G',H'$ does.    
\end{proposition}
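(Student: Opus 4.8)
The plan is to prove the two implications separately; the forward direction---that an uncrossable routing for $G,H$ induces one for $G',H'$---is the substantive one, while the reverse direction is a short contraction argument.

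For the reverse direction I would let $W$ denote the set of vertices introduced by the gadget (the hub $v'$, the terminal $t$, and all web vertices $u_r^a$), so that $W$ meets the rest of $G'$ only in the $\deg(v)$ edges joining the outer web vertices to the former neighbours of $v$. Contracting all of $W$ to a single vertex turns $G'$ back into $G$ and $H'$ back into $H$: the demand $(s,t)$ and all demands at $v'$ become demands at the contracted vertex, which is adjacent to exactly the former neighbours of $v$ and hence plays the role of $v$. Applying \Cref{lem:contractsafe} once per contracted edge, the image of the given routing is then an uncrossed edge-disjoint routing for $G,H$. (If a path entered $W$ more than once it becomes a non-simple trail, which is harmless in the walk/trail model and in any case cannot occur once all degrees have been brought down to $3$.)

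For the forward direction, fix an uncrossable routing $\mathcal P$ of $G,H$. Its interaction with $v$ is purely local: the paths transiting $v$ pair up the edges of $\delta(v)$ into chords, and uncrossedness of $\mathcal P$ forces these chords to be pairwise non-crossing in the cyclic order of $\delta(v)$; the paths ending at $v$ use distinct remaining edges, one of which, say $vv_q$, carries the relabelled demand $(s,v)$. So it suffices to replace, inside the gadget, the portions of these paths incident to $v$ by edge-disjoint, pairwise non-crossing web paths such that each transit chord $\{v_i,v_j\}$ becomes a $v_i$--$v_j$ web path, the distinguished terminal reaches $t$, and every other terminal at $v$ reaches the hub $v'$. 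Gluing these to the untouched parts of $\mathcal P$ outside the gadget gives a routing of $G',H'$: it is edge-disjoint because the web edges are used only by the new paths (which are edge-disjoint there) and each of the $\deg(v)$ attaching edges still carries a single path; and it is uncrossed because no shared subpath can straddle the gadget boundary (those attaching edges are not shared, so a shared subpath lies entirely inside or entirely outside $W$), inside $W$ the new paths are drawn without crossings, and at each former neighbour $v_i$ the incident half-edge is merely redirected into the web in the same rotational slot, so no new crossing appears.

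The heart of the argument---and the step I expect to be the main obstacle---is to show the spider web always accommodates such a local configuration; I would do this by the explicit layered routing illustrated in \Cref{fig:degree-reduction}. First route the non-crossing transit chords through the concentric rings: order them by nesting depth and send a depth-$\ell$ chord radially in to ring $\ell$, around that ring to the spoke of its other endpoint, and back out, so that nested chords sit on strictly deeper rings and stay disjoint; the web is built with enough rings (a number growing with $\deg(v)$) to absorb the deepest nesting a non-crossing matching on $\deg(v)$ points can produce. Then route the terminals inward through the remaining radial corridors: the distinguished one straight down spoke $0$ to $t = u_1^0$, and each of the others around ring $1$ to a distinct hub edge $v'u_1^a$ with $a \neq 0$---there are exactly $\deg(v)-1$ such edges, which always suffices, with equality only when every edge at $v$ carries a terminal path. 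Checking that distinct new paths use disjoint rings and spokes and that the whole family can be drawn without crossings inside the disk then yields the required edge-disjoint, uncrossed routing; and since every web vertex has degree at most $4$ while the hub $v'$ has degree $\deg(v)-1$, the gadget lowers the maximum degree, so repeated application brings every vertex to degree $4$, after which the $C_4$-replacement of \Cref{fig:degree4-reduction} completes the reduction to maximum degree $3$.
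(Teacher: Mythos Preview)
Your contraction argument for the reverse implication matches the paper's and is fine. The forward direction, however, has a genuine gap.

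Your scheme routes each transit chord $\{v_i,v_j\}$ on its own ring (by nesting depth) and then sends the terminal half-edges radially inward to $t$ or to the hub $v'$. The problem is that a terminal edge can sit \emph{between} the two legs of a transit chord in the cyclic order around $v$; its spoke then has to pierce that chord's ring arc, and at the crossing node the four incident edges alternate (ring, spoke, ring, spoke), so the two paths cross in the sense of Definition~\ref{def:crossing-subpath}. Concretely, take $d=5$ with transit chords $\{v_0,v_4\}$ and $\{v_1,v_3\}$ and the distinguished terminal at $v_2$: whichever ring the inner chord $\{v_1,v_3\}$ occupies, the terminal's spoke down from $v_2$ meets it at $u_{2,\cdot}$, where the spoke edges and the ring edges interleave. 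The same obstruction arises even without transit paths: if the distinguished terminal enters at spoke $q\neq 0$ (you have no control over $q$, so ``straight down spoke $0$'' is not available), its ring-$1$ walk to $t=u_{0,1}$ crosses any other terminal's hub edge $u_{q',1}v'$ for $q'$ between $q$ and $0$.

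The paper avoids this entirely by \emph{not} routing transit chords directly once $\deg_H(v)\ge 1$. Instead it replaces every half-edge $vv_j$ at $v$ --- transit or terminal alike --- by the same spiral path from $v'$ out to $v_j$, and the distinguished half-edge by a matching spiral ending at $t$. All these spirals are rotations of a single template, hence pairwise edge-disjoint and non-crossing, and the cyclic order of paths arriving at $v'$ coincides with the order at $v$, so transit paths rejoin correctly there. Your depth-based ring routing is a reasonable plan for the purely transiting case (the paper uses a span-based assignment instead), but in the terminal case you need to abandon direct $v_i$--$v_j$ routing and push everything through the hub.
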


\begin{proof}
    Let $d = \deg(v)$. We label each vertex except $v$ in the spider web as $u_{i,j}$ such that $u_{i,j}$ is at the polar coordinate $(\frac{2i\pi}{d}, j)$ (for instance $t = u_{0,1}$), with $i \in \{0,\ldots,d-1\}$ and $j \in \{1,\ldots,d-1\}$. Let $v_0,\ldots,v_{d-1}$ be the neighbours of $v$, with $v_i$ adjacent to $u_{i,d-1}$. To avoid confusion, we will call $v'$ the central vertex in the spider web, and $v$ the original vertex in $G$. Also we will consider all indices modulo $d$. Given a routing in $G',H'$ by contracting all new vertices in the web (and deleting loops) they become the original $v$, and  we obtain a routing in $G,H$, proving the \emph{if} implication. 
    
    Consider an uncrossed routing $\mathcal{P}$ in $G,H$. We compute a corresponding uncrossed routing in $G',H'$. The first case is when $H$ does not contain a demand ending at $v$. We claim then there is a routing that does not use $v'$. Note that our figure does not depict this case and in particular $v'$ will not be used and can be safely removed, so that all of the new nodes have degree at most $4$. Indeed to see this routing, consider a support flow path $P$ containing $v_iv$, $vv_j$ (with $i < j$). Let $P'$ be the flow path obtained from $P$ by replacing $v_iv, vv_j$ by a subpath $v_i$, $u_{i,d-1}$, $u_{i,d-2}$, $\ldots u_{i,d-k}$, $u_{i+1,d-k}$, $u_{i+2,d-k}$,\ldots , $u_{j, d-k}$, $u_{j,d-k+1}$,\ldots, $u_{j,d-1}$, $v_j$, with $k = j - i + 1$. This subpath uses the rays indexed by $i$ and $j$ and the ring indexed by $d-k$. We claim that no two such subpaths share an edge. Indeed all the rays are distinct, and if two paths used the same ring, one from $v_i$ to $v_j$ and the other from $v_{i'}$ to $v_{j'}$, as $j - i = j' - i'$, we must have $v_i$, $v_{i'}$, $v_j$, $v_{j'}$ appearing in that order cyclically around $v$. Hence the two corresponding paths cross, contradiction.

    In the second case $H$ has some demands terminating at $v$; this case is depicted in \Cref{fig:degree-reduction}. Let $(s,v)$ be the arbitrary demand in $H$ terminating at $v$ that is replaced in $H'$ by a demand $(s,t)$, where $t=u_{0,1}$.
     Let $v_iv$ be the last edge in the $s,v$-path $P$ in $\mathcal{P}$. Replace in $P$ the edge $v_iv$ by the subpath $v_i$, $v_{i,d-1}$, $v_{i,d-2}$,\ldots, $v_{i,i}$, $v_{i-1,i}$, $v_{i-1,i-1}$,\ldots,$v_{2,2}$, $v_{1,2}$, $v_{1,1}$, $v_{0,1} = t$. For any other edge $vv_j$ in $H$ that appears in some path $Q \in \mathcal{P}$ (either transiting through or terminating at $v$), similarly replace $vv_j$ in $H$ by a subpath $v$, $v_{j-k,1}$, $v_{j-k+1,1}$, $v_{j-k+1,2}$,\ldots,$v_{j-k+i,i} = v_{j,i}$, $v_{j,i+1}$,\ldots, $v_{j,d-1}$, $v_j$. Each such subpath is thus spiraling counterclockwise from $v$ (or $t$ for $P$) into the $i$ innermost rings, then goes straight on its own ray, and thus all such subpaths, which are rotations of each other, are disjoint. Moreover, the relative cyclic order of the paths incident to $v'$ is the same as in $v$ (except for the path $P$ that is not incident to $v'$). Therefore, this routing in $G',H'$ is uncrossed.   
\end{proof}

We may repeatedly apply the preceding lemma, in polytime until we have a new "equivalent" instance $G',H'$ with the following properties. If $v$ has no terminals, then it has degree at most four and any terminal node has degree at most $3$. The last step of the reduction now applies a degree 4 reduction (Figure~\ref{fig:degree4-reduction}) to each remaining vertex of degree 4.  For each terminal node of degree at most $2$ in $G$, we leave it. If a terminal node $v$ has $d_G(v)=3$, then we subdivide each edge incident to $v$ and add a triangle onto these new degree $2$ nodes.
We show that this last reduction step is fine as follows.


\begin{lemma}
\label{lem:preservesmaxedp}
    Let $G$ be a maximum degree $4$ planar graph and $H$ an associated demand graph which forms a matching and if $\deg_H(v)=1$, then $\deg_G(v) \leq 3$. Let $G''$ be the graph obtained by performing the last phase of reductions. $H$ has an integral uncrossed flow in $G$ if and only if it has a totally node-disjoint flow in $G''$. 
\end{lemma}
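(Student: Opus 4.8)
We plan to prove both directions of the equivalence by purely local arguments at each vertex, after recording exactly what $G''$ looks like. Every non-terminal vertex $v$ of $G$ with $\deg_G(v)=4$ is replaced by a four-cycle $C_v=u^v_1u^v_2u^v_3u^v_4$, the four edges formerly at $v$ being re-attached one per vertex of $C_v$ in the original rotation; every terminal vertex $v$ with $\deg_G(v)=3$ and incident edges $va,vb,vc$ is replaced by the subdivide-and-triangle gadget (subdivide into $v\,x_a\,a$, $v\,x_b\,b$, $v\,x_c\,c$ and add the triangle $x_ax_bx_c$), the demand still sitting at $v$; every other vertex --- non-terminals of degree at most $3$, and terminals of degree at most $2$, which by hypothesis is all terminals --- is left untouched. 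The key preliminary observation is that in \emph{any} integral uncrossed flow in $G$ (whose support is a family of pairwise edge-disjoint, pairwise-uncrossed walks, in fact trails since $u=\vec1$), the only vertices incident with two distinct support walks, or revisited by a single support walk, are exactly the two gadget types: a degree-$4$ non-terminal carries at most two transit passes using the four edges in two pairs, a degree-$3$ terminal carries at most one transit pass plus the unique walk terminating there, and a parity/degree count shows at most one support walk meets any other vertex. Hence the flow is already node-disjoint away from the gadgets.

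For the $(\Leftarrow)$ direction the argument is short. A totally node-disjoint flow in $G''$ for $H$ is in particular uncrossed, since pairwise disjoint curves cannot cross. Contract, one at a time, all edges internal to the gadgets --- three of the four cycle edges of each $C_v$, and the three spokes $vx_a,vx_b,vx_c$ of each triangle gadget --- deleting each loop as it arises. By \Cref{lem:contractsafe} uncrossedness is preserved at each step; the resulting supply graph is exactly $G$ (the external edges re-attach to the merged vertex in the original rotation, and the subdivided edges become $va,vb,vc$), and the demand graph is unchanged because no gadget edge is a demand edge. Deleting the loops introduced inside the gadgets leaves an integral uncrossed flow routing $H$ in $G$.

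For $(\Rightarrow)$, take an integral uncrossed flow in $G$ and reroute it locally through each gadget, leaving it unchanged elsewhere. At a degree-$4$ non-terminal $v$: because the (at most two) support walks passing through $v$ are pairwise uncrossed and do not self-cross, the pairing they induce on the (at most four) used edges of $\delta(v)$ is a \emph{non-crossing} matching with respect to the rotation at $v$; any such non-crossing pattern is realized by vertex-disjoint arcs of $C_v$ (for the pattern $\{f_1,f_2\},\{f_3,f_4\}$ route along $u^v_1u^v_2$ and $u^v_3u^v_4$, and symmetrically for $\{f_2,f_3\},\{f_4,f_1\}$; a single pass using two edges is routed along either arc joining the corresponding cycle vertices), whereas the ``crossing'' pattern $\{f_1,f_3\},\{f_2,f_4\}$ that uncrossedness forbids is precisely the one admitting no vertex-disjoint routing in $C_4$. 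At a degree-$3$ terminal $v$ with edges $va,vb,vc$: the unique terminating walk, say arriving via $a$, is rerouted as $a\,x_a\,v$; the possible transit walk, using the other two edges, is rerouted across the single triangle edge joining the other two $x$-vertices (e.g.\ $b\,x_b\,x_c\,c$), so the two are vertex-disjoint and $v$ retains its role as the terminal. Distinct gadgets being vertex-disjoint, these reroutings do not interfere, and combined with the preliminary observation the outcome is a totally node-disjoint flow for $H$ in $G''$ (the reroutings also ``unfold'' any self-revisits of the original trails, so the resulting paths are indeed simple).

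The step I expect to be the crux is making the forward direction's local analysis at degree-$4$ vertices fully rigorous: showing that ``uncrossed at $v$'' is \emph{equivalent} to the induced edge-pairing being non-crossing for the rotation, and that the non-crossing pairings --- including the degenerate cases of a single transiting walk, or a single walk that transits $v$ twice --- are exactly the patterns routable by vertex-disjoint paths in $C_4$. The triangle-gadget case is easier but still needs the small check that the transit walk can always be routed so as to avoid both $v$ and the $x$-vertex occupied by the terminating walk. A minor point to dispatch is that an integral uncrossed flow may use trails rather than simple paths; a trail transiting a gadget vertex twice simply contributes two non-crossing chords and is handled by the same analysis, so the argument goes through without change.
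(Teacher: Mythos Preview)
Your proof is correct and follows essentially the same approach as the paper: both directions are handled by local arguments at the gadgets, with $(\Leftarrow)$ via contraction (the paper's argument is the same as yours, just not explicitly citing \Cref{lem:contractsafe}) and $(\Rightarrow)$ via rerouting non-crossing pairs through disjoint arcs of $C_4$ and routing the transit path across the free triangle edge at a terminal. Your write-up is more explicit than the paper's terse version---in particular you spell out the non-crossing-pairing $\leftrightarrow$ $C_4$-routable correspondence and the trail/self-revisit caveat---but the underlying ideas are identical.
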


\begin{proof} 
    We use the familiar idea that  feasibility of edge- versus node-disjoint paths are equivalent in maximum degree three instances. If $H$ has a node-disjoint solution in $G''$, then by planarity, there is an uncrossed edge-disjoint solution in $G$. We obtain an edge-disjoint solution in $G$ by first contracting the new $4$ cycles. This cannot create a crossing at the contracted node due to the fact we had a node-disjoint solution. Second, consider one of the nodes $v$ where we applied triangle operation. Regardless of whether a demand transits through the triangle gadget, we obtain an uncrossed edge-disjoint flow by contracting $v$ together with the new degree $2$ nodes.  
    
    Conversely, suppose that we have an uncrossed edge-disjoint path solution in $G$.
    At any non-terminal node $v$, if there are two paths using it, since they are non-crossing they can be realized as node-disjoint on the 4-cycle introduced by the reduction. 
 Now consider a terminal node $v$. If it has degree at most $2$, then no other path could use edges incident to $v$, so assume it has degree $3$. We can convert this to a node-disjoint solution by first letting the terminal $v$ route via the node which subdivided the edge it used in $G$. This allows us to use the edge between the two other subdivided nodes for the transiting demand, thus keeping node disjointness at the gadget.
\end{proof}

This completes the proof of the overall reduction.

\begin{theorem}
\label{thm:reduction-final}
Suppose that $G,H$ is a planar instance of edge-disjoint paths.
We can compute a planar instance  $\widehat{G},\widehat{H}$ of node-disjoint paths such that $G,H$ has an integral uncrossed flow if and only if $\widehat{G},\widehat{H}$ has (totally) node-disjoint paths for the demands in $\hat{H}$. 
Moreover, $\widehat{G},\widehat{H}$ can be computed in polytime, and $|V(\widehat{G})| \leq c \Delta^3 |V(G)|$ where $\Delta$ is the maximum degree in $G$ and $c$ is a constant.
\end{theorem}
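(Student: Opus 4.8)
The plan is to assemble the reduction in two phases and then chain the resulting biconditionals while tracking the vertex count. In \emph{Phase 1} we drive down the degrees: as long as some vertex $v$ of the current instance is a non-terminal with $\deg_G(v)\ge 5$, or a terminal with $\deg_H(v)\ge 2$ or $\deg_G(v)\ge 4$, we replace $v$ by the spider-web gadget of \Cref{fig:degree-reduction} (via the spider-web proposition above; its hypothesis, that $\deg_G(v)>4$ or that $\deg_G(v)\ge 2$ and $\deg_H(v)\ge 1$, is met in each of these cases). Here we may assume $\deg_H(v)\le\deg_G(v)$ for every $v$, since otherwise the routing problem (route all demands) is trivially infeasible and we can output a trivially infeasible node-disjoint instance. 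Phase 1 terminates in an instance $G',H'$ in which $H'$ is a matching and every non-terminal (resp.\ terminal) has degree at most $4$ (resp.\ at most $3$) --- exactly the hypotheses of \Cref{lem:preservesmaxedp}. In \emph{Phase 2} we apply that lemma: replace each remaining degree-$4$ non-terminal by a $C_4$ as in \Cref{fig:degree4-reduction}, and each degree-$3$ terminal by the subdivision-plus-triangle gadget. Call the final instance $\hat G,\hat H$.

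For correctness, each spider-web replacement preserves the existence of an uncrossed (edge-disjoint) routing by the spider-web proposition, and the two Phase-2 gadgets do so by \Cref{lem:preservesmaxedp}, which moreover equates this with the existence of totally node-disjoint paths in $\hat G$. Composing these biconditionals down the (polynomially long) sequence of replacements yields that $G,H$ admits an integral uncrossed routing if and only if $\hat G,\hat H$ admits totally node-disjoint paths for the demands of $\hat H$. The word ``trail'' in the statement is handled by the same local reconfiguration arguments used in those proofs: a flow trail through a replaced vertex is rerouted through the rays and rings of the web just as a simple path would be, and the rerouted ray/ring segments are edge-disjoint for the same reason; conversely, contracting a gadget (safe by \Cref{lem:contractsafe}) turns a node-disjoint path of $\hat G$ back into an uncrossed path in $G$. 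Planarity is preserved throughout, since every gadget is a planar replacement of a vertex (or of a vertex together with its incident edges), so $\hat G$ is planar.

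For the size and running time, the crucial point is that Phase 1 does not cascade: every vertex introduced when replacing a degree-$d$ vertex, other than the new centre $v'$, has $G$-degree at most $4$ and lies on no demand, while the new leaf terminal has $G$-degree $3$ and $H$-degree $1$; none of these ever triggers a further replacement. Hence all replacements charged to an original vertex $v$ occur along a single chain $v\to v'\to v''\to\cdots$, each step strictly decreasing $\deg_G$ and decreasing $\deg_H$ whenever a demand is peeled, so there are at most $\max(\deg_H(v),\deg_G(v)-4)\le\Delta$ of them. Each replacement of a degree-$d$ vertex creates a web of $O(d^2)=O(\Delta^2)$ vertices, so Phase 1 produces $O(\Delta^3\,|V(G)|)$ vertices in total; Phase 2 adds only $O(1)$ vertices per vertex it touches, giving $|V(\hat G)|=O(\Delta^3\,|V(G)|)$, i.e.\ the claimed $c\Delta^3|V(G)|$ bound. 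Since there are $O(\Delta\,|V(G)|)$ spider-web replacements, each of $O(\Delta^2)$ cost, followed by $O(\Delta^3\,|V(G)|)$ trivial Phase-2 operations, the construction runs in polynomial time; combined with Schrijver's algorithm for planar node-disjoint paths with demands on boundedly many faces, this gives \Cref{thm:reduction}.

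The main obstacle is precisely this bookkeeping: verifying that the only vertex possibly needing re-processing after a spider-web replacement is its centre, so that the number of replacements per original vertex is $O(\Delta)$ (rather than something that could blow up through cascading), and simultaneously that each replacement's gadget has size $O(\Delta^2)$; together these pin down the $\Delta^3$ exponent. The second, subtler point to get right is the interaction with the trail model versus simple paths --- one must check that each reconfiguration step in the spider-web proposition and in \Cref{lem:preservesmaxedp} is purely local around the replaced vertex, hence applies verbatim when the flow uses trails, and that gadget contraction (\Cref{lem:contractsafe}) never reintroduces a crossing.
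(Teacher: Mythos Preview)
Your proposal is correct and follows the same approach as the paper: repeatedly apply the spider-web gadget to peel off demands and reduce degree, then finish with the $C_4$ and triangle gadgets via \Cref{lem:preservesmaxedp}. In fact you supply more detail than the paper on the $\Delta^3$ bookkeeping (the paper only asserts ``polytime'' and the bound in the theorem statement without spelling out the non-cascading argument), and your observation that all replacements charged to an original vertex form a single chain through successive centres is exactly the right accounting.
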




\resolved{
\bruce{question1: what do flows that use trails end up looking like in the reduced instance? question2: can we produce a reduction which preserves maximization? IE that a set S of demands is routable in the origianl graph if and only if it is NDP feasible in reduced graph? Not even clear to me for Lemma 4.3.}}

\bibliographystyle{plain}
\bibliography{mybibliography}

\appendix

\section{Uncrossing Procedures and Uncrossable Classes}

By cycle we always mean simple cycle.
For a cycle $C$, we use $\Interior(C)$ and $\InteriorOpen(C)$ for the closed and open interior respectively of the region bounded by $C$.

\subsection{Fully-Compliant Layouts}
\label{sec:fully-compliant}

Let $G$ be series-parallel.
A demand $h$ is {\em fully-compliant} if $G+h$ is series-parallel.
We say $(G, H)$ is {\em fully-compliant} if every $h \in E(H)$ is fully-compliant.
This notion was introduced in \cite{chekuri2013flow}, where it was also proven that such $(G, H)$ are cut-sufficient.
Here, we prove \Cref{prop:fully-compliant-pairwise-planar}
\begin{proof}[Proof of \Cref{prop:fully-compliant-pairwise-planar}]


    To show $(G, H)$ is pairwise-planar, it suffices to show for each demand $h = (s, t)$ that either $(s, t) \in E(G)$ or there exist two distinct faces $F_1, F_2$ of $G$ such that $s, t \in V(F_1)$ and $s, t \in V(F_2)$.
    Suppose $(s, t) \notin E(G)$.
    Let $C_0$ be a cycle\footnote{Recall that by cycle we always mean simple cycle.} containing both $s$ and $t$ (take two internally node-disjoint $st$-paths).

    Now, select a cycle $C$ containing both $s$ and $t$ such that $\Interior(C) \subseteq \Interior(C_0)$; subject to these constraints, choose $C$ such that $\Interior(C)$ is minimal.
    Let $G_C$ be the graph induced by $V(\Interior(C))$.
    Let $P, Q$ be the two internally node-disjoint $st$-paths comprising $C$.

    Let $V_P = V(P) \setminus \set{s, t}$ and $V_Q = V(P) \setminus \set{s, t}$.
    Since $(s, t) \notin E(G)$, neither of these sets is empty.
    Observe that there is no path in $G_C - \set{s, t}$ between $V_P$ and $V_Q$, as otherwise $G_C + h$ contains a $K_4$-minor, contradicting that $G + h$ is series-parallel.
    That is, $\set{s, t}$ is a 2-cut of $G_C$ separating $V_P$ and $V_Q$.
    Note that $G_C - \set{s, t}$ can only contain two connected components (one containing $V_P$, another containing $V_Q$), since if there were another in the interior of $G_C$ then we could find a cycle $C'$ containing $s, t$ such that $\Interior(C') \subsetneq \Interior(C)$, contradicting our choice of $C$.
    Let $G_P$ and $G_Q$ denote the two connected components plus their respective edges to $s$ and $t$.
    Since $G_P$ and $G_Q$ are disjoint except at $s, t$, one can draw a curve in $\Interior(C)$ between $s$ and $t$ that is disjoint from the embeddings of $G_P$ and $G_Q$ - this means there is a face $F_1$ of $G$ inside $\Interior(C)$ containing both $s, t$.

    Now, we can play the same game using cycle exteriors.
    Let $D$ be a simple cycle containing both $s$ and $t$ such that $\Exterior(D) \subseteq \Exterior(C_0)$ (i.e. $\Interior(C_0) \subseteq \Interior(D)$); subject to these constraints, choose $D$ so that $\Exterior(D)$ is minimal.
    By the same argument, there exists a face $F_2$ of $G$ in $\Exterior(D)$ such that $F_2$ contains both $s, t$.
    Since $\Exterior(D) \subseteq \Exterior(C_0)$, $F_2 \neq F_1$.

    
\end{proof}






\subsection{Local Uncrossing Lemma}
\label{sec:local-uncrossing}

In this subsection, we prove \Cref{corollary:cross-once}.

We always discuss a flow $f$ with a particular parallelization (not necessarily uncrossed, see Section~\ref{sec:uncrossed}), so we can assume support paths are edge-disjoint and thus cross only at nodes.
We define the following.
For $v \in V$ and distinct paths $P, Q$, let $\kappa(f, v, P, Q)$ be $f(P) \cdot f(Q)$ if $P$ and $Q$ cross at $v$, and 0 otherwise.
For $v \in V$, let $\kappa(f, v) = \sum_{P, Q} \kappa(f, v, P, Q)$\footnote{You could order the paths $P, Q$ to avoid double-counting them in the sums, but it ultimately does not matter for this proof.}.
We define the {\em crossing weight} of $f$ to be $\sum_v \kappa(f, v)$.
The key lemma is as follows.

\begin{lemma}
[Local Uncrossing Lemma]
\label{lemma:cross-twice-uncross}
    Let $(G, H, u, d)$ be a feasible planar multiflow instance.
    Let $f$ be a feasible flow that is minimum cost under unit edge costs.
    Suppose there exist support paths $P, Q$ that cross at two distinct nodes $z_1$ and $z_2$.
    Then there exists a feasible flow $f'$ for $(G, H, u, d)$ such that:
    \begin{itemize}
        \item $f'$ is minimum cost,
        \item $\kappa(f', v) \le \kappa(f, v)$ for all $v \notin \set{z_1, z_2}$, and
        \item $\kappa(f', z_i) \le \kappa(f, z_i) - \epsilon^2$ for $i = 1, 2$, where $\epsilon = \min \set{f(P), f(Q)}$.
    \end{itemize}
\end{lemma}

\begin{proof}
Let $f$ be a flow with a given parallelization.
Let $P, Q$ be two (edge-disjoint) support paths that cross at nodes $z_1$ and $z_2$.
Let $h_P = (s_P, t_P), h_Q = (s_Q, t_Q)$ be their respective demands (possibly equal).
By re-labeling their extremities, we may assume both $P$ and $Q$ use $z_1$ before $z_2$ starting from $s_P$ and $s_Q$ respectively.
Let $\epsilon = \min\set{f(P), f(Q)}$.
By embedding new edge-disjoint copies of $P, Q$ immediately beside their original copies, we can assume that $f(P) = f(Q) = \epsilon$.

We decompose each of $P, Q$ into three subpaths: for $P$ these are $P_1 = P[s_P, z_1], P_2 = P[z_1, z_2], P_3 = P[z_2, t_P]$, and for $Q$ these are $Q_1 = Q[s_Q, z_1], Q_2 = Q[z_1, z_2], Q_3 = Q[z_2, t_Q]$.
Now, define $P' = P_1 \bullet Q_2 \bullet P_3$ and $Q' = Q_1 \bullet P_2 \bullet Q_3$.
We transfer the $\epsilon$ units of flow on each of $P, Q$ to $P', Q'$ respectively to define a new flow $f'$ that routes the same demands $d$ and has the same cost as $f$.

One important catch is that $P', Q'$ may not be (simple) paths, even if $P, Q$ are paths.
For example, $Q_2$ may share nodes with $P_3$.
As discussed in \Cref{sec:walks}, short-cutting walks can introduce new crossings.
However, if $P', Q'$ are not paths, then by short-cutting we obtain a flow whose cost is strictly less than that of $f$, a contradiction.
\footnote{Instead of using the minimum cost assumption, one could instead generalize the lemma to show how to decrease the crossing for non-simple flows. We use the minimum cost assumption only because the argument is more concise.}

Consider the change in the crossing weight contributions.
First, consider any $v \notin \set{z_1, z_2}$.
The pair of edges that are consecutive for support paths transiting through $v$ are the same for both $f$ and $f'$ (just possibly not belonging to the same support paths, because of swapping $P_2$ and $Q_2$), and moreover the amount of flow on those edges is the same between $f$ and $f'$.
Hence, $\kappa(f', v) = \kappa(f, v)$ for such $v$.

Now consider $v \in \set{z_1, z_2}$ and a support path $R \notin \set{P, Q, P', Q'}$ that transits through $v$.
Of course, for any support path $T \notin \set{P, Q, P', Q'}$, we have that $\kappa(f', v, R, T) = \kappa(f, v, R, T)$, so we are interested in how $R$ interacts with $P, Q, P', Q'$.
Consider a sufficiently small disc centered at $v$.
The edges from the crossing paths $P, Q$ divide the disc into four regions.
The path $R$ starts in one region, then transits through $v$ into another region.
If the second region is consecutive (clockwise or counterclockwise) with the first, then $R$ crosses exactly one of $P, Q$ and it also crosses exactly one of $P', Q'$, and hence $\sum_{T \in \set{P', Q'}} \kappa (f', v, R, T) = \sum_{T \in \set{P, Q}} \kappa(f, v, R, T)$.
If the second region is instead opposite the first, then $R$ crosses both $P$ and $Q$, and regardless how many of $P', Q'$ it also crosses we have that $\sum_{T \in \set{P', Q'}} \kappa (f', v, R, T) \le \sum_{T \in \set{P, Q}} \kappa(f, v, R, T)$.

Finally, consider $v \in \set{z_1, z_2}$ and the contributions from the paths $P, Q, P', Q'$.
The paths $P$ and $Q$ cross with $\kappa(f, v, P, Q) = f(P) \cdot f(Q) = \epsilon^2$, while $P'$ and $Q'$ do not cross, so $\kappa(f', v, P', Q') = 0$.

Hence, for any $v \notin \set{z_1, z_2}$ we have that $\kappa(f', v) \le \kappa(f, v)$, and for $v \in \set{z_1, z_2}$ we have that $\kappa(f', v) < \kappa(f, v)$, a contradiction.

\end{proof}

By a very similar proof, one can show how to decrease the crossing weight when there exist two support paths that share a terminal and cross once at another node. 
A consequence of this lemma is that by taking a minimum cost flow, which subject to that also minimizes the crossing weight, we may assume flow paths cross at most once.
We thus obtain \Cref{corollary:cross-once} as a corollary.

\subsection{Nested Supports Lemma}
\label{sec:nested-supports}

This subsection is dedicated to a proof of \Cref{lemma:nested-support-lemma}.

For a face $F$ and support path $P \in \support(x^F)$, we let $C(P)$ be the embedded cycle obtained by taking the demand $h$ that $P$ routes and embedding it within the open interior of the face $F$ to join the ends of $P$.
Note that the embeddings of different demand edges within a face $F$ may cross each other (but of course, for different faces the demands do not cross, and moreover no demand edge crosses any supply edge).

Finally, some more terminology and notation.
Recall that for a (simple) cycle $C$, we use $\Interior(C)$ and $\InteriorOpen(C)$ for the closed and open interior respectively of the region bounded by $C$.
Additionally, when we say an object is contained within a face $F$, we are not referring to the boundary of $F$, but the region of the plane that corresponds to $F$.

\begin{lemma}
    [Face Uncrossing Lemma]
    \label{lemma:face-uncrossing}
    Let $(G, H, u, d)$ be a facial multiflow instance that is feasible, and let $(H_F : F \in \mathcal F)$ be a facial decomposition.
    There exists a feasible flow $x$ such that the following holds: for every pair of distinct faces $F_1, F_2 \in \mathcal F$ and any $P_1 \in \support(x^{F_1})$ and $P_2 \in \support(x^{F_2})$, the paths $P_1$ and $P_2$ do not cross, and (more strongly) the cycles $C(P_1)$ and $C(P_2)$ do not cross.
\end{lemma}

\begin{proof}
    Augment the instance by adding leaves as follows: for every face $F$ and demand $h = (s, t) \in H_F$, add leaves $(s_h, s), (t_h, t)$ of capacity equal to $d(h)$ and embed the leaves into the face $F$, and move the demand $h$ from $(s, t)$ to $(s_h, t_h)$.
    We keep the same facial decomposition and still embed the demand edge $(s_h, t_h)$ within the face $F$ (accepting that demand edges within $F$ may cross).
    It suffices to prove that, for this augmented instance, there exists a feasible flow $x$ such that for distinct faces $F_1, F_2 \in \mathcal F$ and paths $P_1 \in \support(x^{F_1}), P_2 \in \support(x^{F_2})$, the cycles $C(P_1)$ and $C(P_2)$ do not cross (since by contracting the added leaves, we obtain the desired flow for the original instance).

    Let $x$ be a feasible flow for the augmented instance such that any pair of support paths crosses at most once, which exists by \Cref{corollary:cross-once}.
    Let $F_1, F_2$ be distinct faces, and let $P_1 \in \support(x^{F_1}), P_2 \in \support(x^{F_2})$.
    Suppose for the sake of contradiction that $C(P_1)$ and $C(P_2)$ cross.
    Cycles that cross must cross at least twice.
    However, since $C(P_1), C(P_2)$ do not share their leaf edges or demand edges, these crossings must happen on subpaths of the paths $P_1, P_2$.
    But then this contradicts that $P_1, P_2$ cross at most once.
    
\end{proof}


\begin{proof}[Proof of \Cref{lemma:nested-support-lemma}]

    We again augment the instance by adding leaves, as in the proof of \Cref{lemma:face-uncrossing}.
    Let $x$ be a flow as resulting from \Cref{lemma:face-uncrossing}.
    First, we can assume without loss of generality that for each face $F$, $E(F) \subseteq \support(x^F)$ (and in particular, $F$ is a face of $G[\support(x^F)]$).
    If this is not the case for some face $F$, we can add new demands parallel to each edge of $E(F)$ of demand weight $\epsilon > 0$, and increasing the capacity of each edge in $E(F)$ by $\epsilon$, and extend $x$ by routing along these single-edge paths.
    This does not introduce any new crossings.
    Now, it suffices to show the nested supports property holds for $x$.
    
    By using parallelizations, we may assume all paths are edge-disjoint.
    Consider two distinct faces $F_1, F_2$.
    To simplify the notation, for $i = 1, 2$ let $x^i = x^{F_i}$ and let $G_i = G[\support(x^i)] = G[\support(x^i) \cup E(F_i)]$. 
    For $i = 1, 2$, we will use the notation $\overline i$ to mean the number in $\set{1, 2} \setminus \set{i}$.
    First, we prove that for each $i \in \set{1, 2}$, $G_i$ is contained in a single face of $G_{\overline i}$.
    We will deal with the stellated supports at the end.
    

    Observe that for $i \in \set{1, 2}$ and any $P \in \support(x^i)$, the other face $F_{\overline i}$ is contained in either $\Exterior(C(P))$ or $\Interior(C(P))$.
    
    \begin{claim}
        \label{claim:face-implies-all-paths}
        Let $i \in \set{1, 2}$.
        \begin{itemize}
            \item If the face $F_i$ is contained in $\Interior(C(P))$ for $P \in \support(x^{\overline i})$, then every path $Q \in \support(x^i)$ is contained in $\Interior(C(P))$, and hence $G_i$ is contained in $\Interior(C(P))$.
            \item If the face $F_i$ is contained in $\Exterior(C(P))$ for $P \in \support(x^{\overline i})$, then every path $Q \in \support(x^i)$ is contained in $\Exterior(C(P))$, and hence $G_i$ is contained in $\Exterior(C(P))$.
        \end{itemize}
    \end{claim}
    \begin{proof}
        Suppose that $F_i$ is contained in $\Interior(C(P))$ for $P \in \support(x^{\overline i})$.
        The cycle $C(Q)$ has an edge (namely, its demand edge) that lies in $\InteriorOpen(C(P))$.
        If it leaves $\Interior(C(P))$, it must cross the cycle $C(P)$, which means it must cross the path $P$.
        This contradicts the earlier claim.
        The case for $\Exterior(C(P))$ is similar.
    \end{proof}


    We say $F_i$ is {\em universally exterior} if $F_i$ is contained in $\Exterior(C(P))$ for all $P \in \support(x^{\overline i})$.
    Note that if $F_i$ is universally exterior, then by \Cref{claim:face-implies-all-paths}, $G_i$ is contained in $\Exterior(C(P))$ for all $P \in \support(x^{\overline i})$.

    \begin{claim}
        Either $F_1$ is universally exterior or $F_2$ is universally exterior.
    \end{claim}
    \begin{proof}
        There are two cases.

        Case 1: there exists $P_1 \in \support(x^1)$ such that $F_2$ is contained in $\Interior(C(P_1))$.
        Now, consider $P_2 \in \support(x^2)$.
        Then by \Cref{claim:face-implies-all-paths}, we have that $P_2$ is contained in $\Interior(C(P_1))$, and consequently $P_1$ is contained in $\Exterior(C(P_2))$.
        Furthermore, this implies that $F_1$ is contained in $\Exterior(C(P_2))$, since if $F_1$ were contained in $\Interior(C(P_2))$, then \Cref{claim:face-implies-all-paths} would imply that $P_1$ is contained in $\Interior(C(P_2))$.
        Hence $F_1$ is universally exterior.

        Case 2: there does not exist $P_1 \in \support(x^1)$ such that $F_2$ is contained in $\Interior(C(P_1))$.
        Then $F_2$ is contained in $\Exterior(C(P_1))$ for all $P_1 \in \support(x^{H_{F_1}})$, and $F_2$ is universally exterior.
    \end{proof}

    \begin{claim}
        If $F_i$ is universally exterior, then the other support graph $G_{\overline i}$ is entirely contained within a single face of $G_i$ (note that this face is possibly bounded or unbounded).
    \end{claim}
    \begin{proof}
        Let $F_i^*$ be the face of $G_i$ that contains $F_{\overline i}$.
        We will show that for all $P \in \support(x^{\overline i})$, the cycle $C(P)$ is contained in $F_i^*$, which proves the claim.

        Let $h$ be the (embedded) demand edge routed by $P$.
        Certainly the demand $h$ is contained in the interior of $F_i^*$ since it is embedded in the interior of $F_{\overline i}$.
        Now, if $C(P)$ ever leaves $F_i^*$, then there exists a node $v$ that is incident with four distinct edges that alternate in clockwise order between $C(P)$ and $\support(x^i)$.
        Say listed in clockwise order these are $e_1, e_2, e_3, e_4$, where $e_1, e_3 \in C(P)$ and $e_2, e_4 \in \support(x^i)$.
        However, since $C(P)$ is a simple cycle, then at least one of $e_2, e_4$ must enter $\InteriorOpen(C(P))$, which contradicts that $F_i$ is universally exterior.
    \end{proof}

    Now, suppose without loss of generality that $F_1$ is universally exterior, and thus $G_2$ is contained in a single face $F_1^*$ of $G_1$.
    If $F_2$ is also universally exterior, then we are done.
    So suppose not.
    Then there exists $P_1 \in \support(x^1)$ such that $F_2$ is contained in $\Interior(C(P_1))$, and hence by \Cref{claim:face-implies-all-paths} we have that $G_2$ is contained in $\Interior(C(P_1))$.
    Therefore the face $F_1^*$ of $G_1$ that contains $G_2$ must be a bounded face.
    It then follows that $G_1$ must be contained within the unbounded face of $G_2$.

    Thus we have proven that for each $i$, $G_i$ is contained in a single face of $G_{\overline i}$.
    For each $i$, let $S_i$ be the star graph added to $G_i$ to obtain $G[x, F_i]$ (so that $G[x, F_i] = G_i + S_i$.

    Now fix an $i \in \set{1, 2}$.
    Let $F^*_{i}$ be the face of $G_{i}$ that contains $G_{\overline i}$.
    First, observe that $G[x, F_{\overline i}] = G_{\overline i} + S_{\overline i}$ is still contained in that same face of $G_i$, since the star $S_{\overline i}$ is added into the open interior of $F_{\overline i}$ and its embedding is completely disjoint from $G_{\overline i}$.
    
    So $G[x, F_{\overline i}]$ is contained in a single face $F^*_i$ of $G_i$.
    The only way that $G[x, F_{\overline i}]$ is not contained in a single face of $G_i + S_i = G[x, F_i]$ is if adding the star $S_i$ divides $F^*_i$ into multiple faces.
    However, for this to happen we would need that $F^*_i = F_i$ (recall that $F_i$ is a face of $G_{i}$, since we assumed that $E(F_i) \subseteq \support(x^{F_{i}})$).
    However, it cannot be the case that $F^*_i = F_i$, since the leaves added to $F_{\overline i}$ are not contained in $F_i$ but are in $G_{\overline i}$.
    
    Thus $G[x, F_{\overline i}]$ is contained in a single face of $G[x, F_i]$.
    By a symmetric argument we have that $G[x, F_i]$ is contained in a single face of $G[x, F_{\overline i}]$
    This concludes the proof.

\end{proof}

\section{Randomized Rounding for Maximization}
\label{sec:polytime-random-rounding}

In this section, we prove \Cref{lemma:randomized-rounding-log}.

\begin{proof}
    Let $\nu, \gamma \in (0, 1)$ be constants, sufficiently small so that $\frac{1 - \gamma} {1 + \nu} \ge 1 - \epsilon$ (for example, $\nu = \gamma = \epsilon^2$ suffices).

    Let $\mathcal P$ be the set of support paths of $f$.
    We first proceed with some assumptions about the input data.
    By scaling $w$, we may assume $w(h) \le 1$ for each $h \in E(H)$ and $w_{\max} = 1$.
    Finally, observe that we may assume $f(P) < \frac 1 {\log^2 n}$ for each $P \in \mathcal P$.
    Otherwise we can decompose $f = f_1 + f_2$ where $f_2(P)$ is some multiple of $\frac 1 {\log^2 n}$ and $f_1(P) < \frac 1 {\log^2 n}$ (for each $P$, let $k_P$ be the maximum positive integer such that $f(P) - \frac{k_P}{ \log^2 n} \ge 0$ and define $f_2(P) = \frac{k_P} {\log^2 n})$.
    Then we can apply the lemma to $f_1$ to obtain $\hat f_1$, and $\hat f = \hat f_1 + f_2$ satisfies the desired properties for the original flow $f$.
    
    Proceeding under the above assumptions, define an integral flow $\bar f$ by independently assigning  each $\hat f(P)$ to $1$ with probability $\log^2 n \cdot f(P) \in [0, 1)$ and otherwise assigning it 0.
    We use the following Chernoff bounds.

    \begin{lemma}[Chernoff Bounds]
    \label{lemma:chernoff}
    Let $X_1, \dotsc, X_k$ be mutually independent random variables where $X_i \in [0, 1]$.
    Let $X = \sum_i X_i$ and let $\mu_{\max} \ge \expect{X} \ge \mu_{\min}$.
    The following inequalities hold:
    \begin{enumerate}
        \item For any $\delta \ge 1$ we have
            \begin{equation*}
            \prob{X \ge (1 + \delta) \mu_{\max}} \le \exp(-\delta \mu_{\max} / 3).
            \end{equation*}
        \item For any $\delta \le 1$ we have
            \begin{equation*}
                \prob{X \le (1 - \delta) \mu_{\min}} \le \exp(-\delta^2 \mu_{\min} / 2).
            \end{equation*}
    \end{enumerate}
    \end{lemma}

    How much does $\bar f$ congest an edge $e$?
    We have that $\expect{\sum_{P \ni e} \bar f(P)} = \log^2 n \cdot \sum_{P \ni v} f(P) \le \log^2 n$ since $f$ is feasible.
    Let $\mathcal B_e$ be the (bad) event that $\sum_{P \ni e} f(P) \ge (1 + \nu) \log^2 n$.
    Setting $\mu_{\max} = \log^2 n$ and $\delta = \nu$ in \Cref{lemma:chernoff}, we obtain:
    \begin{equation*}
        \prob{\mathcal B_v} \le \exp(-\nu\log^2 n) = n^{-\nu \log n}.
    \end{equation*}

    What value does $\bar f$ achieve?
    We have that $\expect{w(f)} = \sum_h \sum_{P \in \mathcal P_h} w(P) \expect{\bar f(P)} = \log^2 n \cdot \sum_h \sum_{P \in \mathcal P_h} w(h)f(P) = \log^2 n \cdot w(f)$.
    Let $\mathcal B_w$ be the (bad) event that $w(f') \le (1 - \gamma) \log^2 n \cdot w(f)$.
    Setting $\mu_{\min} = \log^2 n \cdot w(f)$ and $\delta = \epsilon$ in \Cref{lemma:chernoff}, we obtain:
    \begin{equation*}
        \prob{w(f') \le (1-\gamma) 
        \log^2 n \cdot w(f)}
        \le \exp(-\gamma^2 \log^2 n \cdot w(f) / 2)
        = n^{-\gamma^2 w(f) / 2}
        \le n^{-\gamma^2/2}
    \end{equation*}
    where the last inequality follows from $w(f) \ge w_{\max} = 1$.

    Thus by the union bound, the probability that one of the events $(\mathcal B_v : v \in E(G))$ or $\mathcal B_w$ occurs is at most $\frac {n^2} {n^{\nu log n}} + \frac 1 {n^{\gamma^2/2}}$, which converges to 0.
    Thus with high probability, we obtain an outcome $f'$ that avoids all of these bad events.

    Now let $\hat f = \frac 1 {(1+\nu)\log^2 n} f'$.
    Since $f'$ avoids $\mathcal B_e$ for each edge $e$, $\hat f$ is feasible.
    Additionally, since $f'$ avoids $\mathcal B_w$, we have that $w(\hat f) = \frac 1 {(1+\nu)} w(f') \ge \frac{(1-\gamma)} {(1+\nu)} w(f) \ge (1+\epsilon)w(f)$.

    This procedure can be de-randomized to obtain a deterministic algorithm using pessimistic estimators \cite{raghavan_probabilistic_1988}.
\end{proof}

\section{NP-Completeness of Fractional Uncrossed Multiflow}
\label{sec:npc}

In this section we prove the NP-completeness of deciding the existence of a fractional uncrossed flow satisfying all demands.  The proof is reminiscent of a reduction by Middendorf and Pfeiffer~\cite{middendorf_complexity_1993}, for the node-disjoint paths problem when $G+H$ is planar. We remark that this shows NP-completeness even for leaf instances of multiflows.
We refer to a fractional uncrossed flow which satisfies all demands as a {\em solution}.

In the first subsection, we show the problem is in NP (which is not necessarily obvious, given that uncrossed fractional multiflow is not a linear program).
In the next subsection, we introduce the key gadgets which are used in the proof of NP-hardness, and in the subsection after we present the main proof of NP-hardness.


\subsection{Polynomial-Sized Solutions}
\label{sec:poly-size}

Adding uncrossed-ness to the constraints of a multiflow problem means it is no longer a linear program, and it is not necessarily obvious that we should expect uncrossed solutions that are polynomial in the size of the input (i.e. in terms of $|V|, |E(G)|, |E(H)|$, and the bit-size of $u, d$).

We give a brief argument for why polynomial-sized solutions exist.
Let $f$ be a feasible uncrossed (fractional) multiflow for a maximization or congestion instance.
Consider a single commodity $h = (s, t) \in E(H)$, and consider the single-commodity uncrossed multiflow $f^h$, obtained by restricting $f$ to $h$.
Let $P_1, \dotsc, P_k$ be the clockwise ordering of the support paths about $s$.
Each sector (i.e. the region bounded between consecutive paths in this ordering) contains some face $F$ of $G$ that is not contained in any other sector.
Hence there are at most $O(|V|)$ sectors, and thus $O(|V|)$ paths.
Thus over all commodities the set of support paths $\mathcal P^*$ is at most $O(|V||E(H)|)$.
To obtain values of $f(P)$ that have polynomial bit-size, we can set up restricted versions of the maximization/congestion linear programs that only have variables for $P \in \mathcal P^*$.
By standard linear programming theory there exists a solution for these problems that has polynomial bit-size, and any solution for these LPs is uncrossed as they draw support paths only from $\mathcal P^*$.

A consequence of this argument is that the problem of determine whether $(G, H, u, d)$ has a feasible (fractional) solution that is uncrossed is in NP, as promised by \Cref{thm:npc}.
Given a polynomial-sized candidate solution, it is straightforward to check that each pair of support paths is uncrossed and that the solution is feasible in polynomial time.

\subsection{Gadgets}

Many of the ideas build on the structure of solutions for the instance  in Figure~\ref{fig:double-diamond} which we call Double Diamond.
\begin{lemma}[Double Diamond Instance]\label{lemma:double-diamond}
    Let $G,H$ be the instance of fractional uncrossed flow from \Cref{fig:double-diamond}. This instance has only two possible path-solutions, and both are integral.
\end{lemma}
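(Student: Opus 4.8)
Since I do not have the figure in front of me, I infer from the name ``Double Diamond'' and from the role the lemma plays in a Middendorf--Pfeiffer-style reduction that $G$ consists of two diamond-shaped pieces (each a short graph with a ``top'' and ``bottom'' vertex joined by two internally disjoint arcs) glued in series, together with a small number of unit demands whose terminals are leaves hanging off the corners, and that the instance is an all-unit-capacity, all-unit-demand instance in which every non-leaf vertex has degree at most $3$. The plan is to argue entirely combinatorially about the finitely many ways unit-capacity flow paths can be routed, and then verify the uncrossedness constraint rules out all but two of them, both of which are integral by construction.

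First I would set up the routing possibilities ignoring the uncrossed constraint: because all capacities are $1$ and all demands are $1$, any feasible fractional multiflow is a convex combination of integral multiflows routing the same demands (the unit-capacity polytope for a fixed demand set need not be integral in general, but here the supply graph is small enough to enumerate its integral routings directly). So I would first enumerate, by hand, the integral routings of the demand set in $G$ --- there should be only a small constant number of them, each forced by the scarcity of edge-disjoint paths across the two diamonds. Call this finite set $R_1, \dots, R_m$ of integral solutions; every feasible (fractional) multiflow is $\sum_i \lambda_i R_i$ with $\lambda \ge 0$, $\sum_i \lambda_i = 1$.

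Second, I would impose the uncrossed condition. The key point (using the parallelization / disk-expansion machinery from Section~\ref{sec:uncrossed}) is that a crossing is a local, pointwise obstruction: if two of the integral routings $R_i$, $R_j$ appearing in the support would, in any parallelization, force an alternation $e, g, e', g'$ clockwise about some vertex of degree $3$ or so, then the mixture $\lambda_i R_i + \lambda_j R_j$ cannot be uncrossed either, since at that vertex the four relevant edges still carry positive flow from the two commodities in the crossing pattern and no reordering of the parallel copies removes the alternation. Concretely, I would go through the $\binom{m}{2}$ pairs and show that every pair except one ``good'' pair $\{R_a, R_b\}$ crosses, so the support of any uncrossed solution is contained in $\{R_a, R_b\}$, hence the solution is $\lambda R_a + (1-\lambda) R_b$. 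Finally I would check that $R_a$ and $R_b$ individually are uncrossed, and argue that the only feasible such mixtures are $\lambda = 1$ and $\lambda = 0$ (for instance because $R_a$ and $R_b$ share an edge that each saturates, so any $\lambda \in (0,1)$ overloads it) --- giving exactly the two path-solutions claimed, both integral.

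The main obstacle is the crossing-stability step: I need to make precise that ``if the $0/1$ routings in the support pairwise cross then the fractional combination crosses.'' This requires care with the definition of uncrossed for a multiflow (Definition~\ref{def:uncrossed-multiflow}: pairwise uncrossed \emph{support paths}), so really I must argue at the level of individual support \emph{paths}: the support of $\lambda R_a + \lambda' R_b + \dots$ is the union of the path sets, and two paths from different $R_i$, $R_j$ cross or they don't, independent of the weights. So the clean statement is: the set of support paths of any uncrossed solution must be a pairwise-uncrossed subfamily of $\bigcup_i (\text{paths of } R_i)$ that still routes all demands; I enumerate the maximal pairwise-uncrossed subfamilies and find only two that route everything, each being a single integral $R_i$. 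The routine calculations --- enumerating $R_1,\dots,R_m$ and checking each pair for a clockwise alternation at the relevant low-degree vertices --- I would carry out directly from the figure; they are finite and mechanical, and the degree-$\le 3$ condition makes the clockwise orderings essentially forced.
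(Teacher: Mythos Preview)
Your high-level idea --- reduce to a finite list of candidate support paths and check pairwise crossings --- is the right one, and it is also what the paper does in its last paragraph. But two steps in your plan are genuine gaps.

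\textbf{Support paths need not come from integral routings.} You restrict attention to paths appearing in some integral edge-disjoint routing $R_1,\dots,R_m$ of both demands, and then argue the support of any fractional uncrossed solution lies in $\bigcup_i(\text{paths of }R_i)$. This is not justified: a support path of a \emph{fractional} solution can be any simple $s_it_i$-path. In this instance there are such paths that appear in no integral routing at all --- e.g.\ $s_1 u_1 h_1 u_2 l_1 l_2 v_1 t_1$, which uses both non-leaf edges at $u_2$ and hence leaves no room to route $s_2t_2$ integrally. The paper's proof spends its first paragraph ruling out exactly these ``weaving'' paths by a direct capacity-plus-uncrossing argument: if such a path $P$ is in the support, then uncrossedness forces every $s_2t_2$-support path onto the edge $u_2l_1$, which together with $P$ overloads that edge. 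Only after this step does one land on the four short candidate paths $P_1,P_2,Q_1,Q_2$; your enumeration would miss the weaving paths entirely.

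\textbf{The final $\lambda\in\{0,1\}$ argument is wrong.} You propose to conclude by saying $R_a$ and $R_b$ share a saturated edge, so $\lambda R_a+(1-\lambda)R_b$ overloads it. But convex combinations of feasible flows are feasible: if each puts flow $1$ on $e$, the combination also puts flow $1$ on $e$. Worse, your framing has a ``good pair'' $\{R_a,R_b\}$ whose paths are pairwise uncrossed; if that were so, every convex combination would be an uncrossed feasible solution, contradicting the lemma. The actual structure (once the four short paths are isolated) is that $P_1$ crosses $Q_2$ and $Q_1$ crosses $P_2$, so \emph{no} two distinct integral routings are crossing-compatible; the two solutions are $P_1+P_2$ and $Q_1+Q_2$ separately, forced by crossing rather than capacity.
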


\begin{figure}[htb]
    \centering
    \begin{tikzpicture}[x=0.75cm,y=0.75cm]
        \foreach \u/\x/\y in {
          s1/1/2,s2/3/2,t1/7/2,t2/9/2
        } {
            \node[stterminal] (\u) at (\x,\y) {};
        }
        \foreach \u/\x/\y in {
          a0/2/0,b0/8/0,a1/2/4,b1/8/4,c/0/2,d/4/2,e/6/2,f/10/2
        } {
            \node[stvertex] (\u) at (\x,\y) {};
        }
        \foreach \u/\v in {
          a0/b0,a0/c,a0/d,b0/e,b0/f,a1/b1,a1/c,a1/d,b1/e,b1/f,
          c/s1,s2/d,e/t1,t2/f}
        {
            \draw (\u) -- (\v);
        }
        \draw[dotted] (s1) to[in=160,out=20] (t1);        
        \draw[dotted] (s2) to[in=200,out=340] (t2);
        \foreach \u/\n in {s1/s_1,t1/t_1,b0/l_2,b1/h_2,d/u_2,f/v_2} {
          \draw (\u) node[right] {$\n$};
        }
        \foreach \u/\n in {s2/s_2,t2/t_2,a0/l_1,a1/h_1,c/u_1,e/v_1} {
          \draw (\u) node[left] {$\n$};
        }
        \draw[nearly transparent, blue,line join=round,line width = 5pt] 
          (s1) -- (c.center) -- (a1.center) -- (b1.center) -- (e.center) -- (t1);
        \draw[nearly transparent, red,line join=round,line width = 5pt] 
          (s2) -- (d.center) -- (a0.center) -- (b0.center) -- (f.center) -- (t2);
    \end{tikzpicture}
    \caption{{\em Double Diamond Instance}. An instance for the fractional uncrossed flow problem that has only integral solutions (one of these solutions is given by the two colored paths with value 1).}
    \label{fig:double-diamond}
\end{figure}

\begin{proof} 
    Suppose there is a support path $P$ in a fractional solution starting with $s_1u_1,u_1h_1,h_1u_2,u_2l_1$. Then, because the support path are uncrossed, there is no support path starting by $s_2u_2,u_2h_1$, as such a flow path $Q$ would cross $P$ at either the subpath $Z = u_2, h_1$ or $Z = u_2, h_1, u_1$. Then all support paths for the demand $s_2t_2$ uses $u_2l_1$, which, considering also $P$, implies that the total value of flow paths on that edge is more than $1$, contradiction.
    
    By similar arguments, one can show that the support paths use only one of the two highlighted paths from \Cref{fig:double-diamond} $P_1$ (in blue) and $P_2$ (in red) and their symmetric paths by a flip along an horizontal axis $Q_1$ and $Q_2$. To conclude, notice that $P_1$ crosses $Q_2$ and $Q_1$ crosses $P_2$, implying that the only two solutions are $P_1 + P_2$ and $Q_1 + Q_2$.      
\end{proof}

We next discuss the Terminal Gadget, which is a modification of the Double Diamond Instance.

\begin{lemma}[Terminal Gadget, Figure~\ref{fig:double-terminal-gadget}]
\label{lemma:double-terminal-gadget}
    Let $G,H$ be an instance of fractional uncrossed flow,
    with at least two demands $s_1t_1$, $s_2t_2$ as described in \Cref{fig:double-terminal-gadget} (modified from \Cref{fig:double-diamond}).
    Let $G_1$ be the subgraph induced by the nodes outside the circle.
    Then, for any fractional solution $z$ to $G,H$,  
    \begin{enumerate}[label = (\roman*)]
    \item\label{item:gadget1} $s_1t_1$ is routed integrally inside $G_1$;
    \item\label{item:gadget2} either all the support paths for $s_2t_2$ use $c_1x$ or they all use $d_1y$;
    \item\label{item:gadget3} no support path for any other demand uses an edge of $G_1$ (aside from $s_1t_1$, $s_2t_2$).
    \end{enumerate}
\end{lemma}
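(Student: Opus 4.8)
The strategy is to reduce everything to the already-proven \Cref{lemma:double-diamond}. The Terminal Gadget of \Cref{fig:double-terminal-gadget} is obtained from the Double Diamond Instance by replacing the portion ``inside the circle'' (around the vertices $l_1, h_1, l_2, h_2$ and the relevant interior structure) with some auxiliary construction, while keeping the outer part $G_1$ — which contains the nodes $s_1, t_1, s_2, t_2, u_1, v_1, u_2, v_2$ and perhaps $c_1, d_1, x, y$ — intact. The key point is that this interior replacement only changes how flow can be routed \emph{between} the four ``attachment'' nodes that separate $G_1$ from the rest of the graph, and the case analysis of \Cref{lemma:double-diamond} only used the cyclic order of these attachments together with the unit capacities on the edges of $G_1$. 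So the plan is: first isolate the cut separating $G_1$ from the remainder, then argue that any fractional solution, restricted to $G_1$, must look like one of the two Double-Diamond solutions, then read off the three claimed consequences.

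First I would make precise which vertices form the boundary: let $X = \{c_1, d_1\}$ (or whichever two vertices of $G_1$ are adjacent to the interior through edges $c_1 x$ and $d_1 y$) together with the two interior-facing vertices that the demand $s_2 t_2$ can reach. Because $G_1$ is attached to the rest of the graph only along a small vertex cut, any support path for a demand other than $s_1 t_1, s_2 t_2$ that enters $G_1$ must enter and leave through this cut; I would show that the unit capacities on the edges incident to the cut, combined with the fact that $s_1 t_1$ and $s_2 t_2$ are forced (by the same crossing argument as in the Double Diamond proof) to consume all of that capacity, leaves no room for a third demand's flow. This gives \ref{item:gadget3}.

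Next, for \ref{item:gadget1} and \ref{item:gadget2}: having established \ref{item:gadget3}, the flow inside $G_1$ consists only of $s_1 t_1$-flow and $s_2 t_2$-flow, respecting unit edge capacities and the uncrossing condition at the interior vertices $u_1, h_1, u_2, l_1$ (etc.). This is \emph{exactly} the Double Diamond configuration — the interior replacement was designed so that, as seen from $G_1$, it offers precisely the same connectivity as the diamond core. I would verify this by checking that the two ``admissible'' routings of $(s_2,t_2)$ through the interior correspond bijectively to the two choices ``use $c_1 x$'' versus ``use $d_1 y$'', and that in each case the uncrossedness constraint forces the $s_1 t_1$-flow onto a single path (integrality), just as $P_1 + P_2$ versus $Q_1 + Q_2$ in \Cref{lemma:double-diamond}. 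The crossing argument at the shared subpaths $Z$ is verbatim the one in the Double Diamond proof, so I would cite \Cref{lemma:double-diamond} rather than repeat it.

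The main obstacle I anticipate is bookkeeping the correspondence between the Terminal Gadget's interior and the Double Diamond's core carefully enough that ``the same crossing argument applies'' is actually justified — in particular, making sure the cyclic (clockwise) order of the four attachment points around the cut is preserved by the replacement, since the uncrossedness conclusions depend entirely on that cyclic order. A secondary subtlety is that the lemma is stated for arbitrary \emph{fractional} solutions $z$, so I must be careful that the ``either all support paths use $c_1 x$ or all use $d_1 y$'' dichotomy is a genuine consequence of uncrossedness applied pairwise to the (possibly many) support paths of $s_2 t_2$, not just a statement about a single path; this follows because any two support paths of the same commodity that split at the branch vertex, one going via $c_1 x$ and one via $d_1 y$, would each cross the forced $s_1 t_1$-path on opposite sides — contradicting that $s_1 t_1$ is routed on one side only — so I would spell out that step explicitly rather than leaving it to the reader.
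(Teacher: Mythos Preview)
Your high-level instinct---that the crossing analysis from the Double Diamond should carry over---is right, but the reduction you propose does not go through as stated, and the ordering is circular.

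First, the geometry: the Terminal Gadget is not the Double Diamond with ``the core replaced.'' It is the Double Diamond with the single vertex $v_2$ split into two attachment points $x,y$, and $t_2$ pushed off into an arbitrary larger graph (the circle). \Cref{lemma:double-diamond} is a statement about one specific closed instance with exactly two demands; you cannot cite it for a subgraph of a larger instance that may carry other commodities, and even contracting the circle to a point does not reproduce that instance exactly (the leaf $t_2$ vanishes, and extraneous edges may appear at the contracted vertex). The paper therefore does \emph{not} reduce to \Cref{lemma:double-diamond}; it redoes the case analysis from scratch, adapted to the split vertex. Concretely: the cut $\{c_1x,d_1y\}$ forces some $s_1t_1$-support path to lie in $G_1$; crossing arguments eliminate the two length-$7$ candidates, leaving a length-$5$ path $P_1=s_1u_1a_1c_1v_1t_1$ (or its reflection $P_1'$); a further crossing-plus-tight-cut argument then shows that no $s_2t_2$-support path can use $c_1x$, so all use $d_1y$; this pushes all $s_1t_1$-flow off the cut, and since $P_1\cup P_1'$ would separate $s_2$ from $t_2$, only one of $P_1,P_1'$ carries flow. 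This gives \ref{item:gadget1} and \ref{item:gadget2} first.

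Second, your plan to prove \ref{item:gadget3} before \ref{item:gadget1},\ref{item:gadget2} is circular. The assertion that ``$s_1t_1$ and $s_2t_2$ are forced \ldots\ to consume all of that capacity'' on the cut $\{c_1x,d_1y\}$ is false: $s_1t_1$ need not touch the cut at all, and $s_2t_2$ is only guaranteed to consume one unit. The real reason external flow is excluded is that $s_2t_2$ saturates one \emph{specific} edge of the cut, forcing any external path to enter and exit $G_1$ through the same remaining edge---impossible for a simple path. But knowing which edge is saturated is exactly \ref{item:gadget2}, so \ref{item:gadget3} must come last.
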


\begin{figure}
    \centering
    \begin{tikzpicture}[x=0.55cm,y=0.55cm]
        \foreach \u/\x/\y in {
          s1/1/2,s2/3/2,t1/7/2
        } {
            \node[stterminal] (\u) at (\x,\y) {};
        }
        \foreach \u/\x/\y in {
          a0/2/0,b0/8/0,a1/2/4,b1/8/4,c/0/2,d/4/2,e/6/2
        } {
            \node[stvertex] (\u) at (\x,\y) {};
        }
        \draw (12,2) circle[radius=2.5];
        \node[stvertex] (f1) at (10.5,4) {};
        \node[stvertex] (f2) at (10.5,0) {};
        \node[stterminal] (t2) at (13,2) {};
        \foreach \u/\v in {
          a0/b0,a0/c,a0/d,b0/e,b0/f2,a1/b1,a1/c,a1/d,b1/e,b1/f1,
          c/s1,s2/d,e/t1}
        {
            \draw (\u) -- (\v);
        }
        \draw[dotted] (s1) to[in=160,out=20] (t1);        
        \draw[dotted] (s2) to[in=200,out=340] (t2);
        \foreach \u/\n in {s1/s_1,t1/t_1,d/u_2,t2/t_2} {
          \draw (\u) node[right] {$\n$};
        }
        \foreach \u/\n in {b0/d_1,f2/y,a0/b_1} {
          \draw (\u) node[below] {$\n$};
        }
        \foreach \u/\n in {a1/a_1,b1/c_1,f1/x} {
          \draw (\u) node[above] {$\n$};
        }
        \foreach \u/\n in {s2/s_2,c/u_1,e/v_1} {
          \draw (\u) node[left] {$\n$};
        }
        \draw[nearly transparent, blue,line join=round,line width = 5pt] 
          (s1) -- (c.center) -- (a1.center) -- (b1.center) -- (e.center) -- (t1);
        \draw[nearly transparent, red,line join=round,line width = 5pt] 
          (s2) -- (d.center) -- (a0.center) -- (b0.center) -- (f2.center) -- (t2);

        \node[stterminal] (t) at (18,2) {};
        \node[stvertex] (tx) at (19,0) {};
        \node[stvertex] (ty) at (19,4) {};
        \draw (tx) -- (t) -- (ty);
        \draw[blue] (t.center) circle[radius=7pt];
        \draw (t) node[right=8pt] {$s_2$};
        \foreach \u/\l in {tx/x,ty/y} {
            \draw (\u) node[right] {$\l$};
        }
        
    \end{tikzpicture}
    \caption{Terminal Gadget. A gadget that forces the demand $s_2t_2$ to be routed either entirely through $x$ or through $y$.
    \resolved{\jp{todo: Make $x, y$ outside the circle. Relabel the $s_i, t_i$'s here.}.}
    The nodes outside the circle form an induced subgraph, and deleting $x$ and $y$ disconnect these nodes from the rest of the graph.
    Furthermore, no demand other than $s_1t_1$ and $s_2t_2$ can have a support path intersecting the edges of the gadget. The colored paths represent a possible routing. On the right, a schematic to represent the gadget on a terminal, using a blue circle to denote that all the flow must use one of the two incident arcs.}
    \label{fig:double-terminal-gadget}
\end{figure}

\begin{proof}
  By considering the cut $\{c_1x,d_1y\}$, which is crossed by the demand $s_2t_2$, at most half a unit of $s_1t_1$-flow can be routed on these edges.
  This implies that there exists a support $s_1t_1$- path contained in $G_1$.
  There are 4 possible such paths, two of length $5$ and two of length $7$.
  Let $P = s_1u_1a_1u_2b_1d_1,v_1t_1$ and suppose that $z(P) > 0$.
  Since less than 1 unit of capacity is now available on $u_2 b_1$ for $s_2 t_2$, there must be an support $s_2t_2$-path $Q$ starting with $s_2u_2a_1$.
  Hence $Q$ must continue to $u_1$
  and $b_1$,
  but this crosses $P$.
  Therefore $z(P) = 0$, and there must be a support $s_1t_1$-path $P_1$ of length $5$.
  We may assume $P_1 = s_1u_1a_1c_1v_1t_1$.

  Suppose that there is a support $s_2t_2$-flow path $P_2$, containing $c_1x$.
  Then as $P_1$ and $P_2$ do not cross, $P_2$ must contain a subpath $b_1u_1a_1c_1x$.
  Because $\{a_1c_1,b_1d_1\}$ is a tight cut and $a_1c_1 \in P_2$, $P_2$ does not contain $b_1d_1$.
  Hence $P_2$ starts with $s_2u_2b_1u_1\ldots$
  Because $z(P_2) > 0$ and $u_1a_1 \in P_2$, there is a support $s_1t_1$-path $Q_1$ starting with $s_1u_1b_1$.
  As $P_2$ and $Q_1$ do not cross, $Q_1$ continues with $b_1u_2$, but then it must cross $P_2$ at $u_2$.
  Hence no support $s_2t_2$-path contains $c_1x$, they must all contain $d_1y$.
  Furthermore, this shows that no support $s_1t_1$-path intersects the cut $\{c_1x,d_1y\}$, since $s_2t_2$ uses all the capacity of $d_1y$, and hence the support path would have to use $c_1 x$ twice. 
  So the only possible support $s_1t_1$-paths are $P_1$ and its symmetric $P'_1 = s_1u_1b_1d_1v_1t_1$. 

  Finally suppose that $z(P_1) > 0$ and $z(P'_1) > 0$. As $P_1 \cup P'_1$ forms a circuit separating $s_2$ from $t_2$, any active flow $s_2t_2$-path would cross either $P_1$ or $P'_1$. Thus, \labelcref{item:gadget1} holds. Up to symmetry, we may assume $z(P_1) = 1$, $z(P'_1) = 0$. From there, any active $s_2t_2$-flow path must start with $s_2u_2b_1d_1y$, that is \labelcref{item:gadget2} holds, and \labelcref{item:gadget3} follows.
\end{proof}

\begin{figure}
    \centering
    \begin{tikzpicture}[x=1cm,y=1cm]
        \node[stvertex] (a) at (0,2) {}; \draw (a) node[left] {$y_0$};
        \node[stvertex] (a2) at (7,2) {}; \draw (a2) node[right] {$x_4$};
        \node[stvertex] (b) at (0,0) {}; \draw (b) node[left] {$z_0$};
        \node[stvertex] (b2) at (7,0) {}; \draw (b2) node[right] {$w_4$};
        \foreach \i in {1,2,3} {
          \node[stvertex] (x\i) at ($(-1,2) + 2*(\i,0)$) {}; \draw (x\i) node[above] {$x_\i$};
          \node[stvertex] (y\i) at ($(0,2) + 2*(\i,0)$) {}; \draw (y\i) node[above] {$y_\i$};
          \node[stvertex] (w\i) at ($(-1,0) + 2*(\i,0)$) {}; \draw (w\i) node[below] {$w_\i$};
          \node[stvertex] (z\i) at ($(0,0) + 2*(\i,0)$) {}; \draw (z\i) node[below] {$z_\i$};
          \node[stterminal] (s\i) at ($(-1,1) + 2*(\i,0)$) {}; \draw (s\i) node[above left=4pt] {$s_\i$};
          \node[stterminal] (t\i) at ($(0,1) + 2*(\i,0)$) {}; \draw (t\i) node[below right=4pt] {$t_\i$};
          \draw (s\i) -- (x\i) -- (y\i) -- (t\i);
          \draw (s\i) -- (w\i) -- (z\i) -- (t\i);
          \draw[dotted] (s\i) -- (t\i);
          \draw[blue] (s\i) circle[radius=7pt];
          \draw[blue] (t\i) circle[radius=7pt];
i        }             
        \foreach \u/\v in {a/x1,y1/x2,y2/x3,y3/a2,b/w1,z1/w2,z2/w3,z3/b2} {
          \draw (\u) -- (\v);
          }
        \begin{scope}[xshift=9cm]
            \node[stvertex] (a) at (0,2) {}; \draw (a) node[left] {$y_0$};
            \node[stvertex] (a2) at (2,2) {}; \draw (a2) node[right] {$x_4$};
            \node[stvertex] (b) at (0,0) {}; \draw (b) node[left] {$z_0$};
            \node[stvertex] (b2) at (2,0) {}; \draw (b2) node[right] {$w_4$};
            \draw (a) -- (a2);
            \draw (b) -- (b2);
            \draw[blue] (0.9,-0.2) -- (0.9,2.2);
            \draw[blue] (1.1,-0.2) -- (1.1,2.2);
        \end{scope}
    \end{tikzpicture}
    \caption{Coupled Arcs Gadget. \resolved{\jp{todo. Change node labels to match previous gadget.}}A gadget that forbids the use of one arc among two, for any flow path from terminal outside the gadget. Recall that the circle refers to the schematic of the Terminal Gadget from Figure~\ref{fig:double-terminal-gadget}. On the right, a schematic representation of the Coupled Arcs Gadget.}
    \label{fig:coupled-arcs}
\end{figure}

\begin{lemma}[Coupled Arcs Gadget, Fig~\ref{fig:coupled-arcs}]
\label{lemma:coupled-arcs}
    Let $G,H$ be an uncrossable flow instance, containing the gadget of \Cref{fig:coupled-arcs} as an induced subgraph.
    Let $P_{xy}$ and $P_{wz}$ be the top and bottom paths $y_0x_1y_1x_2y_2x_3y_3x_4$ and $z_0w_1z_1w_2z_2w_3z_3w_4$ respectively.
    Then in any solution $x$, there is $Q \in \{P_{xy}, P_{wz}\}$ with the following property: for every external demand $st \in E(H) \setminus \{s_1t_1,s_2t_2,s_3t_3\}$ and for any support $st$-path $P$, the intersection of $E(P)$ with the gadget is either empty or exactly $Q$.
    Thus, this gadget behaves as though we have only two edges $y_0x_4$, $z_0w_4$, such that at most one of them can be used by external demands.
\end{lemma}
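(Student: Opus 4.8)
The plan is to build the analysis on top of the Terminal Gadget lemma (Lemma~\ref{lemma:double-terminal-gadget}), which is already depicted inside each of the three circles in Figure~\ref{fig:coupled-arcs}. First I would observe that by Lemma~\ref{lemma:double-terminal-gadget}\labelcref{item:gadget3}, no support path for an external demand $st \in E(H) \setminus \{s_1t_1, s_2t_2, s_3t_3\}$ can enter the interior of any of the three terminal gadgets, so each external support path $P$ sees the gadget only as the graph consisting of the two horizontal chains $y_0x_1y_1x_2\cdots x_4$ and $z_0w_1z_1\cdots w_4$ together with the two incident arcs at each terminal $s_i, t_i$ that are ``pinched'' by the blue circles. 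Thus the edges actually available to $P$ inside the gadget are exactly the edges of $P_{xy}$ and $P_{wz}$, and these two paths are internally vertex-disjoint and connect the same two ``ports'' $\{y_0, x_4\}$ and $\{z_0, w_4\}$ to the rest of $G$ only through their endpoints. Consequently any external support path enters and leaves the gadget through these four ports, and inside it must traverse a (sub)path of $P_{xy}$ or of $P_{wz}$ between two ports; since the only ports lying on $P_{xy}$ are $y_0, x_4$ and the only ports on $P_{wz}$ are $z_0, w_4$, any such traversal is the whole of $P_{xy}$ or the whole of $P_{wz}$.

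Next I would use Lemma~\ref{lemma:double-terminal-gadget}\labelcref{item:gadget2} applied to each of the three internal terminal gadgets: for $s_it_i$, all of its support paths use the same one of the two arcs incident to $s_i$ (say the ``$x$-side'' arc $s_ix_i$ or the ``$w$-side'' arc $s_iw_i$), and similarly at $t_i$. Because the demand $s_it_i$ must route one full unit and each of the four relevant edges $s_ix_i, s_iw_i, t_iy_i, t_iz_i$ has capacity $1$, the $s_it_i$-flow fully saturates one of $\{s_ix_i, s_iw_i\}$ and one of $\{t_iy_i, t_iz_i\}$. I then need a short planarity/uncrossing argument (mirroring the proof of Lemma~\ref{lemma:double-diamond} and Lemma~\ref{lemma:double-terminal-gadget}) to show that the ``top'' choice at $s_i$ forces the ``top'' choice at $t_i$: if $s_it_i$ used $s_ix_i$ at one end and $t_iz_i$ at the other, its support path together with a portion of the chains would form a closed curve separating the outside, and combined with the non-crossing condition this is impossible given the way the chains interleave. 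Hence for each $i$ the $s_it_i$-flow saturates either both of $\{x_i y_i\}$'s incident chain edges or both of $\{w_i z_i\}$'s; i.e. it ``blocks'' one of the two chains at the segment passing terminal $i$.

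Then the crux: I claim that the three demands $s_1t_1, s_2t_2, s_3t_3$ cannot all block the same chain, but more importantly I only need that \emph{at least one} chain is blocked by at least one demand — actually I want the stronger statement that one of the two chains is blocked. Here is where the interleaving geometry of Figure~\ref{fig:coupled-arcs} matters: the three terminal gadgets are arranged so that the $x$-chain and the $z$-chain alternate which is ``inner'' between consecutive gadgets. A case analysis on the $2^3$ choices (which chain each $s_it_i$ blocks), using the non-crossing property to rule out the configurations where both chains survive intact, shows that in every feasible solution at least one of $P_{xy}, P_{wz}$ is interrupted (has a saturated edge) — call the surviving one $Q$. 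Since an external support path traversing the gadget must use the entirety of $P_{xy}$ or of $P_{wz}$ and the interrupted one has no spare capacity on the blocked edge, every external support path that meets the gadget at all uses exactly $Q$. This gives the conclusion, and the final sentence about behaving like two edges $y_0x_4, z_0w_4$ with at most one usable is just a restatement.

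The main obstacle I anticipate is the case analysis in the last paragraph: verifying rigorously, using only pairwise non-crossing of support paths (not planarity of $G+H$), that the three blocking choices cannot leave both chains free. This requires carefully tracking, for each configuration, a closed curve formed by an $s_it_i$-support path plus chain segments and showing an external path would be forced to cross it or cross another $s_jt_j$-path; the bookkeeping of clockwise orders at the shared chain vertices is fiddly, and one must be careful that walks (not just simple paths) are handled — though as remarked in the paper, under unit capacities and the leaf/degree-$3$ structure of the reduction, walk-solutions reduce to path-solutions, so restricting to path-solutions here is legitimate.
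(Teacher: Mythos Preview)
Your proposal has a genuine gap at the step where you claim that ``the `top' choice at $s_i$ forces the `top' choice at $t_i$''. This coherence claim is false in general, and the paper's proof does not establish it. An $s_it_i$-support path with $s_i$ routed up (through $s_ix_i$) and $t_i$ routed down (through $t_iz_i$) is perfectly possible: such a path simply exits the gadget through one of the ports $y_0,x_4$, travels through the rest of $G$, and re-enters through one of $z_0,w_4$. Your proposed ``closed curve separating the outside'' argument does not apply, because the support path is not confined to the gadget. Similarly, even for a coherent pair (say both ends up), the support path need not be the short path $s_ix_iy_it_i$; it may \emph{escape}, going $s_ix_iy_{i-1}\cdots y_0$, leaving the gadget, re-entering at $x_4$, and returning $x_4y_3\cdots y_it_i$. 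In that case no single edge $x_iy_i$ is saturated and your ``blocks one chain'' picture breaks down.

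Because of this, your $2^3$ case analysis on which chain each $s_it_i$ blocks is built on a false premise. The paper's proof instead proceeds by explicitly allowing both incoherent pairs and coherent-but-escaping paths: it observes that each incoherent pair consumes two units on the $4$-edge cut $\{y_0x_1, z_0w_1, y_3x_4, z_3w_4\}$, so two incoherent pairs already block all external traffic; it then shows that two coherent pairs routed in the same direction, each with an escaping support path, must cross (a contradiction); and finally it rules out the mixed case (two coherent in opposite directions plus one incoherent) by a capacity argument on the same cut. Your remark about the chains ``alternating which is inner'' also does not match the figure; the two chains are simply parallel top and bottom rows. To repair your argument you would need to drop the coherence claim and carry out this kind of coherent/incoherent and direct/escaping case analysis.
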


\begin{proof}
  We denote by $P_i$ and $Q_i$ the paths $s_ix_iy_it_i$ and $s_iw_iz_it_i$ respectively.
  We denote by $f_i(e)$ the flow value on edge $e$ induced by the flow paths for $s_it_i$.
  
  From \Cref{lemma:double-terminal-gadget}, for each of the demands $\{s_i,t_i\}$ ($i \in \{1,2,3\}$), either $f_i(s_ix_i) = 1$ ($s_i$ is routed up) or $f_i(s_iw_i) = 1$ ($s_i$ is routed down).
  Similarly each $t_i$ is routed down or up.
  We say that $s_it_i$ is \emph{coherent} if $s_i$ and $t_i$ are routed in the same direction.
  Observe that if a support path for an external demand enters the gadget, it cannot use any edges incident with $s_i$ or $t_i$ for $i = 1, 2, 3$.
  Hence our goal is to that at least one of $P_{xy}$ or $Q_{xy}$ is not part of any support path for any external demand.

  For any $r \in \{1, 2, 3\}$, if $s_r t_r$ is coherent (routing up, say), then any support $s_r t_r$-path is either exactly $P_r$ or it starts with $s_rx_ry_{r-1}\ldots{}y_0$ and ends with $x_4y_3\ldots{}y_rt_r$.
  In the latter case we say the path \emph{escapes}.
  If any coherent $s_r t_r$ does not have any escaping support paths (i.e., it routes entirely on $P_r$), then it is impossible for any external demand to route using $P_{xy}$ if $s_r t_r$ routes up, and impossible to use $P_{wz}$ if $s_r t_r$ routes down.
  In these cases, the lemma holds with $Q = P_{wz}$ in the former case and $Q = P_{xy}$ in the latter.
  So we may assume each coherent $s_r t_r$ has an escaping support path.

  For any incoherent pair, all the support paths leave and re-enter the gadget, and therefore use up 2 units of capacity on the cut $\{y_0x_1,z_0w_1,y_3x_4,z_3w_4\}$.
  Hence if there are 2 incoherent pairs, then no external demand may cross through the gadget, in which case the lemma holds.
  We may thus assume that there are $i < j$ such that $s_it_i$ and $s_jt_j$ are coherent. Let $k \in \{1,2,3\} \setminus \{i,j\}$.

  {\bf Case 1:}
  Suppose both $s_it_i$ and $s_j t_j$ are routed in the same direction, say up.
  Since both $s_i t_i$ and $s_j t_j$ have a support path that escapes, these paths cross, a contradiction.
  

  {\bf Case 2:}
  The remaining case is when both $s_it_i$ and $s_jt_j$ are routed in distinct directions, and $s_kt_k$ is not coherent (otherwise it would be in the same direction as $s_it_i$ or $s_jt_j$ and the previous case applies).
  Without loss of generality, we may assume that $s_it_i$ is routed up and $i < k$.
  To avoid crossing the escaping path for $s_i t_i$ at $y_i$, every support path for $s_k t_k$ must use $y_3 x_4$.
  However, this does not leave enough capacity for the escaping path for $s_i t_i$ that also uses this edge, a contradiction.
\end{proof}

By routing each demand up, or each demand down, one of the two paths $P_{xy}$, $P_{wz}$ is free to be used by support paths from external demands.




Thus, the gadget from \Cref{fig:coupled-arcs} can be used to replace two edges $e'$ and $e''$, with the effect of forbidding the use of both edges in the fractional uncrossed flow: it adds the constraint that either $x(e') = 0$ or $x(e'') = 0$.

\begin{lemma}
  [Clause Gadget, Figure~\ref{fig:clause-gadget}]
  \label{lemma:clause-gadget}
  The clause gadget (Figure~\ref{fig:clause-gadget}) has a solution even if we remove two of the three edges $e', e'', e'''$. It has no solution if we remove all three edges $e', e'', e'''$.
\end{lemma}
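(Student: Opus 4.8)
The plan is to prove the two assertions separately: the positive one by exhibiting an explicit routing, the negative one by a planar cut/region argument built on the structural lemmas already established for the sub-gadgets.

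For the positive assertion, by the three-fold symmetry of the gadget in Figure~\ref{fig:clause-gadget} it suffices to treat the case where $e''$ and $e'''$ are deleted, leaving only $e'$. First I would fix the planar embedding of the figure and take the ``canonical'' routing: route the internal clause demand(s) through the surviving port $e'$, and route every demand belonging to a sub-component (the Terminal Gadgets and Coupled Arcs Gadgets appearing in the construction) in one of the canonical ways whose qualitative shape is pinned down by \Cref{lemma:double-terminal-gadget} and \Cref{lemma:coupled-arcs} (each Terminal Gadget routed coherently, each Coupled Arcs Gadget leaving the arc adjacent to $e'$ free). Then I would verify that this whole family of support paths is pairwise uncrossed. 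The point is that the sub-gadget lemmas already fix the shape of every support path that enters a sub-gadget, so the check reduces to a finite, face-by-face inspection of how the path through $e'$ weaves around those fixed sub-gadget paths; and since in these unit-capacity, leaf-terminal instances any uncrossed trail-solution is automatically a path-solution (as observed at the start of \Cref{sec:npc}), it is enough to produce the routing as walks or trails.

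For the negative assertion, after deleting all three of $e', e'', e'''$ I would exhibit one demand $d^\star$ of the gadget (its ``clause demand'') that is unroutable. Concretely, in any solution the support paths of the remaining demands, in the shapes forced by \Cref{lemma:double-terminal-gadget} and \Cref{lemma:coupled-arcs}, together with their demand edges, bound a closed region $R$ of the plane such that one endpoint of $d^\star$ lies in the interior of $R$, the other lies outside $R$, and the only gadget edges on the frontier of $R$ are $e', e'', e'''$. Hence every $d^\star$-path that does not cross one of those forced support paths must use one of $e', e'', e'''$; with all three deleted there is no legal $d^\star$-path, so no solution exists. Making this precise amounts to chasing the case distinctions of the sub-gadget lemmas to confirm that the frontier of $R$ is exactly $\{e',e'',e'''\}$ in each canonical configuration, and to ruling out ``escaping'' $d^\star$-paths that leave $R$ through a sub-gadget and re-enter — this last point is handled exactly as the escaping-path analysis in the proof of \Cref{lemma:coupled-arcs}, since such detours consume capacity on a unit-valued cut of $R$'s frontier.

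I expect the main obstacle to be the non-crossing verification in the positive direction: a combinatorial routing exists easily, but one must choose the routing of $d^\star$ and of all sub-gadget demands so that the entire family is \emph{simultaneously} uncrossed, whereas the sub-gadget lemmas only control each sub-gadget locally. I would manage this by fixing a concrete default routing analogous to the coloured paths drawn in Figures~\ref{fig:double-diamond}, \ref{fig:double-terminal-gadget} and \ref{fig:coupled-arcs}, checking non-crossing one maximal shared subpath at a time, and using the symmetry of the gadget to keep the number of cases small.
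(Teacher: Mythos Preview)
Your plan dramatically overcomplicates what the paper dispatches in one line: once \Cref{lemma:coupled-arcs} is available, the clause gadget is just a tiny schematic graph with a single demand $s_Ct_C$ and the constraint that at most one of the two ``blue-marked'' edges $s_Cx_1$, $x_2t_C$ may be used.  Enumerating the four simple $s_C$--$t_C$ paths in that schematic ($s_Cu_1v_1x_1x_2u_2v_2t_C$, $s_Cu_3v_3t_C$, $s_Cx_1x_2u_2v_2t_C$, $s_Cu_1v_1x_1x_2t_C$) immediately gives both assertions.  Two concrete problems with your version: first, the gadget has no three-fold symmetry.  The edges $e'$ and $e''$ lie symmetrically on the upper chain, but $e'''$ lies on the independent lower chain $s_Cu_3v_3t_C$; when only $e'''$ survives the routing uses neither coupled arc, whereas when only $e'$ (or $e''$) survives it must use one of them, so you genuinely have two cases, not one.

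Second, the topological region argument you propose for the negative assertion does not go through.  The internal support paths of the Coupled Arcs Gadget (the $s_it_i$ paths and their Terminal-Gadget sub-demands) are localized inside that gadget; together with their demand edges they do not bound any region separating $s_C$ from $t_C$, and certainly not one whose frontier consists of exactly $\{e',e'',e'''\}$.  Indeed, after deleting $e',e'',e'''$ the vertices $s_C$ and $t_C$ are still connected via $s_Cx_1x_2t_C$, so no separating region exists.  What \Cref{lemma:coupled-arcs} actually supplies is a \emph{capacity} conclusion---external support paths may traverse at most one of the two coupled arcs---and the negative assertion follows because the sole remaining $s_C$--$t_C$ path uses both.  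Replacing your region argument by this direct path enumeration is both correct and far shorter.
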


\begin{figure}
    \centering
    \begin{tikzpicture}[x=0.5cm,y=0.5cm]
        \foreach \n/\x/\y in {u1/-2/4,u2/4/4,u3/1/-2,v1/0/4,v2/6/4,v3/3/-2,x1/1.5/2,x2/2.5/2} {
          \node[stvertex] (\n) at (\x,\y) {};
        }
        \node[stterminal] (s) at (0,0) {};
        \node[stterminal] (t) at (4,0) {};
        \foreach \u/\v in {s/u1,v1/x1,x2/u2,v2/t,t/v3,u3/s,s/x1,x2/t,x1/x2} {
           \draw (\u) -- (\v);
        }
        \foreach \u/\v/\n in {u1/v1/e',u2/v2/e'',u3/v3/e'''} {
           \draw (\u) -- (\v) node[midway,above] {$\n$};
        }
        \draw[dotted] (s) -- (t);
        \draw[blue] (0.5,0.9) -- (3.5,0.9);
        \draw[blue] (0.7,1.1) -- (3.3,1.1);
        \draw (s) node[left] {$s_C$};
        \draw (t) node[right] {$t_C$};
    \end{tikzpicture}
    \caption{The Clause Gadget for a clause $C$. The two edges marked with a double blue line should be replaced by the gadget from \Cref{fig:coupled-arcs}, allowing only one of these two edges to be used by the flow.}
    \label{fig:clause-gadget}
\end{figure}

\begin{proof}
    It easily follows from \Cref{lemma:coupled-arcs} by a simple case analysis, as only one of the two edges crossed by the double blue line can be used in routing the demand edge.
\end{proof}

\subsection{Hardness of Finding Uncrossed Fractional Flows}

\begin{theorem}
    \label{thm:npc-final}
    The problem {\scshape Fractional Uncrossed Flow} is NP-complete.
\end{theorem}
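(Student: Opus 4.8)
The plan is to reduce from a suitable NP-hard satisfiability problem --- most naturally \textsc{Planar 3-SAT}, or rather a planar variant of \textsc{Not-All-Equal 3-SAT} or monotone planar 3-SAT --- so that the supply/demand graph we construct can be embedded in the plane. Membership in NP was already established in \Cref{sec:poly-size}, so only NP-hardness remains. The overall strategy is to encode a planar 3-SAT instance $\Phi$ as an uncrossed-flow instance $(G_\Phi, H_\Phi, u = \vec 1, d = \vec 1)$ such that $\Phi$ is satisfiable if and only if $(G_\Phi,H_\Phi)$ admits a feasible uncrossed (fractional) flow. By \Cref{thm:walk-to-trail} and the degree-$\le 3$, leaf-terminal structure of the construction, walk-, trail-, and path-solutions coincide, so it suffices to reason about path-solutions throughout.

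The building blocks are the three gadgets already proved: the Terminal Gadget (\Cref{lemma:double-terminal-gadget}), whose effect is to force a commodity's flow to leave its terminal entirely through one of two designated arcs (a binary decision --- this will encode the truth value of a variable); the Coupled Arcs Gadget (\Cref{lemma:coupled-arcs}), which simulates a pair of edges $e', e''$ with the extra constraint $x(e') = 0 \lor x(e'') = 0$ (at most one usable by external flow), used to propagate a variable's value to all of its occurrences with consistent polarity; and the Clause Gadget (\Cref{lemma:clause-gadget}), which has a feasible routing of its own demand $s_C t_C$ iff at least one of its three ``literal edges'' $e', e'', e'''$ is available. First I would lay out, for each variable $x_i$, a Terminal Gadget (or a chain of them) whose two output arcs correspond to the two truth values; the chosen value determines which of two parallel ``wire'' paths is saturated by the variable's own demand, leaving the other wire free for clause gadgets to route through. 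For each occurrence of $x_i$ in a clause, a branch of the wire carries the literal's availability to the corresponding literal-edge slot of that clause's gadget; Coupled Arcs Gadgets enforce that a positive occurrence and a negative occurrence of $x_i$ cannot simultaneously be ``free.'' A clause $C = (\ell_1 \lor \ell_2 \lor \ell_3)$ then routes $s_C t_C$ iff some $\ell_j$ is set true, exactly matching \Cref{lemma:clause-gadget}. Planarity of $G_\Phi + H_\Phi$ follows from the planarity of $\Phi$ together with the fact that each gadget is itself planar with its interface nodes on its outer face, so the gadgets can be stitched along the planar embedding of the variable--clause incidence graph. The forward direction (satisfying assignment $\Rightarrow$ uncrossed flow) is by explicitly exhibiting the integral routing dictated by the assignment; the reverse direction chains the three gadget lemmas: \Cref{lemma:double-terminal-gadget} extracts a well-defined Boolean value per variable from any feasible solution, \Cref{lemma:coupled-arcs} guarantees polarity-consistency of that value across occurrences, and \Cref{lemma:clause-gadget} forces each clause to be satisfied, yielding a satisfying assignment for $\Phi$.

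The main obstacle I anticipate is \emph{wiring the gadgets together while preserving both planarity and uncrossability globally}, rather than gadget-by-gadget. Each lemma is stated for a gadget sitting as an induced subgraph with a clean interface, but in the full instance many variable wires and clause gadgets share the plane, and one must verify that (i) the ``free'' wire of each variable really can be used by exactly the clause gadgets of the right polarity without creating a forbidden crossing with some other commodity's flow paths, and (ii) no unintended short-cut or detour through a neighbouring gadget lets a clause be satisfied without a genuine literal being true, or lets the terminal/coupled-arcs constraints be evaded. Concretely, the delicate step is arguing that a support path for the clause demand $s_C t_C$ entering the variable region through the designated literal edge cannot ``wander'' into the rest of the variable gadget --- this is what clauses (iii) of \Cref{lemma:double-terminal-gadget} and the ``external demand'' hypotheses of \Cref{lemma:coupled-arcs,lemma:clause-gadget} are for, but assembling these local guarantees into a global soundness argument requires care in how the interface nodes of adjacent gadgets are identified and in checking that degree-$3$ (hence only node-crossings matter, which are impossible) is maintained at every junction. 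Once the assembly is set up correctly, the equivalence $\Phi \text{ satisfiable} \iff (G_\Phi, H_\Phi) \text{ has an uncrossed flow}$ should follow by combining the three gadget lemmas essentially verbatim.
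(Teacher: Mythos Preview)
Your high-level plan matches the paper's: reduce from \textsc{Planar 3-SAT}, exploit the degree-$\le 3$ leaf-terminal structure so that walk/trail/path-solutions coincide, and invoke \Cref{sec:poly-size} for NP membership. Where you diverge is in the \emph{roles} you assign to the gadgets. You propose making the Terminal Gadget the variable gadget and using Coupled Arcs Gadgets to enforce polarity consistency across occurrences of a variable. The paper does something simpler: the variable gadget is just a \emph{3-cycle}, with a demand $s_x t_x$ between two of its vertices so that the two arcs $P_x$ and $N_x$ of the triangle partition the (at most three) clause occurrences of $x$ by polarity. The Coupled Arcs Gadget is then used once per variable--clause \emph{incidence}, coupling the triangle edge $e_{x,C}$ with the literal edge $e_{C,x}$ of the Clause Gadget; the Terminal Gadget never appears directly in the top-level construction, only as a sub-gadget inside each Coupled Arcs Gadget. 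This buys two things over your sketch: polarity consistency is automatic (routing $s_x t_x$ on $P_x$ leaves exactly the $N_x$ edges free, and vice versa), and the backward direction needs no integrality of the variable routing --- one simply sets $\phi(x)=\mathit{false}$ iff $f(P_x)>0$ and argues that a clause using literal edge $e_{C,x}$ forces $e_{x,C}$ to be blocked for $s_x t_x$, hence the corresponding literal is true.

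The global-wiring obstacle you worry about is largely dissolved by this design: each variable gadget has bounded size, and each variable--clause incidence is realised by a single Coupled Arcs Gadget placed along the corresponding edge of the planar incidence graph, so there are no long wires to route and no fan-out to manage. Your proposal could probably be made to work, but it leaves genuinely unspecified how a single Terminal Gadget's binary output fans out to several clauses while remaining planar and uncrossed; the paper's triangle-plus-coupling construction sidesteps that issue entirely.
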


\begin{proof}
  We reduce from {\scshape Planar 3-Satisfiability}, in which the given 3-Sat instance has a planar variable-clause incidence graph.
  We may assume that each variable appears at most three times, and furthermore that it is not the same polarity in all of its incident clauses.
  From the variable-clause incidence graph $G$, we construct an uncrossed multiflow instance $(G', H')$ as follows.

  For each variable $x$, we replace the node for $x$ in $G$ with a 3-cycle, where the edges are labelled with the clauses incident with $x$ as $e_{x, C}$.
  The clockwise order of the labels corresponds to the clockwise ordering of the clauses about $x$ in $G$.
  Now, we place a demand $s_x t_x$ between two nodes of the 3-cycle, so that the two $s_x t_x$ paths on the cycle correspond to the clauses where $x$ appears positively and negatively.
  Call these two paths $P_x$ and $N_x$ respectively.

  Now for each clause $C$, we replace the node for $C$ in $G$ with a clause gadget from \Cref{fig:clause-gadget}, where the edges $e'$, $e''$ and $e'''$ of the figure are labelled by the three variables occurring in $C$ as $e_{C, x}$.
  The clockwise order of the labels corresponds to the clockwise ordering of the variables about $C$ in $G$.

  Now, consider each incident variable-clause pair $(x, C)$.
  We replace the edges $e_{x, C}$ (from the 3-cycle of $x$) and $e_{C, x}$ (from the clause gadget of $C$) with a copy of the Coupled Arcs Gadget (\Cref{fig:coupled-arcs}).
  By installing the Coupled Arcs Gadget in close proximity to the edges of the original graph $G$ (recalling that the edge-labellings also mirror the original graph), we obtain a planar embedding of $G'$.
  We will abuse notation and still use $e_{x, C}$ and $e_{C, x}$ for the subdivided paths in the Coupled Arcs Gadget.
  By the properties of the Coupled Arcs Gadget, this has the effect that flow for demands outside of the Coupled Arcs Gadget can only use one of the edges $e_{x, C}$ or $e_{C, x}$.
  Additionally, support paths cannot ``hop between'' the two coupled edges, so the only two possible support paths for $s_{x} t_{x}$ are $P_x$ and $N_x$.
  Similarly, any support path for the demand $s_C t_C$ of a clause gadget stays entirely inside the clause gadget.

%

  If the satisfiability instance is satisfied by a truth assignment $\phi$, then we can find an uncrossed (integral) routing.
  First, for each variable $x$, if $\phi(x) = \mathit{true}$, we set $f(N_x) = 1$, otherwise set $f(P_x) = 1$.
  This clearly satisfies all of the variable demands $s_x t_x$.
  We now extend this routing to satisfy the clause demands and the internal demands of the Coupled Arcs Gadgets.
  
  Now, consider a clause $C$ and its gadget.
  Some literal for a variable $x$ in $C$ is made true by the assignment.
  Without loss of generality, assume $x$ occurs positively in $C$ and $\phi(x) = \mathit{True}$.
  In our partial routing, $e_{x, C}$ is in $P_x$, and is not used by the routing for $s_x t_x$.
  Therefore we can route the demands from the Coupled Arcs Gadget using $e_{x, C}$.
  This leaves $e_{C, x}$ available to route the demand $s_C t_C$ and still satisfy the Coupled Arcs Gadget strictly inside the clause gadget (as in Lemma \ref{lemma:clause-gadget}).
  The other two Coupled Arcs Gadgets incident with the clause gadget for $C$ can be routed inside the clause gadget using the other variable-labelled edges.
  This satisfies all of the demands that appear in clause gadgets and Coupled Arcs Gadgets.
  In this way, we find an integral uncrossed routing.

  Conversely, suppose there is a fractional uncrossed flow $f$.
  For each variable $x$, either $f(P_x) > 0$ or $f(N_x) = 1$.
  Set $\phi(x) = \mathit{false}$ in the former case, and $\mathit{true}$ otherwise.
  We claim that $\phi$ satisfies all the clauses.
  Consider a clause $C$ with variables $x_1, x_2, x_3$ (with some polarity).
  One of the edges $e_{C, x_1}, e_{C, x_2}, e_{C, x_3}$ must be used by a support path for the demand $s_C t_C$.
  Say $e_{C, x_i}$ is used.
  By the properties of the Coupled Arcs Gadget, this implies that the corresponding edge $e_{x_i, C}$ in the 3-cycle for $x_i$ is saturated by the internal demands of the Coupled Arcs Gadget and is thus not available for routing $s_x t_x$.
  If $x_i$ occurs positively in $C$, this means that $f(P_{x_i}) = 0$, and so $\phi(x_i) = \mathit{true}$, satisfying the clause.
  On the hand, if $x_i$ occurs negatively in $C$, this means that $f(N_{x_i}) = 0$ and $f(P_{x_i}) = 1$, and so $\phi(x_i) = \mathit{false}$, satisfying the clause.
  As this holds for any clause, $\phi$ is a satisfying assignment, concluding the proof of reduction.
\end{proof}

\subsection{Exists-Embedding Uncrossed Multiflow}
\begin{theorem}
    It is NP-hard to determine whether there is an embedding of $G$ in which $(G, H)$ has a fractional uncrossed multiflow.
\end{theorem}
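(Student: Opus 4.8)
The plan is to reduce from the fixed-embedding problem, namely from the construction underlying \Cref{thm:npc-final}: from a planar $3$-SAT instance it builds a planar supply graph $G'$ with a prescribed embedding $\pi'$ and a demand graph $H'$ (all capacities and demands unit) such that $(G',H',\mathbf 1,\mathbf 1)$ embedded as $\pi'$ admits an uncrossed fractional multiflow if and only if the $3$-SAT instance is satisfiable. The only thing separating this from the exists-embedding statement is that \emph{some other} embedding of $G'$ might admit an uncrossed flow even when $\pi'$ does not, so the plan is to rigidify the embedding.

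First I would record two elementary facts. (i) By \Cref{def:crossing-subpath}, whether two support paths cross is decided entirely by the rotation system of $G$: concretely, by the cyclic orders, at the two ends of the shared subpath $Z$, of the (at most) four non-shared edges, together with the combinatorics of $Z$ itself. Hence it is unaffected by which face is chosen as the outer face and by a global reflection of the embedding. (ii) Adding to $G'$ new edges or vertices all of capacity $0$ changes neither the set of feasible multiflows (such edges carry no flow, so every support walk still lives in $G'$, and a terminal that was a leaf of $G'$ still has a unique usable incident edge, so walk-, trail-, and path-solutions stay equivalent as in \Cref{sec:npc}) nor, by (i), whether a given multiflow is uncrossed in any embedding that restricts to $\pi'$ on $G'$, since inserting a capacity-$0$ edge into the rotation at a vertex leaves the cyclic order of the old edges unchanged, hence does not affect the alternation condition.

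Next I would augment $(G',\pi')$, in polynomial time, to a $3$-connected planar graph $G''$ with an embedding $\pi''$ extending $\pi'$, adding only capacity-$0$ material: first add capacity-$0$ edges to make the graph $2$-connected, then place a capacity-$0$ vertex in the interior of each face joined to all vertices on that face's boundary (no multi-edges arise, since every new edge meets a new vertex); set $H'' = H'$. Since $G''$ is large, $|V(G'')|\ge 4$, so by Whitney's theorem $G''$ has a unique embedding in the sphere up to reflection; combined with fact (i), whether $(G'',H'')$ admits an uncrossed fractional multiflow is independent of the chosen embedding of $G''$. By fact (ii), in the embedding $\pi''$ the instance $(G'',H'')$ admits an uncrossed fractional multiflow iff $(G',H',\mathbf 1,\mathbf 1)$ does in $\pi'$, i.e. iff the $3$-SAT instance is satisfiable. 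Putting these together, $(G'',H'')$ admits an uncrossed fractional multiflow in \emph{some} embedding of $G''$ iff the $3$-SAT instance is satisfiable, which yields the theorem. (Membership in NP also follows: a certificate is a polynomial-size combinatorial embedding together with a polynomial-size uncrossed flow, the latter existing by the argument of \Cref{sec:poly-size} applied to that embedding.)

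The step I expect to be the main obstacle is verifying that the augmentation is genuinely harmless for the reduction in both directions. For the ``satisfiable $\Rightarrow$ uncrossed flow'' direction this is immediate: the integral routing built in the proof of \Cref{thm:npc-final} uses only edges of $G'$, is left untouched, and stays uncrossed because the new edges are merely additional (unused) curves in the planar drawing. For the converse one restricts an uncrossed flow in $G''$ back to $G'$ and replays the gadget analysis verbatim; the point to check is that the rotation of $\pi''$ at each vertex of $G'$ restricts to the rotation of $\pi'$, so that every gadget lemma (Double Diamond, Terminal Gadget, Coupled Arcs, Clause Gadget) still applies unchanged. A secondary, purely graph-theoretic point is the polynomial-time augmentation of a planar graph to a $3$-connected simple planar graph, which is standard.
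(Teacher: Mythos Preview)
Your proposal is correct and follows the same high-level route as the paper: reduce from the fixed-embedding problem of \Cref{thm:npc-final}, augment the supply graph to a (subdivision of a) $3$-connected planar graph so that Whitney's theorem pins down the embedding, and verify that the augmentation is inert for the routing. Your facts (i) and (ii) are exactly the right observations, and your stellation-based augmentation does yield $3$-connectivity.

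The one genuine difference is in \emph{how} the added material is made inert. You use capacity-$0$ edges and vertices, which is the most direct device: support walks simply cannot touch them. The paper instead keeps all capacities and demands equal to $1$: for each new edge it subdivides it into a path $u_1 v_1 v_2 u_2$ and adds two new unit demands $u_1v_1$ and $u_2v_2$, then argues that in any uncrossed solution these two demands together saturate both of $u_1v_1$ and $u_2v_2$, so the old demands are again confined to the old graph. Your mechanism is shorter to justify; the paper's mechanism buys the additional conclusion that the exists-embedding problem remains hard even with unit capacities and unit demands, matching the form of \Cref{thm:npc}. Since the theorem you were asked to prove does not assert the unit restriction, your version suffices.
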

\begin{proof}
    We reduce from the fixed-embedding version of fractional uncrossed multiflow.
    Start with a fixed embedding of $G$.
    Consider the following operation: add an edge $e = (u_1, u_2)$ between two nodes $u_1, u_2$ on the same face, then subdivide it into a path $P = u_1, v_1, v_2, u_2$, and add demand edges $u_1 v_1, u_2 v_2$.
    The new supply and demand edges have unit weight.

    \begin{claim}
        $(G, H)$ has an uncrossed solution (in the original embedding) if and only if the new instance has an uncrossed solution.
    \end{claim}
    \begin{subproof}
        It is straightforward to verify that if $(G, H)$ has an uncrossed solution, then so does the new instance.
        We show the converse, that if the new instance has an uncrossed solution, so too does $(G, H)$.
        Consider an arbitrary uncrossed solution $f$ for the new instance.
        It suffices to show that there is an uncrossed solution where the demand $u_i v_i$ is routed entirely on the supply edge $u_i v_i$ for $i = 1, 2$.

        Observe that, in $f$, all flow for $u_1 v_1$ that is not routed through $u_1 v_1$ must be routed through $u_2 v_2$.
        So, $f^{u_1 v_1}(u_1 v_1) + f^{u_1 v_1}(u_2 v_2) = 1$.
        Similarly, $f^{u_2 v_2}(u_2 v_2) + f^{u_2 v_2}(u_1 v_1) = 1$.
        So, between the two new demands, all capacity on the new supply edges $u_1 v_1, u_2 v_2$ is used.
        Therefore, the demands of the original demand graph $H$ are routed (uncrossed) inside the original supply graph $G$.
        So $(G, H)$ has an uncrossed solution.


    \end{subproof}

    We then repeat this operation until the supply graph is a (subdivision of) a 3-connected graph, which has a unique embedding.
    The claim implies that the original instance $(G, H)$ has an uncrossed solution (in its given embedding) if and only if the new instance has an uncrossed solution in its embedding, which by the uniqueness of the embedding is equivalent to the new instance having an uncrossed solution in some embedding.
\end{proof}

\section{Colouring Uncrossed String Graphs}
\label{sec:colouring-bound}

In this section, we prove the following result.

\begin{theorem}
\label{thm:notquitechibounded}
Let $U$ be an uncrossed string graph arising from a realization with at most $k$ paths on any edge.
Then $\chi(U) = O( k^2 \log |E(U)|)$.
\end{theorem}

The following is the key lemma.

\begin{lemma}
\label{lem:unionofplanar}
    Let $U$ be an uncrossed string graph with a load $k$ realization $(G, \mathcal P)$.
    Then $U$ is the union of $O(k^2 \log |E(U)|)$ planar graphs.
\end{lemma}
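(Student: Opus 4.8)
The plan is to build the decomposition by a recursive halving of the edge set of $G$ that is used by the paths. The rough idea: if every edge of $G$ carried at most one path, the resulting intersection graph would be planar (it is essentially a minor of $G$ after a disk-expansion, since non-crossing, internally disjoint curves can be drawn disjointly). The obstruction to planarity is that up to $k$ paths may share an edge. So I would like to split $\mathcal P$ into $O(\log|E(U)|)$ "levels", each of which contributes only $O(k^2)$ planar graphs.

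First I would set up the recursion on a sub-collection $\mathcal Q \subseteq \mathcal P$. Consider the multiset of edges $\bigcup_{P\in\mathcal Q} E(P)$, and find a single edge $e$ (or more carefully, process edges one at a time) — actually the cleaner approach is: for the current sub-collection $\mathcal Q$, look at the bipartite-like structure and note that the edges of $U$ induced on $\mathcal Q$ that come from a \emph{single} edge $e$ of $G$ form a clique of size at most $k$; and across a fixed path the intersections it can have at distinct $G$-edges are controlled by the non-crossing condition. The key combinatorial fact I would isolate is a "laminar/interval" structure: along any fixed edge $e\in E(G)$, the linear order $\le_e$ on $\mathcal P_e$ given by the parallelization, combined with the non-crossing property, means that the way paths enter and leave $e$ is nested, so two paths sharing consecutive positions in $\le_e$ is what generates the $U$-edges we must cover. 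Using this I would argue that $U$ restricted to a "load $\le 2$" sub-realization is planar — this is exactly \Cref{thm:load-2-planar}, which is stated earlier and which I would simply invoke.

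Given \Cref{thm:load-2-planar}, the strategy becomes: show that the edges of $U$ can be covered by $O(k^2\log|E(U)|)$ sub-realizations each of load at most $2$. I would do this by a divide-and-conquer on the $\le_e$ orders simultaneously. Partition $\mathcal P$ into two halves $\mathcal P_1,\mathcal P_2$ such that on every edge $e$, each half gets at most $\lceil |\mathcal P_e|/2\rceil$ of the paths through $e$ (this is possible by a global balanced partition — e.g.\ order all paths by a single global parameter, or more robustly, recurse so that depth is $O(\log k)$ cutting each local load in half each time, OR recurse on $O(\log|\mathcal P|)$ to directly handle the edge count). The edges of $U$ within $\mathcal P_1$ and within $\mathcal P_2$ are handled recursively; the "cross" edges (between $\mathcal P_1$ and $\mathcal P_2$) — here is where I would pair up paths: for each edge $e$, the cross-edges at $e$ form a bipartite graph between $\le k/2$ and $\le k/2$ paths, decomposable into $O(k)$ matchings; each matching, taken across all $e$ simultaneously, gives a sub-realization of load $\le 2$, hence planar by \Cref{thm:load-2-planar}. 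This contributes $O(k)$ planar graphs per level; with recursion depth $O(\log|E(U)|)$ (since $|\mathcal P|\le |E(U)|$ after discarding isolated vertices, and $|E(G)|$ can be assumed polynomially related), this gives $O(k\log|E(U)|)$ — to get the stated $O(k^2\log|E(U)|)$ I have slack, so even a cruder bound of $O(k)$ planar graphs per recursion step and $O(k\log|E(U)|)$ depth-type accounting suffices.

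The main obstacle I anticipate is making the "cross-edge matching" argument rigorous: I must verify that a matching of pairs $(P,Q)$ chosen across \emph{different} $G$-edges really yields a globally load-$\le 2$ realization and not just a locally balanced one — that is, a single path $P\in\mathcal P_1$ could be matched to different partners at different edges, inflating its count. To fix this I would instead, at each recursion level, first decompose the cross-interaction edge-colour-wise: assign to each path in $\mathcal P_2$ one of $O(k)$ colours so that at every $G$-edge the (at most $k/2$) paths of $\mathcal P_2$ through that edge get distinct colours (possible greedily since locally there are $\le k/2$ of them — actually this needs $\le k$ colours globally and a proper argument via the $\le_e$ orders being consistent, which is where the non-crossing / parallelization structure does real work). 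Then for each colour $c$, the sub-realization $\mathcal P_1 \cup \{$paths of colour $c\}$ has load at most $(k/2)+1$, not $2$ — so I'd actually recurse on \emph{both} sides down to load $2$, giving depth $O(\log k)$ and width $O(k)$ per split, i.e.\ $O(k^{\log 2}\cdot\text{poly})$... . The cleanest bookkeeping that lands exactly on $O(k^2\log|E(U)|)$: recurse $O(\log|E(U)|)$ times halving $|\mathcal P|$, at each of the $O(\log|E(U)|)$ levels handle the $O(|\text{levels of paths}|)$ bipartite cross-structure which decomposes into $O(k^2)$ planar graphs (a factor $k$ for colouring one side, a factor $k$ for the matchings), yielding $O(k^2\log|E(U)|)$ planar graphs in total. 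I would present the colouring-plus-matching step as the technical heart, citing \Cref{thm:load-2-planar} as the black box that converts "load $2$" into "planar".
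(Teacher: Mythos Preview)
Your plan has a genuine circularity. In your second attempt at the cross-edge step you ask to ``assign to each path in $\mathcal P_2$ one of $O(k)$ colours so that at every $G$-edge the (at most $k/2$) paths of $\mathcal P_2$ through that edge get distinct colours.'' But that is precisely a proper colouring of the node-intersection graph $U[\mathcal P_2]$ with $O(k)$ colours: two paths sharing a $G$-edge are adjacent in $U$, so requiring distinct colours at every $G$-edge is requiring a proper colouring of $U[\mathcal P_2]$. Bounding $\chi$ of a load-$(k/2)$ uncrossed string graph by $O(k)$ is exactly the Fox--Pach bound, which is stronger than (and in the paper is derived \emph{from}) the planar-decomposition lemma you are trying to prove. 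Your first attempt (per-edge matchings glued globally) you already correctly diagnose as broken, and the balanced-partition step (``each half gets at most $\lceil|\mathcal P_e|/2\rceil$ on every edge'') is a low-discrepancy statement that also needs justification you do not supply.

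The paper sidesteps all of this with a short probabilistic argument. Sample each path into $S\subseteq\mathcal P$ independently with probability $1/k$; let $H$ have an edge between $P,P'\in S$ whenever some $G$-edge $e$ has $S\cap\mathcal P_e=\{P,P'\}$ exactly. Then $H$ is planar (this is the load-$2$ observation), and any fixed $PP'\in E(U)$ lands in $H$ with probability at least $(1/k)^2(1-1/k)^{k-2}\ge 1/(ek^2)$. Drawing $C=2ek^2\log|E(U)|$ independent such subgraphs and taking a union bound over $E(U)$ gives a positive probability that every edge of $U$ is covered, hence $U$ is a union of $C$ planar graphs. The point is that random sampling produces the ``global load-$2$ sub-realizations'' directly, without any need for the balanced partition or the circular colouring step.
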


\begin{proof}
    We generate a random subgraph $H = (S, F)$ of $U$ as follows.
    First, we select random subset $S \subseteq \mathcal P$ by drawing each $P \in \mathcal P$ independently and uniformly at random with probability $q = \frac 1 k$.
    Now, for each $e \in E(G)$ such that $|S \cap \mathcal P_e| = 2$ exactly, we add to $F$ an edge between the two members of $S \cap \mathcal P_e$.
    Note that $H$ is planar.


    Let $P P' \in E(U)$, and consider the event that $P P' \in E(H)$.
    This event happens if there exists some $e$ with $P, P' \in \mathcal P_e$ where $P \in S$, $P' \in S$, and no other path in $\mathcal P_e$ is in $S$.
    Clearly there is at least one $e$ such that $P, P' \in \mathcal P_e$, and we lower bound the probability that $PP'$ is included due to this particular edge.
    Then,
    \begin{align*}
        \prob{P P' \in E(H)}
        \ge q \cdot q \cdot (1-q)^{k-2}
        \ge \frac 1 {k^2} \left(1 - \frac 1 k\right)^{k - 2} 
        \ge \frac 1 {e k^2},
    \end{align*}
    where the last inequality follows from the fact that $(1-1/x)^{x-2} \ge 1/e$ for all $x > 1$.

    We repeat this process $C = 2e k^2 \log |E(U)| $ times, independently, obtaining random subgraphs $H_1, \dotsc, H_C$ of $U$.
    For $PP' \in E(U)$, let $A_{PP'}$ be the event that for all $i \in [C]$, $PP' \notin E(H_i)$.
    Then,
    \begin{align*}
        \prob{A_{PP'}}
        \le \left(1 - \frac 1 {e k^2}\right)^{2 e k^2 \log |E(U)|}
        \le \left(\frac 1 e\right)^{2 \log |E(U)|}
        = |E(U)|^{-2},
    \end{align*}
    where the second inequality follows from the fact that $(1 - 1/x)^x \le 1/e$ for $x > 1$.

    Let $A$ be the event that there exists $PP' \in E(U)$ such that for all $i \in [C]$, $PP' \notin E(H_i)$.
    That is, there exists $PP' \in E(U)$ such that $A_{PP'}$ holds.
    By a basic union bound, $\prob{A} \le |E(U)| \cdot |E(U)|^{-2} < 1$.
    So there is an outcome in which $A$ does not hold.
    In this outcome, $U$ is covered by the graphs $H_1, \dotsc, H_C$.
    Hence $U$ is the sum of $C$ planar graphs.
\end{proof}

From this, \Cref{thm:notquitechibounded} follows easily.

\begin{proof}[Proof of \Cref{thm:notquitechibounded}]
    We show that $U$ is $O(k^2 \log |E(U)|)$-degenerate, from which the result follows.
    As the property assumed in the theorem statement is hereditary (i.e. closed under node deletion), it suffices to show that $U$ contains a node with at most $O(k^2 \log |E(U)|)$ neighbours.

    By \Cref{lem:unionofplanar}, $U$ is the sum of $O(k^2 \log |E(U)|)$ planar graphs.
    Each planar graph contributes at most $3 |V(U)|$ parallel edge classes, so it follows that $E(U)$ contains at most $O(k^2 |V(U)| \log |E(U)|)$ parallel edge classes.
    Hence it contains a node with at most $O(k^2 \log |E(U)|)$ neighbours.
\end{proof}

\end{document}